\newcommand\ci{\protect\mathpalette{\protect\independenT}{\perp}}
\def\independenT#1#2{\mathrel{\rlap{$#1#2$}\mkern2mu{#1#2}}}
\DeclareMathOperator{\Forbb}{forb}
\DeclareMathOperator{\pa}{pa}
\DeclareMathOperator{\de}{de}
\DeclareMathOperator{\Var}{\mathrm{Var}}
\DeclareMathOperator{\Cov}{\mathrm{Cov}}
\DeclareMathOperator{\E}{\mathrm{E}}
\DeclareMathOperator{\vecth}{\mathrm{vech}}
\DeclareMathOperator{\diag}{\mathrm{diag}}
\newcommand{\g}[1][G]{\mathcal{#1}}
\newcommand{\f}[2][X,Y]{\Forbb(#1,#2)}
\newcommand{\bbeta}{\boldsymbol{\beta}}
\newcommand{\bepsilon}{\boldsymbol{\epsilon}}
\newcommand{\bdelta}{\boldsymbol{\delta}}
\newcommand{\btheta}{\boldsymbol{\theta}}
\newcommand{\bxi}{\boldsymbol{\xi}}
\newcommand{\bX}{\mathbf{X}}
\newcommand{\bY}{\mathbf{Y}}
\newcommand{\bC}{\mathbf{C}}
\newcommand{\bz}{\mathbf{z}}
\newcommand{\bZ}{\mathbf{Z}}
\newcommand{\bSigma}{\boldsymbol{\Sigma}}
\newcommand{\bV}{\mathbf{V}}
\newcommand{\bE}{\mathbf{E}}
\newcommand{\br}{\mathbf{r}}
\newcommand{\bP}{\mathbf{P}}
\newcommand{\bLambda}{\boldsymbol{\Lambda}}
\newcommand{\bGamma}{\boldsymbol{\Gamma}}
\newcommand{\bDelta}{\boldsymbol{\Delta}}
\newcommand{\bD}{\mathbf{D}}
\newcommand{\bZsf}{\boldsymbol{\mathsf{Z}}}
\newcommand{\Xsf}{\mathsf{X}}
\newcommand{\bPi}{\boldsymbol{\Pi}}
\tikzset{
    directed/.style={-Latex,semithick},
    state/.style ={minimum width=0.5cm},
    point/.style = {circle, draw, inner sep=0.04cm,fill,node contents={}},
    bidirected/.style={Latex-Latex,dashed},
    el/.style = {inner sep=2pt, align=left, sloped}
}
\newtheoremstyle{break}
  {\topsep}{\topsep}%
  {\itshape}{}%
  {\bfseries}{}%
  {\newline}{}%
\newtheorem{Satz}{Satz}
\theoremstyle{plain}
\newtheorem{Lemma}[Satz]{Lemma}
\newtheorem{Corollary}[Satz]{Corollary}
\newtheorem{Theorem}[Satz]{Theorem}
\newtheorem{Proposition}[Satz]{Proposition}
\newtheoremstyle{breakdfn}
  {\topsep}{\topsep}%
  {\upshape}{}%
  {\bfseries}{}%
  {\newline}{}%
\theoremstyle{definition}
\newtheorem{Example}[Satz]{Example}
\newtheorem{Definition}[Satz]{Definition}
\theoremstyle{remark}
\newtheorem*{Remark}{Remark}
\definecolor{mygreen}{RGB}{0,110,51}
\title{A Robustness Test for Estimating Total Effects with Covariate Adjustment}
\author[1]{\href{mailto:zehao.su@sund.ku.dk?subject=Your UAI 2022 Paper}{Zehao Su}}
\author[2]{Leonard Henckel}
\affil[1]{%
    Section of Biostatistics\\
    Department of Public Health\\
    University of Copenhagen
}
\affil[2]{%
    Department of Mathematical Sciences\\
    University of Copenhagen
}
\begin{document}
\maketitle

\begin{abstract}
Suppose we want to estimate a total effect with covariate adjustment in a linear structural equation model. We have a causal graph to decide what covariates to adjust for, but are uncertain about the graph. Here, we propose a testing procedure, that exploits the fact that there are multiple valid adjustment sets for the target total effect in the causal graph, to perform a robustness check on the graph. If the test rejects, it is a strong indication that we should not rely on the graph. We discuss what mistakes in the graph our testing procedure can detect and which ones it cannot and develop two strategies on how to select a list of valid adjustment sets for the procedure. 
We also connect our result to the related econometrics literature on coefficient stability tests.
\end{abstract}

\section{Introduction}\label{sec:intro}

Suppose we are interested in estimating the total (causal) effect of a treatment $X$ on an outcome $Y$ from observational data. One popular approach to estimate such an effect is covariate adjustment, also known as adjusting for confounding. Deciding which covariates to adjust for is a difficult problem, but it can be answered precisely if we have knowledge of the underlying causal structure in the form of a graph \citep{pearl2009causality}. In particular, the class of covariate sets we may adjust for has been fully graphically characterised \citep{shpitser2010validity,perkovic16}. We refer to sets in this class as valid adjustment sets.

In some cases there is more than one valid adjustment set, which raises the question how we can exploit this. One approach is to try and select from the available valid adjustment sets the one that provides the most statistically efficient estimator \citep{kuroki2003covariate,rotnitzky2020efficient, witte2020efficient,henckel2019graphical}. 

Another natural approach is to use multiple valid adjustment sets and test whether they all in fact lead to estimators of the same quantity. Such a test would be a simple and targeted robustness check on the causal graph we are relying on. Here we say targeted in the sense that the test would only detect mistakes in the graph that are relevant to our goal of estimating the total effect of interest, which is easier than checking whether the entire graph is correct. 

In the econometrics literature, it is already common practice to estimate the total effect with multiple estimators and then check whether the estimates differ by a large margin \citep[e.g.][]{dikova2019investment,yigezu2021socio,schlegel2021innovation}. This approach is often called testing for coefficient stability \citep{walter2009variable} or simply called a robustness check \citep{lu2014robustness}.

Practitioners often verify coefficient stability in a heuristic manner, but there is also a theoretical literature on the topic, especially for instrumental variable estimators \citep{frank2013would,oster2019unobservable}. 

For covariate adjustment estimators, \cite{lu2014robustness} have proposed a formal test for coefficient stability. Their framework is not based on graphical models and therefore it is harder to decide which adjustment sets to use for their test.
They propose to fix what they call a core of covariates and then create additional adjustment sets by adding what they call non-core covariates to the core. Here, the status of being core or non-core depends on certain conditional independences. In the graphical framework it becomes clear that this approach is too restrictive as, for example, two valid adjustment sets may be disjoint. As a result their approach may consider too few sets which leads to a loss of power.

There exists a more general literature on validation tests for structural equation models. This literature, however, has focused on tests that either validate the entire model \citep{SEM1989,bollen1993confirmatory,spirtes2000causation} or rely on instrumental variables \citep{kirby2009}. Another related literature focuses on identifying valid adjustment sets by relying on an auxiliary variable, typically called an anchor, whose causal relationship to the treatment has to be known from domain knowledge \citep{entner2013data,gultchin2020differentiable,shah2021finding,cheng2022toward}.

In this paper, we adopt the framework of a linear structural equation model compatible with an unknown directed acyclic graph (DAG). We propose a targeted robustness test that given a pair $(X,Y)$ and a candidate DAG $\g$ tests whether the valid adjustment sets with respect to $(X,Y)$ in $\g$ lead to estimates of the same quantity. 

We first discuss, which mistakes in the candidate graph our robustness test has power for and how this depends on the valid adjustment sets we use for the test. We then propose a simple $\chi^2$-test, similar to the one proposed by \citet{lu2014robustness}, although it differs in that we do not require a fixed core of covariates. We show that in general the joint asymptotic covariance matrix of the estimators we wish to compare is degenerate and that we need to know its rank for the test. This problem was acknowledged but not addressed by \citet{lu2014robustness}. 

In response to the problem of the degenerate asymptotic covariance matrix, we propose two strategies. The first is to estimate the rank. This is a difficult statistical problem and may be unstable, especially in small samples. The upside of this approach, however, is that it allows us to use all valid adjustment sets for our test, which maximises its power. 

The second strategy carefully selects a subset of the valid adjustment sets in a way that ensures the following two properties are likely to hold. First, the asymptotic covariance matrix is not degenerate. Second, we do not lose power completely against mistakes in the graph we had power for when using all valid adjustment sets. These two strategies represent different trade-offs between the stability of our testing procedure and its power to detect mistakes in the candidate graph.

Finally, we investigate with a simulation study how well our testing procedure controls the type-I error rate and how much power it has in finite samples. We do so for both of the strategies we propose, in order to compare and contrast their respective advantages and disadvantages. We also illustrate our testing procedure on a real data problem. All proofs are given in the supplementary materials. An implementation of our testing procedure and the code for our simulation study are made available at \url{https://github.com/zehaosu/RoCA}.

\section{Preliminaries}

We consider a linear structural equation model compatible with a DAG, where nodes represent random variables and edges represent direct effects. We now provide the most important definitions. The remaining definitions are provided in Section~\ref{app:graph} of the supplementary materials.

\textit{Linear structural equation models.}
  Let $\mathcal{G}=(\bV,\bE)$ be a DAG. Then $\bV=(V_1, \dots, V_p)$ follows a \emph{linear structural equation model} compatible with $\mathcal{G}$ if for all $i=1,\dots,p$
  \[V_{i}\gets\sum_{V_{j}\in\pa(V_{i},\mathcal{G})}\alpha_{ij}V_{j}+\epsilon_{i},
  \]
  with edge coefficients $\alpha_{ij}$ and jointly independent errors $\epsilon_{i}$ with zero mean and finite variance. We do not assume that the errors are normally distributed.

\textit{Total effects.}
Consider a pair $(X,Y)$ of random variables. The \emph{total effect} of $X$ on $Y$ is the partial derivative of the expectation $\E(Y\mid do(X=x))$ with respect to $x$. This is the instantaneous change of the average of $Y$ in the world where $X$ is set to $x$ \citep{pearl2009causality}.
In a linear structural equation model, the partial derivative is a constant slope that does not depend on $x$. As a result, the total effect is simply a number $\tau_{yx}$.

\textit{Causal and forbidden nodes.}
Consider two nodes $X$ and $Y$ in a DAG \(\g=(\bV,\bE)\). The \emph{causal nodes} relative to \((X,Y)\) in $\g$, denoted \(\mathrm{cn}(X,Y,\g)\), are all nodes on directed paths from \(X\) to \(Y\), excluding \(X\). The descendants of $X$ in $\g$, denoted $\de(X,\g)$, are all nodes $V$ such that there exists a directed path from $X$ to $V$ in $\g$. The \emph{forbidden nodes} relative to $(X,Y)$ in $\g$, denoted $\mathrm{forb}(X,Y,\g)$, are all nodes that are descendants of causal nodes, including $X$. The \emph{non-forbidden nodes} relative to $(X,Y)$ in $\g$, denoted $\mathrm{nonforb}(X,Y,\g)$, are the nodes in $\mathbf{V}\setminus \mathrm{forb}(X,Y,\g)$.

\textit{Notation for regression coefficients.}
Consider random variables $X$ and $Y$, random vectors $\bZ_1, \dots, \bZ_k$ and the collection of adjustment sets \(\mathcal{Z}=\{\bZ_{1},\dots,\bZ_{k}\}\).
Let \(\beta_{yx.\bz_i}\) indicate the population level regression coefficient of \(X\) in the ordinary least squares regression of \(Y\) on \(X\) and \(\bZ_i\). Let \(\hat{\beta}_{yx.\bz_i}\) denote the corresponding estimator.
Let \(\bbeta_{yx.\mathcal{Z}}\) denote the stacked population regression coefficients \((\beta_{yx.\bz_{1}},\dots,\beta_{yx.\bz_{k}})^{\top}\) and \(\hat{\bbeta}_{yx.\mathcal{Z}}\) the corresponding estimator. Finally, let $\delta_{yz_i}=Y - \beta_{yz_i} Z$ be the population level residuals for the ordinary least squares regression of $Y$ on $Z_i$ and $r_{yz_i}$ be the corresponding vector of sample residuals.

\textit{Valid adjustment sets.}
Consider nodes $X$ and $Y$ in a DAG $\g=(\bV,\bE)$. A node set $\bZ$ is a \emph{valid adjustment set} relative to $(X,Y)$ in $\g$ if for all linear structural equation models compatible with $\g$, $\beta_{yx.\bz}=\tau_{yx}$. We say a valid adjustment set $\mathbf{Z}=\{Z_1,\dots,Z_k\}$ is \emph{minimal} if for all $i \in \{1,\dots,k\}$, $\mathbf{Z}\setminus Z_i$ is not a valid adjustment set. The class of valid adjustment sets has been fully characterised as follows. 

\textit{Adjustment criterion.} \citep{shpitser2010validity,perkovic16}
  A (possibly empty) set $\bZ$ is a valid adjustment set relative to $(X, Y)$ in $\g$ if and only if
  \begin{enumerate}
    \item $\bZ$ contains no node in \(\mathrm{forb}(X,Y,\g)\), and
    \item $\bZ$ blocks all paths between $X$ and $Y$ in $\mathcal{G}$ that are not directed from $X$ to $Y$.
  \end{enumerate}

\textit{d-separation.}
Consider three disjoint node sets $\bX,\bY$ and $\bZ$ in a DAG $\g=(\bV,\bE)$, such that $\mathbf{V}$ follows a linear structural equation model compatible with $\g$. We can read off from $\g$ whether $\bX$ is independent of $\bY$ given $\bZ$ with a graphical criterion called \emph{d-separation} \citep{pearl2009causality} which we define formally in the supplementary materials. We use the notation $\bX \perp_{\g} \bY\mid\bZ$ to denote that $\bX$ is d-separated from $\bY$ given $\bZ$ in $\g$. 

\section{A Targeted Robustness Test for Covariate Adjustment}

\subsection{The Null Hypothesis and its Properties}
\label{sec:setup}
Suppose we wish to estimate the total effect of a treatment $X$ on a response variable $Y$. Let $\g_0$ denote the unknown true underlying casual graph and suppose we have a candidate causal graph $\g$ that describes our understanding of the underlying causal structure but that we are not certain about. We would like to check whether the candidate graph is plausible, so we can rely on it to estimate $\tau_{yx}$ with some confidence. In order to do so, we use $\g$ to identify a collection of valid adjustment sets $\mathcal{Z}=\{\bZ_{1},\dots,\bZ_{k}\}$ with respect to $(X,Y)$ in $\g$. If $\g$ is correct each of these sets corresponds to a consistent estimator of $\tau_{yx}$, i.e.,
\begin{equation}
    \tau_{yx}=\beta_{yx.\bz_{1}}=\beta_{yx.\bz_{2}}=\cdots=\beta_{yx.\bz_{k}}.
    \label{eqn:overid}
\end{equation}
If $\bZ$ consists of more than one set, then equation \eqref{eqn:overid} imposes an over-identifying constraint on the total effect $\tau_{yx}$. We use this constraint to test the plausibility of the candidate graph $\g$. The more valid adjustment sets $\mathbf{Z}_i$ we use, the more mistakes in $\g$ the test can detect.

It is generally not possible to directly test the constraint from equation $\eqref{eqn:overid}$ with observational data because we do not know the true total effect $\tau_{yx}$. It is, however, possible to test the relaxed null hypothesis 
\[
H_{0}: \beta_{yx.\bz_{1}}=\beta_{yx.\bz_{2}}=\cdots=\beta_{yx.\bz_{k}}.
\]
Let $H_{0}^{*}$ denote the null hypothesis associated with equation (\ref{eqn:overid}). As $H_{0}$ holds whenever $H_{0}^{*}$ does, it follows that any test with type-I error rate control for testing $H_{0}$ also has type-I error rate control for testing $H_{0}^{*}$. In addition, any rejection of $H_{0}$ implies a rejection of $H^{*}_{0}$ and as a result of the candidate graph $\g$. It is therefore reasonable to test $H_{0}$ as a proxy for $H_{0}^{*}$.

There is an even more restrictive null hypothesis $H_{0}^{**}: \g = \g_0$. However, as we are interested in estimating one specific total effect, it is not necessary to validate the entire candidate graph \(\g\), and the distinction between $H_0^*$ and $H_{0}^{**}$ is irrelevant for the purposes of this paper.
There are, however, cases where $H_{0}$ holds but $H_{0}^{*}$ does not and in these cases any test for $H_{0}$ will have no power to reject $H_{0}^{*}$. This occurs whenever $H_{0}$ holds but for all $\bZ_i \in \mathcal{Z}$, $\beta_{yx.\mathbf{z}_i} \neq \tau_{yx}$. Whether this is the case depends on the choice of candidate sets $\mathcal{Z}$ and is more likely if $\mathcal{Z}$ contains few sets. In particular, this is impossible if $\mathcal{Z}$ contains even a single valid adjustment set from the true graph $\g_0$. In response, it is natural to use all available valid adjustment sets relative to $(X,Y)$ in $\g$ to maximise the number of sets in $\mathcal{Z}$. However, even for a moderately sized $\g$ the number of valid adjustment sets relative to the pair $(X,Y)$ can be very large, e.g., there are $72$ in the graph $\g_0$ and $96$ in the graph $\g_1$ from Figure \ref{fig:graph}.
This raises the question whether it is possible to select the collection $\mathcal{Z}$ in a way that minimises the risk of having no power for $H_{0}^{*}$, while simultaneously limiting its size.

\begin{figure}[t!]
\centering
    \begin{tikzpicture}[scale=1]
      \node[state] (X) at (0,0) {$X$};
      \node[state] (Y) at (2,0) {$Y$};
      \node[state] (A1) at (0.5,1) {$A_{1}$};
      \node[state] (A2) at (1.5,1) {$A_{2}$};
      \node[state] (B1) at (0.5,-1) {$B_{1}$};
      \node[state] (B2) at (1.5,-1) {$B_{2}$};
      \node[state] (V) at (-0.5,1) {$V$};
      \node[state] (D) at (-0.5,-1) {$D$};
      \node[state] (R) at (2.5,1) {$R$};
      \node[state] (F) at (2.5,-1) {$F$};
      \path (X) edge[directed] (Y)
            (X) edge[directed] (D)
            (Y) edge[directed] (F)
            (A1) edge[directed] (X)
            (A1) edge[directed] (A2)
            (A2) edge[directed] (Y)
            (B1) edge[directed] (X)
            (B2) edge[directed] (Y)
            (B2) edge[directed] (B1)
            (V) edge[directed] (X)
            (R) edge[directed] (Y);
    \node at (1, -1.75) {$\g_0$};
    \end{tikzpicture}
    \quad
    \begin{tikzpicture}[scale=1]
      \node[state] (X) at (0,0) {$X$};
      \node[state] (Y) at (2,0) {$Y$};
      \node[state] (A1) at (0.5,1) {$A_{1}$};
      \node[state] (A2) at (1.5,1) {$A_{2}$};
      \node[state] (B1) at (0.5,-1) {$B_{1}$};
      \node[state] (B2) at (1.5,-1) {$B_{2}$};
      \node[state] (V) at (-0.5,1) {$V$};
      \node[state] (D) at (-0.5,-1) {$D$};
      \node[state] (R) at (2.5,1) {$R$};
      \node[state] (F) at (2.5,-1) {$F$};
      \path (X) edge[directed] (Y)
            (X) edge[directed] (D)
            (Y) edge[directed] (F)
            (A1) edge[directed] (X)
            (A1) edge[directed] (A2)
            (A2) edge[directed] (Y)
            (B1) edge[directed] (X)
            (B2) edge[directed] (Y)
            (V) edge[directed] (X)
            (R) edge[directed] (Y);
    \node at (1, -1.75) {$\g_{1}$};
    \end{tikzpicture}
    \newline
    \begin{tikzpicture}[scale=1]
      \node[state] (X) at (0,0) {$X$};
      \node[state] (Y) at (2,0) {$Y$};
      \node[state] (A1) at (0.5,1) {$A_{1}$};
      \node[state] (A2) at (1.5,1) {$A_{2}$};
      \node[state] (B1) at (0.5,-1) {$B_{1}$};
      \node[state] (B2) at (1.5,-1) {$B_{2}$};
      \node[state] (V) at (-0.5,1) {$V$};
      \node[state] (D) at (-0.5,-1) {$D$};
      \node[state] (R) at (2.5,1) {$R$};
      \node[state] (F) at (2.5,-1) {$F$};
      \path (X) edge[directed] (Y)
            (X) edge[directed] (D)
            (Y) edge[directed] (F)
            (A1) edge[directed] (X)
            (A1) edge[directed] (A2)
            (A2) edge[directed] (Y)
            (B1) edge[directed] (X)
            (B2) edge[directed] (Y)
            (B1) edge[directed] (B2)
            (V) edge[directed] (X)
            (R) edge[directed] (Y);
    \node at (1, -1.75) {$\g_{2}$};
    \end{tikzpicture}
    \quad
    \begin{tikzpicture}[scale=1]
      \node[state] (X) at (0,0) {$X$};
      \node[state] (Y) at (2,0) {$Y$};
      \node[state] (A1) at (0.5,1) {$A_{1}$};
      \node[state] (A2) at (1.5,1) {$A_{2}$};
      \node[state] (B1) at (0.5,-1) {$B_{1}$};
      \node[state] (B2) at (1.5,-1) {$B_{2}$};
      \node[state] (V) at (-0.5,1) {$V$};
      \node[state] (D) at (-0.5,-1) {$D$};
      \node[state] (R) at (2.5,1) {$R$};
      \node[state] (F) at (2.5,-1) {$F$};
      \path (X) edge[directed] (Y)
            (X) edge[directed] (D)
            (X) edge[directed] (A1)
            (X) edge[directed] (B1)
            (Y) edge[directed] (F)
            (A1) edge[directed] (A2)
            (A2) edge[directed] (Y)
            (B2) edge[directed] (Y)
            (B1) edge[directed] (B2)
            (V) edge[directed] (X)
            (V) edge[directed,bend left=70] (Y)
            (R) edge[directed] (Y);
    \node at (1, -1.75) {$\g_{3}$};
    \end{tikzpicture}

\caption{Graphs used in Examples \ref{exp:graph}, \ref{exp:degen} and \ref{exp:ppplot}.}
    \label{fig:graph}
\end{figure}
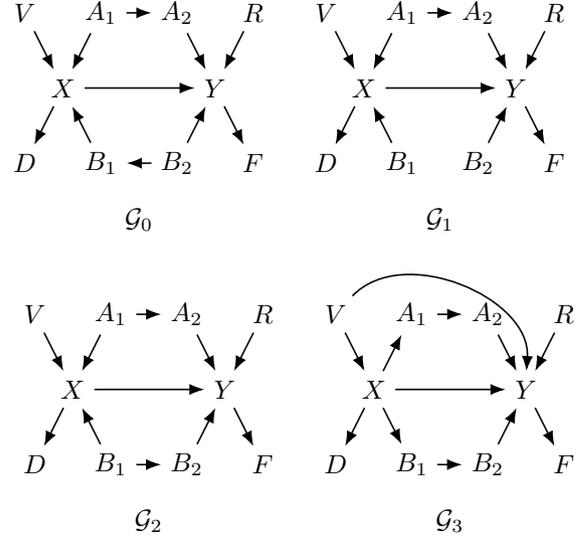
 
We now provide a necessary condition on $\mathcal{Z}$ under which the problematic case that $H_0$ holds but $H^*_0$ does not is rare, and as a result testing $H_0$ is a good proxy for testing $H^*_0$.

\begin{Theorem}
	Consider nodes $X$ and $Y$ in a DAG $\g_0=(\mathbf{V},\mathbf{E})$ such that $Y \in \de(X,\g_0)$. Let $\mathcal{Z}=\{\mathbf{Z}_1,\dots,\mathbf{Z}_k\}$ be a collection of node sets in $\g_0$. Suppose there exists a $\mathbf{Z}_i$, such that $\f{\g_0} \cap \mathbf{Z}_i = \emptyset$ and $\mathbf{Z}_i$ is not a valid adjustment set relative to $(X,Y)$ in $\g_0$. Further, suppose that $(\mathbf{V} \setminus \f{\g_0}) \subseteq \bigcup_{j=1}^k \mathbf{Z}_j$. 
    If we sample the edge coefficients and error variances for a linear structural equation model compatible with $\g_0$ from a distribution $P$ such that $P$ is absolutely continuous with respect to the Lebesgue measure, then $P$-almost surely there exists a $\mathbf{Z}_j$ such that $\beta_{yx.\mathbf{z}_i} \neq \beta_{yx.\mathbf{z}_j}.$
\label{thm:existence}
\end{Theorem}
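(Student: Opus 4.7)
The strategy is to convert the almost-sure conclusion into a rational-function identity check using absolute continuity, and then to pinpoint graphically a specific $\mathbf{Z}_{j^{\ast}}$ that cannot match $\beta_{yx.\mathbf{z}_i}$ identically in the parameters.

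For any set $\mathbf{Z}\subseteq\mathbf{V}\setminus\{X,Y\}$, the coefficient $\beta_{yx.\mathbf{z}}$ is a rational function of the edge coefficients $\alpha_{ij}$ and error variances $\sigma_i^2$, with a denominator (essentially $\Var(X\mid\mathbf{Z})$) that vanishes only on a Lebesgue-null set. Since $P$ is absolutely continuous with respect to Lebesgue measure, $\{\beta_{yx.\mathbf{z}_i}=\beta_{yx.\mathbf{z}_j}\}$ has positive $P$-measure if and only if $\beta_{yx.\mathbf{z}_i}-\beta_{yx.\mathbf{z}_j}$ vanishes identically as a rational function. Hence it suffices to exhibit a single $j$ for which this identity fails.

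Set $\mathbf{N}=\mathrm{nonforb}(X,Y,\g_0)\setminus\{X\}$. A routine check of the adjustment criterion shows that $\mathbf{N}$ is itself a valid adjustment set, so $\beta_{yx.\mathbf{n}}\equiv\tau_{yx}$. Since $\f{\g_0}\cap\mathbf{Z}_i=\emptyset$ we have $\mathbf{Z}_i\subseteq\mathbf{N}$, and since $\mathbf{Z}_i$ is not valid there exists a non-causal path $\pi$ from $X$ to $Y$ that is open given $\mathbf{Z}_i$. The set $\mathbf{N}$ blocks $\pi$, and the block cannot come from a collider $C$: if $C$ were blocked by $\mathbf{N}\supseteq\mathbf{Z}_i$, then $C$ together with all its descendants would lie outside $\mathbf{N}$ and hence outside $\mathbf{Z}_i$, so $C$ would already block $\pi$ under $\mathbf{Z}_i$, a contradiction. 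Hence the block occurs at a non-collider $N^{\ast}\in(\mathbf{N}\setminus\mathbf{Z}_i)\cap\pi$. By the coverage hypothesis there exists $j^{\ast}$ with $N^{\ast}\in\mathbf{Z}_{j^{\ast}}$, and since conditioning on a non-collider blocks the path at that node, $\pi$ is blocked given $\mathbf{Z}_{j^{\ast}}$.

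To finish I would invoke a trek-rule expansion of partial covariances in a linear SEM: the numerator of $\beta_{yx.\mathbf{z}}$ sums contributions from paths between $X$ and $Y$ left open by $\mathbf{Z}$, each weighted by a product of edge coefficients. The path $\pi$ contributes to $\beta_{yx.\mathbf{z}_i}$ but not to $\beta_{yx.\mathbf{z}_{j^{\ast}}}$, so after treating the coefficients of edges along $\pi$ as formal indeterminates and holding the remaining parameters at a Lebesgue-generic point, the two rational functions differ in the monomial associated with $\pi$. This forces $\beta_{yx.\mathbf{z}_i}\not\equiv\beta_{yx.\mathbf{z}_{j^{\ast}}}$, and the measure-zero reduction completes the proof. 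The main obstacle is this final non-cancellation: edges of $\pi$ may also appear on other paths left open by $\mathbf{Z}_i$ or $\mathbf{Z}_{j^{\ast}}$, so one must verify by a careful degree argument in the trek expansion that the $\pi$-monomial is not cancelled when the other parameters are generic; this holds because distinct treks give distinct monomials in the edge coefficients once the error-variance parameters are fixed in general position.
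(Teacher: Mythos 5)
Your graphical argument is correct and essentially matches the paper's: you identify a non-causal path $\pi$ open given $\mathbf{Z}_i$, show (via the valid adjustment set $\mathbf{N}=\mathrm{nonforb}(X,Y,\g_0)$ and the inclusion $\mathbf{Z}_i\subseteq\mathbf{N}$) that the block of $\pi$ by $\mathbf{N}$ must occur at a non-collider outside $\f{\g_0}$, and use the coverage hypothesis to find a $\mathbf{Z}_{j^\ast}$ that blocks $\pi$. Your collider case analysis is, if anything, a slightly cleaner packaging of the paper's handling of $X\rightarrow C\leftarrow Y$ and of paths whose non-colliders are all forbidden. The reduction of the almost-sure claim to non-identical vanishing of a polynomial/rational function in the parameters is also the same as the paper's.

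The gap is exactly where you flag it: the final non-vanishing step. Your proposed route — a trek-rule expansion in which each open path contributes a monomial and ``distinct treks give distinct monomials'' — does not work as stated. The partial regression coefficient $\beta_{yx.\mathbf{z}}$ is a ratio of determinants of covariance submatrices, not a signed sum over $d$-connecting paths with monomial weights; colliders on $\pi$ that are opened by descendants in $\mathbf{Z}_i$ contribute through the edge coefficients of those descendant paths, and distinct treks can produce identical monomials, so the claimed non-cancellation is neither obvious nor true in the generality you invoke. The paper sidesteps this entirely with a constructive witness: for each collider $C_k$ on $p$ it selects a shortest directed path $q_k$ from $C_k$ into $\mathbf{Z}_i$ (pruning intersections and rerouting if some $q_k$ passes through $X$), then sets \emph{every edge coefficient not on $p$ or the chosen $q_k$'s to zero}. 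In the resulting pruned graph $p$ is the unique path from $X$ to $Y$, still open given $\mathbf{Z}_i$ and blocked given $\mathbf{Z}_j$, so $\beta_{yx.\mathbf{z}_i}\neq 0=\beta_{yx.\mathbf{z}_j}$ at that one parameter point; this single evaluation certifies the polynomial is non-trivial, and the Lebesgue-null zero set finishes the argument. You need this (or an equivalent explicit parameter choice) to close your proof; without it the argument is incomplete.
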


Verifying that Theorem \ref{thm:existence} holds requires knowledge of the true DAG $\g_0$, which we do not have. Nonetheless, it gives two important but also intuitive insights on how to select $\mathcal{Z}$. First, the sets in $\mathcal{Z}$ should cover as many nodes as possible, i.e., ideally all non-forbidden nodes in the candidate graph $\g$. This maximises the chances that $(\mathbf{V} \setminus \f{\g_0}) \subseteq \bigcup_{j=1}^k \mathbf{Z}_j$. Second, we should minimise the number of nodes that appear in all sets $\mathbf{Z}_i \in \mathcal{Z}$, and some of the candidate sets $\mathbf{Z}_i$ should be as small as possible. This maximises the chance that $\f{\g_0} \cap \mathbf{Z}_i = \emptyset$ for at least one $\mathbf{Z}_i \in \mathcal{Z}$. Note that this is very different from the strategy proposed by \citet{lu2014robustness}.

\begin{Example}
    Consider the graphs from Figure \ref{fig:graph} and the linear structural equation model compatible with $\g_0$, where all edge coefficients and error variances equal 1. We are interested in estimating the total effect $\tau_{yx}$, which here is simply the edge coefficient of the edge $X \rightarrow Y$ and therefore $\tau_{yx}=1$ by the path tracing rules for total effects from \citet{wright1934method}.
    
    We now illustrate for three candidate graphs that differ from $\g_0$, whether we can use tests for the null hypothesis $H_{0}$ to detect the mistakes in the candidate graphs and how this depends on the choice of sets $\mathcal{Z}$.
    
    Consider the candidate graph $\g_1$ and the collection \[\mathcal{Z}=\{\{A_{1}\},\{A_{1},A_{2}\}, \{A_{1},A_{2},R\}\}\] of three valid adjustment sets relative to $(X,Y)$ in $\g_{1}$.
    A direct calculation shows that $\bbeta_{yx.\mathcal{Z}}=(1.25, 1.25,1.25)^{\top}$, none of which are equal to the total effect $\tau_{yx}=1$.
    In this case, the null hypothesis $H_{0}$ is true even though the null hypothesis $H_{0}^{*}$ is false, i.e., testing $H_{0}$ will not detect that there is a mistake in the candidate graph.
    However, if we add the set of non-forbidden nodes $\mathbf{Z}_4=\{A_{1},A_{2},B_{1},B_{2},V,D,R\}$ to $\mathcal{Z}$, then $H_{0}$ no longer holds as $\mathbf{Z}_4$ is a valid adjustment set in $\g_0$ and therefore $\beta_{yx.\mathbf{z}_4}=1$. In this case testing $H_{0}$ will detect that there is a mistake in the candidate graph.
    
    Consider now the candidate graph $\g_{2}$. It has exactly the same valid adjustment sets relative to $(X,Y)$ as the true graph $\g_0$. Testing $H_{0}$ will therefore not detect that there is a mistake in the candidate graph, irrespective of the collection of valid adjustment sets. Since the two graphs are equivalent with respect to estimating the total effect $\tau_{yx}$ with covariate adjustment this is not a concern.
    
    Consider now the candidate graph $\g_{3}$.
    All valid adjustment sets relative to $(X,Y)$ in $\g_{3}$ result in estimates of $1.5$. Therefore testing $H_{0}$ will not detect that there is a mistake in the candidate graph, irrespective of the collection of valid adjustment sets.
    Interestingly, in the graph $\g'_3$ equal to $\g_3$ but with the edge $V \rightarrow Y$ removed, we can detect the mistakes by testing $H_{0}$ with, for example, the collection $\mathcal{Z}$ of all valid adjustment sets in $\g'_3$. This is an example of using an instrument to detect omitted variables bias \citep[cf.][]{chen2015exogeneity}, which our test implicitly exploits.
    \label{exp:graph}
  \end{Example}

\subsection{The Test Statistic}
As a preparatory result and for completeness, we first derive the asymptotic distribution of the estimator $\hat{\bbeta}_{yx.\mathcal{Z}}$.

\begin{Lemma}
  Consider a $p$-dimensional random vector $\bV=(V_{1},V_{2},\dots,V_{p})$ that follows a distribution where $\E(V_{\ell}^{4})<\infty$ for all $1\leq \ell\leq p$. 
  Given two random variables $X,Y\in\bV$, let $\mathcal{Z}=\{\bZ_{1},\bZ_{2},\dots,\bZ_{k}\}$, $k \geq 2$, be a collection of random subvectors of $\bV$ that do not contain $X$ or $Y$, and let $\bZ_{i}'=(X,\bZ^{\top}_{i})^{\top}$ for $i=1,2,\dots,k$. 
  Then the random vector $\sqrt{n}(\hat{\bbeta}_{yx.\mathcal{Z}}-\bbeta_{yx.\mathcal{Z}})$ converges in distribution to a multivariate normal random variable with mean zero and covariance matrix
  \begin{equation}
    (\bSigma_{\mathcal{Z}})_{ij}=\dfrac{\E(\delta_{x\bz_{i}}\delta_{y\bz_{i}'}\delta_{x\bz_{j}}\delta_{y\bz_{j}'})}{\E(\delta_{x\bz_{i}}^{2})\E(\delta_{x\bz_{j}}^{2})} \quad 1\leq i, j \leq k,
    \label{eqn:covmat}
  \end{equation}
  where $\delta_{y\bz_{i}'}=Y-\bbeta_{y\bz_{i}'}\bZ_{i}'$ and $\delta_{x\bz_{i}}=X-\bbeta_{x\bz_{i}}\bZ_{i}$.
  \label{lem:normal}
\end{Lemma}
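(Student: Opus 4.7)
The plan is to reduce the joint asymptotic distribution of the stacked estimator $\hat{\bbeta}_{yx.\mathcal{Z}}$ to an application of the multivariate CLT by obtaining an influence function representation for each coordinate via the Frisch--Waugh--Lovell (FWL) theorem, and then verifying that the resulting limiting covariance matches the formula in \eqref{eqn:covmat}.

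First I would apply FWL to each $\hat{\beta}_{yx.\bz_i}$. Writing $r_{x\bz_i}$ for the sample residuals of the OLS regression of $X$ on $\bZ_i$ (including an intercept), FWL gives
\[
\hat{\beta}_{yx.\bz_i} \;=\; \frac{\sum_t r_{x\bz_i,t}\, Y_t}{\sum_t r_{x\bz_i,t}^{2}}.
\]
Substituting the population decomposition $Y = \beta_{yx.\bz_i}X + \boldsymbol{\gamma}_i \bZ_i + \delta_{y\bz_i'}$, using the sample orthogonality $\sum_t r_{x\bz_i,t}\bZ_{i,t}=0$ and $\sum_t r_{x\bz_i,t}X_t = \sum_t r_{x\bz_i,t}^{2}$, yields
\[
\sqrt{n}\bigl(\hat{\beta}_{yx.\bz_i}-\beta_{yx.\bz_i}\bigr)
= \frac{n^{-1/2}\sum_t r_{x\bz_i,t}\,\delta_{y\bz_i',t}}{n^{-1}\sum_t r_{x\bz_i,t}^{2}}.
\]

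Next I would linearise the numerator by writing $r_{x\bz_i,t}=\delta_{x\bz_i,t}+(\bbeta_{x\bz_i}-\hat{\bbeta}_{x\bz_i})\bZ_{i,t}$. Since the OLS estimator is $\sqrt{n}$-consistent and $\E(\bZ_i \delta_{y\bz_i'})=0$ by the orthogonality of the population residuals, the correction term is $O_p(n^{-1/2})\cdot O_p(1)=o_p(1)$. The denominator converges in probability to $\E(\delta_{x\bz_i}^{2})$ by the LLN, so Slutsky gives the influence function representation
\[
\sqrt{n}\bigl(\hat{\beta}_{yx.\bz_i}-\beta_{yx.\bz_i}\bigr)
= \frac{1}{\E(\delta_{x\bz_i}^{2})}\cdot\frac{1}{\sqrt{n}}\sum_{t=1}^{n}\delta_{x\bz_i,t}\,\delta_{y\bz_i',t}+o_{p}(1).
\]
All required moments are finite because each $\delta_{x\bz_i}\delta_{y\bz_i'}$ is a bilinear expression in components of $\bV$ and the fourth-moment hypothesis on $\bV$ ensures $\E\{(\delta_{x\bz_i}\delta_{y\bz_i'})^{2}\}<\infty$.

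Stacking across $i=1,\dots,k$ and applying the multivariate CLT to the i.i.d.\ mean-zero vectors with entries $\delta_{x\bz_i,t}\delta_{y\bz_i',t}$ (mean zero because $\delta_{y\bz_i'}$ is orthogonal to $X$ and $\bZ_i$, hence to $\delta_{x\bz_i}$) gives joint asymptotic normality. The limiting covariance between the $i$-th and $j$-th coordinates is
\[
\frac{\Cov(\delta_{x\bz_i}\delta_{y\bz_i'},\;\delta_{x\bz_j}\delta_{y\bz_j'})}{\E(\delta_{x\bz_i}^{2})\,\E(\delta_{x\bz_j}^{2})}
=\frac{\E(\delta_{x\bz_i}\delta_{y\bz_i'}\delta_{x\bz_j}\delta_{y\bz_j'})}{\E(\delta_{x\bz_i}^{2})\,\E(\delta_{x\bz_j}^{2})},
\]
which is exactly \eqref{eqn:covmat}. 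The main technical nuisance is the bookkeeping in the linearisation step—confirming that all cross terms arising from replacing sample residuals by population residuals are $o_p(1)$—but this reduces to standard $\sqrt{n}$-consistency of OLS together with the orthogonality $\E(\bZ_i\delta_{y\bz_i'})=0$, which the fourth-moment hypothesis makes straightforward.
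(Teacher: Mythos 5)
Your proposal is correct and follows essentially the same route as the paper's proof: the ratio representation you derive via FWL is exactly the Buja et al.\ lemma the paper cites, and the subsequent linearisation of the numerator, the convergence of the denominator to $\E(\delta_{x\bz_i}^2)$, and the stacked multivariate CLT plus Slutsky argument all coincide with the paper's steps. The only cosmetic difference is that you derive the regression-error representation rather than quoting it as a lemma.
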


In general the covariance matrix $\bSigma_{\mathcal{Z}}$ of the limiting normal distribution will not be of full rank, that is, the distribution will be degenerate. To illustrate this, we now give an example.

\begin{Example}
    Consider again the DAG $\g_0$ and the linear structural equation model from Example~\ref{exp:graph}. Let  
    \begin{align*}
        \mathcal{Z}=\{\{A_{1},B_{1}\},\{A_{1},A_{2},B_{1}\},&\{A_{1},B_{1},B_{2}\},\\
        &\{A_{1},A_{2},B_{1},B_{2}\}\}.
    \end{align*}
    The asymptotic covariance matrix $\bSigma_{\mathcal{Z}}$ of $\hat{\bbeta}_{yx.\mathcal{Z}}$ is the rank-3 matrix
    \[
    \begin{pmatrix}
    1.75 & 1.25 & 1.5 & 1 \\
    1.25 & 1.25 & 1 & 1 \\
    1.5 & 1 & 1.5 & 1 \\
    1 & 1 & 1 & 1
    \end{pmatrix}.
    \]
    \label{exp:degen}
  \end{Example}

We now reformulate and slightly generalise the null hypothesis $H_{0}$ from Section~\ref{sec:setup} as follows. Consider a pair of random variables $(X,Y)$ and a collection of random vectors $\mathcal{Z}=\{
\mathbf{Z}_1,\dots,\mathbf{Z}_k\}$. Define a contrast matrix $\bGamma \in \mathbb{R}^{(k-1) \times k}$ such that \(\bGamma\mathbf{1}=\mathbf{0}\) and \(\mathrm{rank}(\bGamma)=k-1\) and
consider the null hypothesis: $H_{0}:\bGamma\bbeta_{yx.\mathcal{Z}}=\mathbf{0}$. Based on the joint asymptotic normality of $\hat{\bbeta}_{yx.\mathcal{Z}}$ we now construct an asymptotically $\chi^{2}$-distributed test statistic for this null hypothesis.

\begin{Definition}[Rank-$r$ Moore-Penrose inverse]
    Consider the spectral decomposition of an $l\times l$ positive semidefinite matrix $\bDelta=\mathbf{P}\mathbf{\Lambda}\mathbf{P}^{\top}$, where $\mathbf{P}$ is the orthonormal matrix of eigenvectors and $\bLambda=\mathrm{diag}(\lambda_{1},\lambda_{2},\dots,\lambda_{l})$ with $\lambda_{1}\geq \lambda_{2}\geq \cdots\geq \lambda_{l}$ the ordered eigenvalues of $\bDelta$.
    The rank-$r$ Moore-Penrose inverse of $\bDelta$ is the matrix $\bDelta^{\dagger}_{r}=\mathbf{P}\bLambda^{\dagger}_{r}\mathbf{P}^{\top}$, where $r\leq \mathrm{rank}(\bDelta)$ and $\bLambda^{\dagger}_{r}=\mathrm{diag}(1/\lambda_{1},\dots,1/\lambda_{r},0,\dots,0)$.
  \end{Definition}

\begin{Theorem}
Assume the same conditions as in Lemma~\ref{lem:normal}.
 Let $\bSigma_{\mathcal{Z}}$ be the covariance matrix of the limiting distribution of $\sqrt{n}(\boldsymbol{\hat{\beta}}_{yx.\mathcal{Z}}-\boldsymbol{\beta}_{yx.\mathcal{Z}})$ and given a $(k-1)\times k$ contrast matrix $\mathbf{\Gamma}$, define $\bDelta_{\mathcal{Z}}=\bGamma\bSigma_{\mathcal{Z}}\bGamma^{\top}$. Suppose that $\hat{\bSigma}_{\mathcal{Z}}$ is a consistent estimator of $\bSigma_{\mathcal{Z}}$ and that $\hat{r}$ is a consistent estimator of $\mathrm{rank}(\bDelta_{\mathcal{Z}})=r_{0}, 1\leq r_0\leq k-1$.
Let $\hat{\bDelta}^{\dagger}_{\mathcal{Z},\hat{r}}$ denote the rank-$\hat{r}$ Moore-Penrose inverse of $\hat{\bDelta}_{\mathcal{Z}}=\mathbf{\Gamma}\hat{\bSigma}_{\mathcal{Z}}\mathbf{\Gamma}^{\top}$.
Then under the null hypothesis $H_{0}: 
\mathbf{\Gamma} \bbeta_{yx.\mathcal{Z}} = 0$, the test statistic
\begin{equation}
    T^{2}_{\hat{r}}=n(\mathbf{\Gamma}\hat{\bbeta}_{yx.\mathcal{Z}})^{\top}\hat{\bDelta}^{\dagger}_{\mathcal{Z},\hat{r}}(\mathbf{\Gamma}\hat{\bbeta}_{yx.\mathcal{Z}})
    \label{eq:test-statistic}
\end{equation}
converges in distribution to a $\chi^{2}_{r_{0}}$-distributed random variable.
\label{thm:chisquare}
\end{Theorem}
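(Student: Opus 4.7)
The plan is to combine the asymptotic normality of $\hat{\bbeta}_{yx.\mathcal{Z}}$ from Lemma~\ref{lem:normal} with a Slutsky-type argument that carefully handles the rank-truncated Moore--Penrose inverse at the singular matrix $\bDelta_{\mathcal{Z}}$. Three pieces need to be assembled: (i) identify the weak limit of $\sqrt{n}\,\bGamma\hat{\bbeta}_{yx.\mathcal{Z}}$ under $H_{0}$; (ii) show that the target quadratic form has a $\chi^{2}_{r_{0}}$ distribution; (iii) establish $\hat{\bDelta}^{\dagger}_{\mathcal{Z},\hat{r}}\xrightarrow{p}\bDelta^{\dagger}_{\mathcal{Z},r_{0}}$ despite the fact that the rank-truncated pseudoinverse is not a continuous function of its argument in general.

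For (i), applying the continuous mapping theorem to Lemma~\ref{lem:normal} with the linear map $v\mapsto\bGamma v$ gives $\sqrt{n}\,\bGamma(\hat{\bbeta}_{yx.\mathcal{Z}}-\bbeta_{yx.\mathcal{Z}})\xrightarrow{d} W$ with $W\sim N(\mathbf{0},\bDelta_{\mathcal{Z}})$; under $H_{0}$ the centring $\bGamma\bbeta_{yx.\mathcal{Z}}$ vanishes, so $\sqrt{n}\,\bGamma\hat{\bbeta}_{yx.\mathcal{Z}}\xrightarrow{d} W$. For (ii), I would take the spectral decomposition $\bDelta_{\mathcal{Z}}=\mathbf{P}\bLambda\mathbf{P}^{\top}$ with $\lambda_{1}\geq\cdots\geq\lambda_{r_{0}}>0=\lambda_{r_{0}+1}=\cdots=\lambda_{k-1}$, represent $W\stackrel{d}{=}\mathbf{P}\bLambda^{1/2}\mathbf{Z}$ with $\mathbf{Z}\sim N(\mathbf{0},\bI_{k-1})$, and verify by direct computation that $W^{\top}\bDelta^{\dagger}_{\mathcal{Z},r_{0}}W=\sum_{i=1}^{r_{0}}Z_{i}^{2}\sim\chi^{2}_{r_{0}}$.

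Step (iii) is the crux. From $\hat{\bSigma}_{\mathcal{Z}}\xrightarrow{p}\bSigma_{\mathcal{Z}}$ I would deduce $\hat{\bDelta}_{\mathcal{Z}}\xrightarrow{p}\bDelta_{\mathcal{Z}}$, and by Weyl's inequality the ordered eigenvalues $\hat{\lambda}_{1},\dots,\hat{\lambda}_{k-1}$ of $\hat{\bDelta}_{\mathcal{Z}}$ converge in probability to $\lambda_{1},\dots,\lambda_{k-1}$. Because $\hat{r}$ and $r_{0}$ are integer-valued, consistency of $\hat{r}$ upgrades to $P(\hat{r}=r_{0})\to 1$, so I may restrict attention to the event $\{\hat{r}=r_{0}\}$. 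On this event the top $r_{0}$ eigenvalues of $\hat{\bDelta}_{\mathcal{Z}}$ are eventually bounded away from zero, and the strict eigengap $\lambda_{r_{0}}>0=\lambda_{r_{0}+1}$ allows the Davis--Kahan theorem to deliver convergence in probability of the top-$r_{0}$ spectral projector of $\hat{\bDelta}_{\mathcal{Z}}$ to that of $\bDelta_{\mathcal{Z}}$. Combining eigenvalue and projector convergence yields $\hat{\bDelta}^{\dagger}_{\mathcal{Z},\hat{r}}\xrightarrow{p}\bDelta^{\dagger}_{\mathcal{Z},r_{0}}$.

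A final application of Slutsky's theorem with the continuous bilinear map $(v,M)\mapsto v^{\top}Mv$ then delivers $T^{2}_{\hat{r}}\xrightarrow{d} W^{\top}\bDelta^{\dagger}_{\mathcal{Z},r_{0}}W\sim\chi^{2}_{r_{0}}$. The main obstacle is clearly (iii): the rank-$r$ Moore--Penrose inverse is notoriously discontinuous when the rank can change under small perturbations, so the proof has to bypass this by exploiting both the eigengap at $\lambda_{r_{0}}$ and the integer-valued nature of $\hat{r}$; the other steps are routine once (iii) is in hand.
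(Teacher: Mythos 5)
Your proposal is correct and follows the same overall architecture as the paper's proof: continuous mapping to get the weak limit $W\sim\mathrm{N}(\mathbf{0},\bDelta_{\mathcal{Z}})$ under $H_0$, a spectral decomposition showing $W^{\top}\bDelta_{\mathcal{Z}}^{\dagger}W\sim\chi^2_{r_0}$, the upgrade of consistency of $\hat{r}$ to $\mathbb{P}(\hat{r}=r_0)\to 1$ via integer-valuedness, Weyl's inequality for eigenvalue convergence, and a final Slutsky step. The one place you genuinely diverge is the crux you correctly identify, namely $\hat{\bDelta}^{\dagger}_{\mathcal{Z},\hat{r}}\overset{p}{\to}\bDelta^{\dagger}_{\mathcal{Z}}$: the paper first shows that the rank-$\hat{r}$ spectral truncation $\hat{\bDelta}_{\mathcal{Z},\hat{r}}$ converges in probability to $\bDelta_{\mathcal{Z}}$ while having the correct rank with probability tending to one, and then invokes Theorem~2 of \citet{andrews1987asymptotic}, which states precisely that consistency plus asymptotically correct rank imply consistency of the Moore--Penrose inverse. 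You instead argue via the Davis--Kahan theorem, using the eigengap $\lambda_{r_0}>0=\lambda_{r_0+1}$ to get convergence of the top-$r_0$ spectral projector and combining this with eigenvalue convergence. This route also works, but it needs one extra bit of care that your sketch glosses over: if the nonzero eigenvalues $\lambda_1,\dots,\lambda_{r_0}$ have multiplicities, the individual eigenvectors of $\hat{\bDelta}_{\mathcal{Z}}$ need not converge, so one must either apply Davis--Kahan cluster by cluster or argue that $\sum_{s\le r_0}\hat{\lambda}_s^{-1}\hat{\mathbf{q}}_s\hat{\mathbf{q}}_s^{\top}$ converges as a whole (e.g.\ as a continuous matrix function supported away from zero on the spectrum). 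The Andrews citation buys the paper a one-line resolution of exactly this discontinuity issue; your approach is more self-contained but requires that additional argument to be airtight.
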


We can estimate $\bSigma_{\mathcal{Z}}$ consistently by plugging in sample residuals for the population level residuals in equation \eqref{eqn:covmat}. We refer to this estimator as the plug-in estimator and denote it $\hat{\bSigma}_{\mathcal{Z}}$.
For a detailed argument, see Lemma~\ref{lem:consistent} in the supplementary materials.
Note also that for simplicity, we only consider the contrast matrix $\bGamma$ with $1$ at the entries $(j,j)$ and $-1$ at the entries $(j,j+1)$ for $j=1,2,\dots,k-1$, and with zeroes at the remaining entries in this paper. 

\subsection{The Degrees of Freedom}
To compute the Moore-Penrose inverse and the degrees of freedom for the test statistic in equation \eqref{eq:test-statistic}, it is necessary to know the rank of $\bDelta_{\mathcal{Z}}$. There are two possible approaches to this problem. The first is to estimate the rank $r_0$ with some estimate $\hat{r}$. 
The second approach is to select the candidate sets $\mathcal{Z}$ in a way that ensures the matrix $\bDelta_{\mathcal{Z}}$ is invertible. We now develop tools for both approaches.

\subsubsection{Estimating the Degrees of Freedom}
\label{sec:rank}
A standard approach to estimating the rank of a matrix from a noisy observation is information criterion based model selection. This is equivalent to conducting sequential hypothesis tests \citep{camba2009statistical} for the possible ranks. 
In order to apply such model selection to the rank estimation of $\bDelta_Z$, we first derive that the half-vectorised plug-in estimator $\vecth(\hat{\bDelta}_{\mathcal{Z}})$ based on the plug-in estimator $\hat{\bSigma}_{\mathcal{Z}}$ is asymptotically normal. 
\begin{Proposition}
  Under the same conditions as in Lemma~\ref{lem:normal}, $\sqrt{n}\vecth(\hat{\bDelta}_{\mathcal{Z}}-\bDelta_{\mathcal{Z}}) \overset{d}{\to} \mathrm{N}(\mathbf{0},\bC)$, where $\bC=\bPi\mathbf{F}\bPi^{\top}$ for some positive semidefinite matrix $\mathbf{F}$, with scaling matrix $\bPi=\bE_{l}(\bGamma\otimes\bGamma)\bD_{k}$. Here, $\bE_{l}$ is the $l(l+1)/2\times l^{2}$ elimination matrix, $l=k-1$ and $\bD_{k}$ is the $k^{2}\times k(k+1)/2$ duplication matrix.
  \label{ppn:vech}
  \end{Proposition}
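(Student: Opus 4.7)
The plan is to express $\vecth(\hat{\bDelta}_{\mathcal{Z}})$ as a fixed linear function of $\vecth(\hat{\bSigma}_{\mathcal{Z}})$, establish joint asymptotic normality for $\vecth(\hat{\bSigma}_{\mathcal{Z}})$ via the delta method, and then obtain the claim by the linear transformation rule for multivariate normal limits.

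The first step reduces the problem to $\vecth(\hat{\bSigma}_{\mathcal{Z}})$. Since $\bDelta_{\mathcal{Z}} = \bGamma\bSigma_{\mathcal{Z}}\bGamma^{\top}$ and $\hat{\bDelta}_{\mathcal{Z}} = \bGamma\hat{\bSigma}_{\mathcal{Z}}\bGamma^{\top}$, combining the Kronecker identity $\vect(\bA\bX\bB) = (\bB^{\top}\otimes\bA)\vect(\bX)$ with $\vect(\hat{\bSigma}_{\mathcal{Z}}) = \bD_{k}\vecth(\hat{\bSigma}_{\mathcal{Z}})$ (valid because $\hat{\bSigma}_{\mathcal{Z}}$ is symmetric) and $\vecth(\hat{\bDelta}_{\mathcal{Z}}) = \bE_{l}\vect(\hat{\bDelta}_{\mathcal{Z}})$ gives
$$\vecth(\hat{\bDelta}_{\mathcal{Z}}) \;=\; \bE_{l}(\bGamma\otimes\bGamma)\bD_{k}\,\vecth(\hat{\bSigma}_{\mathcal{Z}}) \;=\; \bPi\,\vecth(\hat{\bSigma}_{\mathcal{Z}}),$$
and the identical computation for the population quantities yields $\vecth(\bDelta_{\mathcal{Z}}) = \bPi\,\vecth(\bSigma_{\mathcal{Z}})$. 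Subtracting and rescaling produces
$$\sqrt{n}\,\vecth(\hat{\bDelta}_{\mathcal{Z}} - \bDelta_{\mathcal{Z}}) \;=\; \bPi\cdot\sqrt{n}\,\vecth(\hat{\bSigma}_{\mathcal{Z}} - \bSigma_{\mathcal{Z}}).$$

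The second step is to establish $\sqrt{n}\,\vecth(\hat{\bSigma}_{\mathcal{Z}} - \bSigma_{\mathcal{Z}}) \overset{d}{\to} \mathrm{N}(\mathbf{0}, \mathbf{F})$ for some positive semidefinite $\mathbf{F}$. Each entry of $\hat{\bSigma}_{\mathcal{Z}}$ is a smooth rational function of sample moments of $\bV$: the sample residuals entering equation \eqref{eqn:covmat} are linear combinations of the observed variables with coefficients that are smooth functions of sample second moments, and the sample analogues of both the numerators $\E(\delta_{x\bz_i}\delta_{y\bz_i'}\delta_{x\bz_j}\delta_{y\bz_j'})$ and the denominators $\E(\delta_{x\bz_i}^2)\E(\delta_{x\bz_j}^2)$ are sample averages of polynomials in these residuals. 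Applying the multivariate central limit theorem to the underlying sample moment vector of $\bV$ under the moment condition of Lemma~\ref{lem:normal}, combined with the delta method at the population moments (where the residual variances $\E(\delta_{x\bz_i}^2)$ are positive, as already required implicitly by Lemma~\ref{lem:normal}), then yields the desired joint asymptotic normality. The matrix $\mathbf{F}$ is determined by the usual sandwich formula involving the Jacobian of this rational map and the covariance of the sample moment vector, but the proposition only needs its existence and positive semidefiniteness.

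Combining the two steps via the linear transformation $\bPi$ gives the target conclusion $\sqrt{n}\,\vecth(\hat{\bDelta}_{\mathcal{Z}} - \bDelta_{\mathcal{Z}}) \overset{d}{\to} \mathrm{N}(\mathbf{0}, \bPi\mathbf{F}\bPi^{\top}) = \mathrm{N}(\mathbf{0}, \bC)$. The main obstacle is the delta-method bookkeeping in Step~2, but this is exactly the kind of argument already behind the proof of Lemma~\ref{lem:normal}, so one can invoke it directly rather than redo it; the genuinely new content of the proposition is the explicit Kronecker form $\bPi = \bE_{l}(\bGamma\otimes\bGamma)\bD_{k}$ of the scaling matrix that emerges from Step~1.
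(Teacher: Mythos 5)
Your proposal is correct and reaches the conclusion by the same overall architecture as the paper: establish joint asymptotic normality of $\vecth(\hat{\bSigma}_{\mathcal{Z}})$ and push it through the linear map $\vecth(\bDelta_{\mathcal{Z}})=\bPi\vecth(\bSigma_{\mathcal{Z}})$, which the paper also invokes at the very end (without spelling out the Kronecker/duplication/elimination computation you give for $\bPi$). Where you differ is in how the normality of $\vecth(\hat{\bSigma}_{\mathcal{Z}})$ is obtained. The paper linearises each entry $n^{1/2}(\hat{\bSigma}_{\mathcal{Z},ij}-\bSigma_{\mathcal{Z},ij})$ by hand: it replaces sample residuals by population residuals (controlling the remainder with the same expansion used in the proof of Lemma~\ref{lem:consistent}), centres the squared-residual denominators, and ends with a single i.i.d.\ mean-zero sum to which the multivariate CLT applies; this yields explicit formulas $\gamma_{ij,kl}/\omega_{ij,kl}$ for the entries of $\mathbf{F}$. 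You instead observe that every entry of $\hat{\bSigma}_{\mathcal{Z}}$ is a smooth rational function of the sample moments of $\bV$ up to order four and invoke the CLT on the moment vector together with the delta method. This is cleaner and delivers exactly what the proposition asserts --- existence and positive semidefiniteness of $\mathbf{F}$ --- with far less bookkeeping, at the cost of not producing the closed-form $\mathbf{F}$ that the paper records (and which would be needed if one wanted to estimate $\bC$ for the MDF statistic). One caveat applies equally to both arguments: a CLT for fourth-order sample moments (equivalently, for products of four residuals) formally requires finite eighth moments of $\bV$, not merely the fourth moments assumed in Lemma~\ref{lem:normal}; your proof inherits this issue from the paper rather than introducing it.
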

Based on Proposition \ref{ppn:vech} and a consistent estimator $\hat{\bC}$ of the matrix $\bC$, we may construct a rank estimation procedure from the minimum discrepancy function (MDF) test statistic \citep{cragg1997inferring,donald2007rank}, which has the form \begin{equation}
n\min_{\mathrm{rank}(\tilde{\bDelta}_{\mathcal{Z}})\leq r}\vecth(\hat{\bDelta}_{\mathcal{Z}}-\tilde{\bDelta}_{\mathcal{Z}})^{\top}\hat{\bC}^{-1}\vecth(\hat{\bDelta}_{\mathcal{Z}}-\tilde{\bDelta}_{\mathcal{Z}}). \label{eqn:mdf}
\end{equation} 
This procedure, however, has only been shown to be consistent if either $\bC$  is of full rank \citep{cragg1997inferring} or, in slightly adapted form, if the true rank of $\bC$ is known \citep{ratsimalahelo2003strongly}. 

As we cannot estimate the rank of $\bC$ to estimate the rank of $\bDelta_{\mathcal{Z}}$, we instead propose using a simplified estimator based on the MDF statistic from equation \eqref{eqn:mdf} which we call the pseudo-MDF estimator:
\begin{align}
\begin{split}
    \hat{r}=\underset{r\in\{1,\dots,k-1\}}{\mathrm{argmin}}\left\{n\|\vecth(\hat{\bDelta}_{\mathcal{Z}}-\tilde{\bDelta}_{\mathcal{Z},r})\|_{2}^{2} + \right.\\ \left. \log (n)r(k-1-(r-1)/2)\vphantom{n\|\vecth(\hat{\bDelta}-\tilde{\bDelta}_{r})\|_{2}^{2}}\right\},
    \end{split}
    \label{eqn:pseudobic}
  \end{align}
where $\tilde{\bDelta}_{\mathcal{Z},r}$ is the best rank-$r$ reconstruction of $\hat{\bDelta}_{\mathcal{Z}}$ based on spectral decomposition such that $\tilde{\bDelta}_{\mathcal{Z},r}\tilde{\bDelta}_{\mathcal{Z},r}^{\dagger}=\mathbf{I}$.
Note that we effectively assume that the matrix $\hat{\bC}^{-1}$ is the identity matrix.
In doing so, we ignore the covariance structure between the elements of $\hat{\bDelta}_{\mathcal{Z}}$.
Although the elements of $\hat{\bDelta}_{\mathcal{Z}}$ are likely correlated, the pseudo-MDF rank estimate nonetheless works well empirically (see Section \ref{sec:sim}).

\subsubsection{Selecting $\mathcal{Z}$ to Ensure Full Rank}
\label{sec:select Z}
  \begin{algorithm}[t!]
    \begin{algorithmic}[1]
      \State \textbf{Input}: Candidate graph $\g$, vertices $(X,Y)$, data $\mathcal{D}_{n}$, testing strategy $S\in\{\mathrm{All},\mathrm{Min+}\}$
      \State \textbf{Output}: $p$-value
      \If{$S=\mathrm{All}$}
        \State Set $\mathcal{Z}$ as the collection of all valid adjustment sets relative to $(X,Y)$ in $\g$
      \EndIf
      \If{$S=\mathrm{Min+}$}
        \State Set $\mathcal{Z}$ as a pruned collection of all minimal valid adjustment sets relative to $(X,Y)$ in $\g$ plus the set of non-forbidden nodes
      \EndIf
      \For{each adjustment set $\bZ$ in $\mathcal{Z}$}
        \State Get sample regression residuals $\mathbf{r}_{x\bz}$ and $\mathbf{r}_{y\bz'}$ from data $\mathcal{D}_{n}$, where $\mathbf{Z}'=(X,\mathbf{Z})$
      \EndFor
      \State Compute $\hat{\bSigma}_{\mathcal{Z}}$ and $\hat{\bDelta}_{\mathcal{Z}}$ with regression residuals
      \If{$S=\mathrm{All}$}
        \State Estimate optimal rank $\hat{r}$ from $\hat{\bDelta}_{\mathcal{Z}}$ based on (\ref{eqn:pseudobic})
      \EndIf
      \If{$S=\mathrm{Min}+$}
        \State Set $\hat{r}$ as the cardinality of $\mathcal{Z}$ minus one
      \EndIf
      \State Compute test statistic
      \[
      T^{2}_{\mathrm{obs}} = n(\mathbf{\Gamma}\hat{\bbeta}_{yx})^{\top}\hat{\bDelta}^{\dagger}_{\mathcal{Z},\hat{r}}(\mathbf{\Gamma}\hat{\bbeta}_{yx})
      \]
      \State Calculate $p\text{-value}=1-F(T^{2}_{\mathrm{obs}})$ where $F(\cdot)$ is the cumulative distribution function of $\chi^{2}_{\hat{r}}$
    \end{algorithmic}
    \caption{Testing procedure}
    \label{alg:test}
  \end{algorithm}

Depending on the choice of candidate sets $\mathcal{Z}$, the asymptotic covariance matrix $\bSigma_{\mathcal{Z}}$ may be of full rank. This is, for example, trivially true if there is only one set in $\mathcal{Z}$. Whenever $\bSigma_{\mathcal{Z}}$ is invertible, the matrix $\bDelta_{\mathcal{Z}}$ is also invertible.
We now propose a strategy to select $\mathcal{Z}$, such that $\bSigma_{\mathcal{Z}}$ is likely to be of full rank and that also follows the guidelines derived from Theorem \ref{thm:existence} in Section \ref{sec:setup}.
  \begin{Lemma}
    Consider nodes $X$ and $Y$ in a DAG $\g=(\bV,\bE)$ such that $Y \in \de(X,\g)$.
    Consider a collection of \[\mathcal{Z}=\{\bZ_{1},\dots,\bZ_{k}\}\cup\{\mathrm{nonforb}(X,Y,\g)\}\] where $\bZ_{i}$, $i=1,\dots,k$, are minimal valid adjustment sets relative to $(X,Y)$ in $\g$. If $\bZ_{i}\setminus (\cup_{j\neq i}\bZ_{j}) \neq \emptyset$ for all $i=1,\dots,k$, $\mathrm{nonforb}(X,Y,\g) \setminus \left( \cup_{i} \bZ_{i} \right) \not\perp_{\g} X$ and we sample the edge coefficients and error variances for a linear structural equation model compatible with $\g$ from a distribution $P$, such that $P$ is absolutely continuous with respect to the Lebesgue measure, then the asymptotic covariance matrix $\bSigma_{\mathcal{Z}}$ for the random vector $\boldsymbol{\hat{\beta}}_{yx.\mathcal{Z}}$ is $P$-almost surely of full rank.
    \label{lem:minimal}
  \end{Lemma}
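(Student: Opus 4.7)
The plan is to reduce the question of full rank of $\bSigma_{\mathcal{Z}}$ to an $L^2$-linear independence question for certain residual products, and then invoke a polynomial argument. By Lemma~\ref{lem:normal}, up to positive diagonal scaling $(\bSigma_{\mathcal{Z}})_{ij} = \E(u_i u_j)/(\E(\delta_{x\bz_i}^{2})\E(\delta_{x\bz_j}^{2}))$ where $u_i := \delta_{x\bz_i}\delta_{y\bz'_i}$. So $\bSigma_{\mathcal{Z}}$ is a Gram matrix and it is of full rank iff the scaled residual products $\{u_i\}_{i=1}^{k+1}$ are $L^2$-linearly independent (indexing so that $u_{k+1}$ corresponds to $\mathrm{nonforb}(X,Y,\g)$).

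Next, I would parametrise. Every node in the LSEM is a linear function of the independent errors $\epsilon_1,\dots,\epsilon_p$ with coefficients that are polynomials in the edge weights $\mathbf{A}$, and every OLS coefficient $\bbeta_{x\bz_i}$, $\bbeta_{y\bz'_i}$ is a rational function of $(\mathbf{A},\bsigma^{2})$. Hence each $u_i$ is a quadratic form in the $\epsilon_j$ whose coefficients lie in the field $\mathbb{R}(\mathbf{A},\bsigma^{2})$. A non-trivial relation $\sum_{i}c_i u_i = 0$ a.s.\ therefore forces equality of the corresponding $\epsilon_j\epsilon_l$ coefficients (since the $\epsilon_j$ are independent and non-degenerate), yielding a linear system in $\mathbf{c}$ whose coefficient matrix has entries in $\mathbb{R}(\mathbf{A},\bsigma^{2})$. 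Some appropriately chosen $(k+1)\times(k+1)$ minor of this matrix is therefore a rational function in $(\mathbf{A},\bsigma^{2})$; if this minor is not identically zero, then its zero set is Lebesgue-null, and hence $P$-null by absolute continuity of $P$.

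The heart of the proof, and the main obstacle, is producing a witness parameter setting at which the $u_i$ are $L^2$-linearly independent, using conditions~1 and~2 of the lemma. For each $i\le k$, condition~1 supplies a distinguishing node $W_i\in\bZ_i\setminus\bigcup_{j\ne i}\bZ_j$. Minimality of $\bZ_i$ implies that $\bZ_i\setminus\{W_i\}$ fails the adjustment criterion, so $W_i$ is essential on some path between $X$ and $Y$; this should ensure that $\epsilon_{W_i}$ appears with a non-zero coefficient in either $\delta_{x\bz_i}$ or $\delta_{y\bz'_i}$ that is structurally different from how it appears (if at all) in $u_{j\ne i}$, because $W_i\notin\bZ_j$. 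Condition~2 supplies a node $W^{*}\in\mathrm{nonforb}(X,Y,\g)\setminus\bigcup_i\bZ_i$ with $W^{*}\not\perp_{\g} X$. The d-connection guarantees that $W^{*}$ is marginally correlated with $X$, so $W^{*}$ contributes to $\delta_{x\bz_{k+1}}$ via a non-zero partial regression coefficient, while $W^{*}$ does not even enter the residuals $\delta_{x\bz_i}$ or $\delta_{y\bz'_i}$ for $i\le k$. A concrete witness can then be built by choosing the error variances and edge weights so that the coefficients of $\epsilon_{W_i}\epsilon_{W_i}$ in $u_i$ ($i\le k$) and of $\epsilon_{W^{*}}\epsilon_{W^{*}}$ in $u_{k+1}$ dominate all cross-contributions, rendering the coefficient matrix (restricted to these $k+1$ columns indexed by $W_1,\dots,W_k,W^{*}$) diagonal and hence of full rank.

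Combining the above, the chosen Gram minor is a rational function in $(\mathbf{A},\bsigma^{2})$ that is non-zero at the witness point; therefore it is non-zero $P$-almost surely, and $\bSigma_{\mathcal{Z}}$ is $P$-a.s.\ of full rank. I expect the witness construction to require careful bookkeeping, since the $\bbeta_{x\bz_i}$ and $\bbeta_{y\bz'_i}$ are non-trivial rational functions of $(\mathbf{A},\bsigma^{2})$; the cleanest route is likely to pick one specific parameter configuration (e.g.\ unit variances and generically chosen edge weights on a small modification of $\g$) and verify the rank claim directly on that configuration rather than attempt a fully symbolic argument.
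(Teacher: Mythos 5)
Your opening reduction---full rank of $\bSigma_{\mathcal{Z}}$ is equivalent to $L^2$-linear independence of the residual products $u_i=\delta_{x\bz_i}\delta_{y\bz_i'}$---is correct and is also where the paper starts. The two steps you build on top of it, however, both have genuine problems. First, the Gram entries $\E(u_iu_j)$ are \emph{not} rational functions of the edge coefficients and error variances alone: expanding $\delta_{x\bz_i}\delta_{y\bz_i'}\delta_{x\bz_j}\delta_{y\bz_j'}$ in the errors produces terms involving $\E(\epsilon_a^4)$, so the minor you want to analyse depends on fourth moments of the error distributions, which are not among the parameters sampled from $P$. The ``non-trivial rational function has Lebesgue-null zero set'' step therefore does not apply to the object whose determinant you are taking. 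Second, and more seriously, the witness construction---which you yourself identify as the heart of the proof---is never carried out, and the mechanism you sketch for it is flawed: for a valid adjustment set $\bZ_i$ the residuals $\delta_{x\bz_i}$ and $\delta_{y\bz_i'}$ are \emph{independent} (this is exactly what makes $\hat{\beta}_{yx.\bz_i}$ consistent), so an error $\epsilon_{W_i}$ cannot in general carry non-zero coefficients in both factors, and the coefficient of $\epsilon_{W_i}^2$ in $u_i$ on which your diagonal-dominance argument rests is typically zero (compare the paper's own observation that $\delta_{x\bz_i}$ contains no $\epsilon_y$ precisely because $\delta_{y\bz_i'}$ does). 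Moreover, $W_i\notin\bZ_j$ does not stop $\epsilon_{W_i}$ from entering $u_j$ through $X$, $Y$ or the nodes of $\bZ_j$, so the off-diagonal contributions you hope to dominate are not controlled by your conditions.

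The paper's proof avoids both issues by arguing directly at the level of the random variables for every \emph{faithful} parameterisation, and then noting that faithfulness holds $P$-almost surely. It first shows that the $\delta_{x\bz_i}$ themselves are linearly independent: by minimality and the partition lemma, the unique node $N\in\bZ_i\setminus\cup_{j\neq i}\bZ_j$ is d-connected to $X$ given $\bZ_i\setminus\{N\}$, faithfulness converts this into a non-zero partial regression coefficient for $N$ in $\bbeta_{x\bz_i}$, and a vanishing linear combination $\sum_i\alpha_i\delta_{x\bz_i}=0$ would then express $N$ exactly as a linear function of the remaining variables, contradicting non-singularity of $\Cov(\bV)$. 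It then lifts this to the products using the structural fact that $\delta_{y\bz_i'}=\epsilon_y+(\text{other errors})$ while $\delta_{x\bz_i}$ contains no $\epsilon_y$: isolating the $\epsilon_y$-part of $\sum_i\xi_iu_i=0$ forces $\sum_i\xi_i\delta_{x\bz_i}=0$, a contradiction. You would need an analogue of this second step in any repair of your argument; if you want to keep a genericity argument, it must be applied to a quantity that really is a function of the sampled parameters (as in the paper's Theorem~\ref{thm:existence}), not to the fourth-moment-dependent Gram determinant.
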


    In general, the collection of all minimal valid adjustment sets will not fulfil the distinct node condition of Lemma \ref{lem:minimal}. It is, however, easy to prune the set of all minimal valid adjustment sets to obtain a subset that fulfils the conditions of Lemma \ref{lem:minimal} and still covers the same set of nodes as the collection of all minimal valid adjustment sets.

\subsection{The testing procedure}

We propose a testing procedure that, given a pair of nodes $(X,Y)$ in a candidate graph $\g$ and a data set, tests whether adjusting for the valid adjustment sets relative to $(X,Y)$ in $\g$ leads to estimators that converge to the same quantity. The procedure uses the test statistic from equation \eqref{eq:test-statistic} and we propose two strategies to select the collection of valid adjustment sets $\mathcal{Z}$. 

The first strategy, which we call $\mathrm{All}$, considers all available valid adjustment sets. This strategy is likely to lead to the best power for the test, but it requires estimating the degrees of freedom for the test statistic's asymptotic distribution. This is a difficult problem (see Section \ref{sec:rank}) and as a result the solution we propose does not have a formal consistency guarantee, although it performs well in practice (see Section \ref{sec:sim}). In addition, computing all valid adjustment sets is very computationally expensive, especially for moderate to large graphs, and as a result this strategy may often not be computationally feasible.

The second strategy, which we call $\mathrm{Min}+$, is to prune the collection of all minimal valid adjustment sets as explained in Section \ref{sec:select Z} and then add the set of all non-forbidden nodes, which under the assumption $Y \in \de(X,\g)$ is always a valid adjustment set (see Lemma \ref{lemma:nonforb} in the supplementary materials). This approach avoids estimating the degrees of freedom but may lead to a loss of power. Note that if $Y \notin \de(X,\g)$ we would need to replace the non-forbidden nodes with another large set such as $\mathrm{Adjust}(X,Y,\g)$ \citep{perkovic16}. However, if $Y \notin \de(X,\g)$ every set that d-separates $X$ and $Y$ is a valid adjustment set and therefore our problem reduces to checking d-separation statements, for which there exists a wide literature on conditional independence tests \citep[e.g.][]{spirtes2000causation}. Therefore, we disregard this case. 

Another major advantage of the $\mathrm{Min}+$ strategy is that it avoids the computationally heavy task of computing all valid adjustment sets. The number of minimal valid adjustment sets is typically much smaller than the number of valid adjustment sets and as a result the polynomial-delay algorithm by \citet{van2014constructing}, which we use to estimate the set of all minimal valid adjustment sets, is generally quite fast. We verify this in a small simulation study, where the \(\mathrm{Min}+\) strategy ran on sparse graphs of size up to \(5000\) (see Section~\ref{app:sim} in the Supplementary Materials).

We summarise the testing procedure in Algorithm \ref{alg:test}. As discussed in Section \ref{sec:setup}, the test cannot detect all types of mistakes in $\g$, but it nonetheless serves as a simple and targeted robustness check. 

\begin{Example}
To illustrate our testing procedure, we revisit the linear structural equation model from Example~\ref{exp:graph} as well as the true graph and candidate graphs shown in Figure~\ref{fig:graph}. In addition, we also consider the candidate graph $\g^{\prime}_{3}$ which is the graph $\g_{3}$ with the edge $V\to Y$ deleted.
To each candidate graph we apply the testing procedure with both testing strategies (see Algorithm \ref{alg:test}).
Recall that for the candidate graphs $\g_0$ and $\g_{3}$ the null hypothesis is true, while it is false for the candidate graphs $\g_{1}$ and $\g^{\prime}_{3}$. We sample $100$ data sets with $n=25$ observations, $100$ sets with $n=100$ observations, and another $100$ sets with $n=400$ observations from the underlying linear structural equation model and apply our testing procedure to each of these data sets. The resulting $p$-values are shown as probability-probability plots against the standard uniform distribution in Figure~\ref{fig:running-example}. We explain the construction of these plots more thoroughly in Section~\ref{sec:ppplot} of the Supplementary Materials. When the null hypothesis is true, we see that both strategies lead to close to uniform $p$-values, especially when $n>25$. We do not consider $\g_2$ as it is equivalent to $\g_0$ in terms of valid adjustment sets. We also observe reasonable power for $n>25$ and $\g_1$, especially with strategy $\mathrm{All}$. For $\g_3'$ on the other hand the power is mediocre except for $\mathrm{All}$ and $n=400$. 
    \label{exp:ppplot}
\end{Example}

\begin{figure}[t!]
	\centering
	\scalebox{0.425}{
	\includegraphics{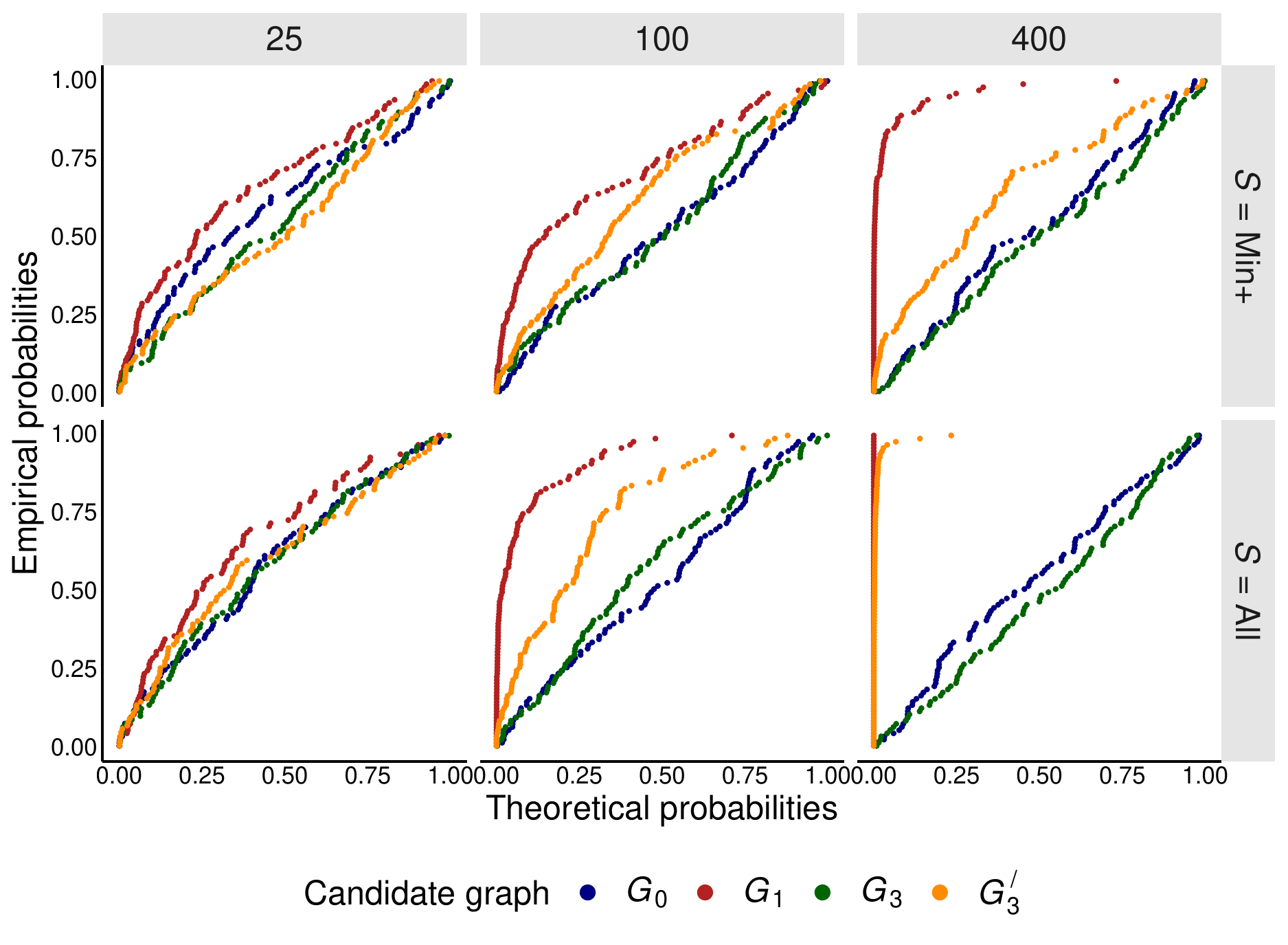}
	}
	\caption{Probability-probability plots of the $p$-values in Example~\ref{exp:ppplot}.
	Theoretical probabilities are from the cumulative distribution function of a standard uniform distribution.
	Rows: test strategies.
	Columns: sample sizes for the test.}
	\label{fig:running-example}
\end{figure}

\section{Simulations}\label{sec:sim}

We investigate the finite sample performance of the testing procedure from Algorithm~\ref{alg:test} in a simulation study.
The study is structured as follows. We randomly generate \(50\) DAGs for each graph size from \(\{10,15\}\). The expected neighbourhood size is sampled uniformly from \(\{2,3,4,5\}\) for each graph. Then, for each DAG $\g_0$ we randomly generate a compatible linear structural equation model by (i) sampling the edge coefficients uniformly from the interval \([-2,-0.1]\cup[0,1,2]\), (ii) sampling the error distribution uniformly to either be normal, \(t\), uniform or logistic for all errors and (iii) uniformly sampling scale parameters which depend on the error distribution such that the error variances are in the interval \([0.4,1.6]\). We randomly choose a pair of nodes \((X,Y)\) such that \(Y\in\mathrm{de}(X,\mathcal{G}_{0})\) and that there exist at least two valid adjustment sets relative to \((X,Y)\) in \(\g_0\).

For each true DAG $\g_0$ we then sample $40$ data sets from the corresponding linear structural equation model. The sample size $m$ is \(100\) for half of these data sets, and \(400\) for the other half. With each of these data sets we estimate a causal graph $\g$ using either the Greedy Equivalence Search (GES) algorithm \citep{Chickering02}, if the errors of the linear structural equation model are normal or the Linear Non-Gaussian Acyclic Models (LiNGAM) algorithm \citep{shimizu2006linear}, otherwise. We do this to generate a large number of plausible candidate causal graphs for our testing procedure. We use the sample sizes $100$ and $400$ to ensure that some of the candidate graphs contain more errors and some fewer. We refer to the candidate graphs that were generated using the sample size $100$ as low accuracy candidate graphs and to those that were generated with the sample size $400$ as high accuracy ones.

\begin{figure}
    \centering
    \scalebox{0.6}{
    \begin{tikzpicture}[scale=1.1]
      \node at (-4,0) {Underlying model};
      \node at (-4,-1.5) {Candidate graphs};
      \node at (-4,-3) {$p$-values};
      \node (g0) at (0,0) {$\mathcal{G}_{0}$};
      \node (g1) at (-2,-1.5) {$\mathcal{G}_{1}$};
      \node (g2) at (-0.5,-1.5) {$\mathcal{G}_{2}$};
      \node at (0.5,-1.5) {$\cdots$};
      \node (g3) at (2,-1.5) {$\mathcal{G}_{20}$};
      \node (p1) at (-2,-3) {$p^{(1)}$};
      \node (p2) at (-1,-3) {$p^{(2)}$};
      \node at (0,-3) {$\cdots$};
      \node (p3) at (1,-3) {$p^{(100)}$};
      \path (g0) edge[directed] node[pos=0.25,left] {\small$D_{m,1}$} (g1);
      \path (g0) edge[directed] node[pos=0.5,right] {\small$D_{m,2}$} (g2);
      \path (g0) edge[directed] node[pos=0.25,right] {\small$D_{m,20}$} (g3);
      \path (g2) edge[directed] node[pos=0.25,left] {\small$D_{n,1}$} (p1);
      \path (g2) edge[directed] node[pos=0.5,right] {\small$D_{n,2}$} (p2);
      \path (g2) edge[directed] node[pos=0.25,right] {\small$D_{n,100}$} (p3);
      \draw [-latex] (3,-3) -- (3,0.2) node[left] {};
      \draw [-latex] (3,-3) -- (6.2,-3) node[below] {};
      \fill[color=gray!20] (3,-3) -- (3.05,-2.7) -- (3.1,-2.4) -- 
            (3.15,-2.1) -- (3.25,-1.8) -- (3.35,-1.5) -- 
            (3.6,-1.2) -- (3.7,-0.9) -- (4.1,-0.6) -- 
            (4.6,-0.3) -- (5.5,0) -- (6,0) -- (6,-3) -- cycle;
     \foreach \Point in {(3,-3), (3.05,-2.7), (3.1,-2.4), 
            (3.15,-2.1), (3.25,-1.8), (3.35,-1.5), 
            (3.6,-1.2), (3.7,-0.9), (4.1,-0.6), 
            (4.6,-0.3), (5.5,0), (6,0)}{
    \fill \Point circle[radius=1pt];
        }
      \draw (3,-3) -- (3.05,-3) -- (3.05,-2.7) -- (3.1,-2.7) -- (3.1,-2.4) -- (3.15,-2.4) --
            (3.15,-2.1) -- (3.25,-2.1) -- (3.25,-1.8) -- (3.35,-1.8) -- (3.35,-1.5) -- (3.6,-1.5) --
            (3.6,-1.2) -- (3.7,-1.2) -- (3.7,-0.9) -- (4.1,-0.9) -- (4.1,-0.6) -- (4.6,-0.6) --
            (4.6,-0.3) -- (5.5,-0.3) -- (5.5,0) -- (6,0);
        \draw[dotted,semithick] (3,-3) -- (6,0);
        \draw[dotted,semithick] (3,0) -- (6,0);
      \node at (4.8,-1.8) {AUC};
      \node at (4.5,-3.5) {prob.-prob. plot};
      \draw [decorate,decoration={brace,amplitude=3pt,mirror},semithick] (-2,-3.3) -- (1,-3.3);
      \draw [-latex,semithick] (-0.5,-3.4) -- (-0.5,-3.5) -- (3.5,-3.5);
    \end{tikzpicture}}
    \caption{An illustration of the double simulation scheme for the simulation study and an illustration of the AUC metric.}
    \label{fig:double_sim}
\end{figure}
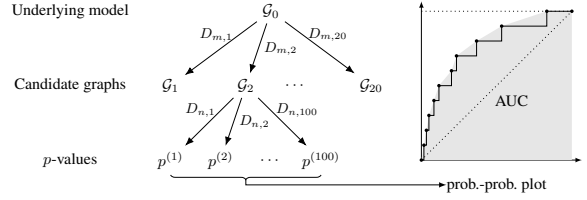

For each candidate graph $\g$ and each sample size $n\in\{50,100,200,400\}$ we do the following procedure.
We sample an additional $100$ data sets with sample size $n$ from the corresponding true linear structural equation model. Given these data sets, the pair $(X,Y)$ and the candidate graph $\g$, we apply Algorithm~\ref{alg:test} using both strategies for graphs with \(10\) or \(15\) nodes. To measure the performance of our testing procedure we then compute the area under the curve (AUC) of the probability-probability plot of the \(100\) $p$-values against the uniform distribution. This means that in total we obtain $8$ AUCs for each candidate graph, i.e., $4$ for each testing strategy and $2$ for each sample size. Figure~\ref{fig:double_sim} gives an illustration of the layered simulation scheme. We give further details for the design of the simulation study in Section \ref{app:setup} of the Supplementary Materials.

Figure~\ref{fig:double-sim} is an ensemble of violin plots of the AUCs from the simulation study. As we have access to the true graph, we can decide for each candidate graph and testing strategy, whether the null hypothesis for the test, i.e. $H_0$, is true or false and plot these cases separately. Figure~\ref{fig:double-sim} shows that as the sample size of the data set used for our testing procedure increases, the AUCs in the cases where the null hypothesis is true become centred around $0.5$. This indicates that both strategies control the type-I error rate asymptotically.
There are, however, very large and small AUCs when we use the strategy $S=\mathrm{All}$ with small sample sizes and for the more accurate candidate graphs. This is likely due to the rank estimation step required for this strategy and indicates that as expected the strategy $\mathrm{Min}+$ is more stable than $\mathrm{All}$.

\begin{figure}[t!]
	\centering
	\scalebox{0.45}{
	\includegraphics{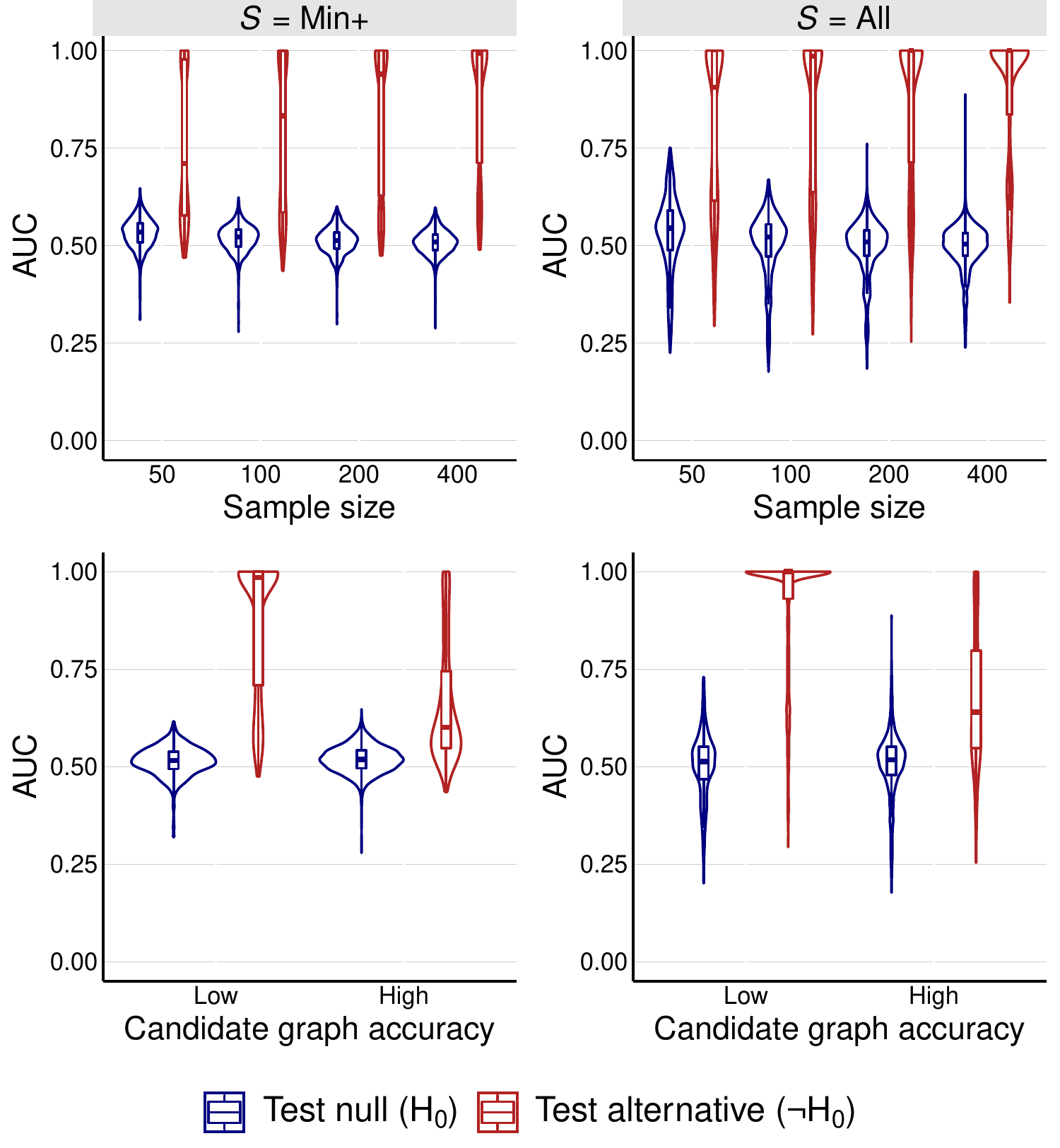}
	}
	\caption{Violin plots (layered with boxplots) of the areas under the curve (AUC) from the simulation study.
	The AUCs are grouped by sample size for the testing procedure (first row) and the expected accuracy of the candidate graph (second row).}
	\label{fig:double-sim}
\end{figure}

In the cases where the null hypothesis is false, i.e., the candidate graph contains a mistake that the test can detect, the AUCs have a cluster close to $1$ which is especially pronounced for the larger sample sizes. This indicates that our testing procedure has good power in many cases. Unsurprisingly, the AUCs are smaller for the candidate graphs with fewer errors, since it is more difficult to detect that there is a mistake in an almost correct graph than in a glaringly incorrect one. Nonetheless, the AUCs remain respectable and there continue to be AUCs close to $1$.
In general, the AUCs for strategy $\mathrm{All}$ are larger than those for $\mathrm{Min}+$, although this gain is obtained at the price of a loss in stability.

\begin{table}[ht!]
    \centering
    {\scriptsize
    \begin{tabular}{cccccc}
    \toprule
        Cand. graph & & \multicolumn{2}{c}{Null (\(H_{0}\))} & \multicolumn{2}{c}{Alternative ($\neg H_0$)} \\
        \cmidrule(lr){3-4} \cmidrule(lr){5-6}
        accuracy & $n$ & $S=\mathrm{Min+}$ & $S=\mathrm{All}$ & $S=\mathrm{Min+}$ & $S=\mathrm{All}$ \\
        \midrule
        \multirow{4}{*}{Low} & 50 & 0.0753 & 0.0903 & 0.5570 & 0.7396\\
        & 100 & 0.0634 & 0.0626 & 0.6352 & 0.7880\\
        & 200 & 0.0540 & 0.0510 & 0.7132 & 0.8341\\
        & 400 & 0.0494 & 0.0468 & 0.7887 & 0.8812\\
        \midrule
        \multirow{4}{*}{High} & 50 & 0.0781 & 0.0887 & 0.1543 & 0.1697\\
        & 100 & 0.0631 & 0.0585 & 0.2026 & 0.2094\\
        & 200 & 0.0559 & 0.0499 & 0.2838 & 0.3010\\
        & 400 & 0.0540 & 0.0471 & 0.3838 & 0.4152\\
        \bottomrule
    \end{tabular}
    }
    \caption{Proportion of hypotheses rejected at level $0.05$ in the simulation study.}
    \label{tab:my_label}
\end{table}

The AUCs we consider do not capture the behaviour of our testing procedure fully, so we also calculate the proportion of tests rejected at level $0.05$ among all tests performed in the simulation study as an additional metric. The results are given in Table \ref{tab:my_label}. They indicate that for both strategies our testing procedure controls the type-I error rate asymptotically and at the same time has good power for the alternative. 

Note that for conciseness we have only analysed the performance of our procedure for testing the test null $H_0$ and not the stricter \(H_{0}^{*}\). However, since $H_0$ is in fact the null-hypothesis our procedure is formally testing, the performance does not differ meaningfully between the two cases that (i) \(H_{0}^{*}\) holds and that (ii) \(H_{0}^{*}\) does not hold but \(H_{0}\) does. We verify this in Section \ref{sec:sim} of the Supplementary Materials. 
We also investigate how often the problematic case that \(H_{0}\) holds but \(H_{0}^{*}\) does not occur in our simulation study, i.e., the testing procedure has no power to detect a meaningful mistake in the candidate graph: it never occurs in more than $15 \%$ of the cases where \(H_{0}\) holds although the actual percentage is much lower for some settings of our simulation study (see Table~\ref{tab:double-sim} in the Supplementary Materials).

We also investigate the performance of our testing procedure in an additional simulation study with graphs of size \(20\), \(40\) and \(80\). Here, we only consider the \(\mathrm{Min}+\) strategy as the \(\mathrm{All}\) is too computationally expensive.
Due to space constraints we provide the results in Section~\ref{app:sim} of the Supplementary Materials, but they do not differ meaningfully from the results for the smaller graphs.

\section{Real data example}\label{sec:RD}
We apply our testing procedure to the single cell data collected for the investigation of human primary naïve CD$4^{+}$ T cell signalling networks by \citet{sachs2005causal}.
This data set consists of measurements from a total of $9$ experimental conditions.
We will only use the data from the observational regime, which corresponds to the experimental setting with reagent anti-CD3/CD28.
This subset of the data consists of $853$ measurements of $11$ phosphorylated proteins and phospholipids.
The observational data is thought to be consistent with the conventionally accepted molecular interaction network (also called the consensus graph, $\g_{\mathrm{Consensus}}$, Figure~\ref{fig:sc_graph} left). We use alternative graph proposed in \citet{sachs2005causal} ($\g_{\mathrm{Sachs}}$, Figure~\ref{fig:sc_graph} right) to evaluate the results of the analysis.

We consider $\g_{\mathrm{Consensus}}$ as the candidate graph $\g$. We extract all pairs of nodes $(X,Y)$ that satisfy $Y\in\mathrm{de}(X,\g)$. There are $36$ such node pairs in the graph.
For every pair, we apply our testing procedure with strategy $\mathrm{All}$ to the log-transformed and centred observational data.
After a Bonferroni correction only the $p$-values for the pairs (PKA, Erk) and (PKA, Akt) are significant at $0.05$ level ($1.19\times 10^{-14}$ and $4.91\times 10^{-14}$).

\begin{figure}[t!]
\centering
	\scalebox{0.5}{
	\input{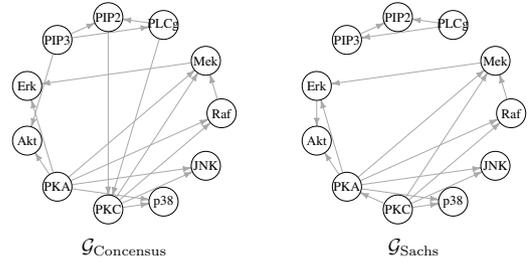}
	}
	\caption{Causal DAGs representing intracellular signalling network among human primary naïve CD$4^{+}$ T cells.}
	\label{fig:sc_graph}
\end{figure} 

We now take a closer look at these two node pairs.
The collection $\mathcal{Z}$ of all valid adjustment sets relative to (PKA, Erk) in the consensus graph consists of $419$ sets.
The test rejects the null hypothesis that these adjustment sets lead to estimates of the same quantity.
To illustrate a potential error in $\g_{\mathrm{Concensus}}$, we consider the valid adjustment sets $\emptyset$ and \{PLCg, PIP2, PIP3, Akt, PKC, p38, JNK\}.
If we consider the alternative graph $\g_{\mathrm{Sachs}}$ as a more appropriate representation of the true data generating mechanism, the rejection is justifiable. The covariate Akt is a forbidden node in $\g_{\mathrm{Sachs}}$ because it opens a collider path PKA $\to$ Akt $\gets$ Erk. On the other hand, the empty set is a valid adjustment set also in $\g_{\mathrm{Sachs}}$. A similar argument applies to the the pair (PKA, Akt).
The collection $\mathcal{Z}$ relative to this node pair also has a size of $419$, among which we can look at adjustment sets $\emptyset$ and \{Raf, Mek, PLCg, PIP2, PIP3, Erk, PKC, p38, JNK\}. Using Erk is problematic as it is a forbidden node in $\g_{\mathrm{Sachs}}$ because it blocks the causal path PKA $\to$ Erk $\to$ Akt. This indicates that in both cases our testing procedure is detecting a mistake in the consensus graph.

Our testing procedure produces rank estimates of $\bDelta_{\mathcal{Z}}$ mostly at $1$ ($30$ out of $36$ cases), even though the size of $\mathcal{Z}$ goes up to $419$. This illustrates how a large number of adjustment sets does not necessarily mean a large number of effective over-identifying constraints on the total effect for the test. It is unsurprising that our testing procedure with the $\mathrm{Min}+$ strategy detects the same two pairs of nodes as problematic (p-values $7.04\times 10^{-15}$ and $7.35\times 10^{-15})$.

\section{Conclusion and discussion}\label{sec:diss}
In this paper, we propose a robustness test that checks whether it is reasonable to use a candidate causal graph to estimate a total effect of interest with covariate adjustment. This is a useful model validation tool for practitioners who wish to estimate a total effect with covariate adjustment and rely on causal graphs obtained from domain knowledge.

We develop our testing procedure assuming that the candidate graph is a DAG. In applications with unmeasured confounding between the covariates, it is more natural to assume that the candidate graph is an acyclic directed mixed graph (ADMG) with bi-directed edges that represent error correlations induced by the presence of unmeasured confounding. If the candidate ADMG contains at least two valid adjustment sets, it is also possible to apply our testing procedure in this setting with one limitation. The set $\mathrm{nonforb}(X,Y,\g)$ may not be a valid adjustment set and as a result the strategy $\mathrm{Min}+$ fails. We believe it is possible to adapt $\mathrm{Min}+$ to an ADMG by replacing $\mathrm{nonforb}(X,Y,\g)$ with a suitable alternative large valid adjustment set but we leave this for future research.

Another interesting idea for future research is that in general, given a valid adjustment set and a forbidden node, adding the node to the set should change the limit of the resulting estimator. It may be possible to exploit this in order to devise a testing procedure similar to the one proposed in this paper but that also exploits the information contained in the forbidden nodes of the candidate causal graph.

\begin{acknowledgements} 
    We thank Milan Kuzmanovic for proposing the idea for Lemma \ref{lem:minimal}.
    We also thank Vi Thanh Pham, Nicola Gnecco and Jonas Peters for feedback and insightful discussions. LH was supported by a research grant (18968) from VILLUM FONDEN.
\end{acknowledgements}

\bibliography{su_509.bib}

\begin{thebibliography}{40}
\providecommand{\natexlab}[1]{#1}
\providecommand{\url}[1]{\texttt{#1}}
\expandafter\ifx\csname urlstyle\endcsname\relax
  \providecommand{\doi}[1]{doi: #1}\else
  \providecommand{\doi}{doi: \begingroup \urlstyle{rm}\Url}\fi

\bibitem[Andrews(1987)]{andrews1987asymptotic}
Donald~WK Andrews.
\newblock Asymptotic results for generalized wald tests.
\newblock \emph{Econometric Theory}, 3\penalty0 (3):\penalty0 348--358, 1987.

\bibitem[Bollen(1989)]{SEM1989}
Kenneth~A. Bollen.
\newblock \emph{Structural Equations with Latent Variables}.
\newblock Wiley New York, 1989.

\bibitem[Bollen and Ting(1993)]{bollen1993confirmatory}
Kenneth~A Bollen and Kwok-fai Ting.
\newblock Confirmatory tetrad analysis.
\newblock \emph{Sociological methodology}, pages 147--175, 1993.

\bibitem[Buja et~al.(2019)Buja, Brown, Berk, George, Pitkin, Traskin, Zhang,
  Zhao, et~al.]{buja2019models}
Andreas Buja, Lawrence Brown, Richard Berk, Edward George, Emil Pitkin, Mikhail
  Traskin, Kai Zhang, Linda Zhao, et~al.
\newblock Models as approximations {I}: Consequences illustrated with linear
  regression.
\newblock \emph{Statistical Science}, 34\penalty0 (4):\penalty0 523--544, 2019.

\bibitem[Camba-M{\'e}ndez and Kapetanios(2009)]{camba2009statistical}
Gonzalo Camba-M{\'e}ndez and George Kapetanios.
\newblock Statistical tests and estimators of the rank of a matrix and their
  applications in econometric modelling.
\newblock \emph{Econometric Reviews}, 28\penalty0 (6):\penalty0 581--611, 2009.

\bibitem[Chen and Pearl(2015)]{chen2015exogeneity}
Bryant Chen and Judea Pearl.
\newblock Exogeneity and robustness.
\newblock Technical report, University of California, Los Angeles, 2015.

\bibitem[Cheng et~al.(2022)Cheng, Li, Liu, Yu, Le, and Liu]{cheng2022toward}
Debo Cheng, Jiuyong Li, Lin Liu, Kui Yu, Thuc~Duy Le, and Jixue Liu.
\newblock Toward unique and unbiased causal effect estimation from data with
  hidden variables.
\newblock \emph{IEEE Transactions on Neural Networks and Learning Systems},
  2022.

\bibitem[Chickering(2002)]{Chickering02}
David~Maxwell Chickering.
\newblock Learning equivalence classes of {B}ayesian-network structures.
\newblock \emph{Journal of Machine Learning Research}, 2:\penalty0 445--498,
  2002.

\bibitem[Cragg and Donald(1997)]{cragg1997inferring}
John~G Cragg and Stephen~G Donald.
\newblock Inferring the rank of a matrix.
\newblock \emph{Journal of econometrics}, 76\penalty0 (1-2):\penalty0 223--250,
  1997.

\bibitem[Dikova et~al.(2019)Dikova, Panibratov, and
  Veselova]{dikova2019investment}
Desislava Dikova, Andrei Panibratov, and Anna Veselova.
\newblock Investment motives, ownership advantages and institutional distance:
  An examination of russian cross-border acquisitions.
\newblock \emph{International Business Review}, 28\penalty0 (4):\penalty0
  625--637, 2019.

\bibitem[Donald et~al.(2007)Donald, Fortuna, and Pipiras]{donald2007rank}
Stephen~G Donald, Nat{\'e}rcia Fortuna, and Vladas Pipiras.
\newblock On rank estimation in symmetric matrices: the case of indefinite
  matrix estimators.
\newblock \emph{Econometric Theory}, pages 1217--1232, 2007.

\bibitem[Ekstrøm(2020)]{ekstrom2020mess}
Claus~Thorn Ekstrøm.
\newblock \emph{MESS: Miscellaneous Esoteric Statistical Scripts}, 2020.
\newblock URL \url{https://CRAN.R-project.org/package=MESS}.

\bibitem[Entner et~al.(2013)Entner, Hoyer, and Spirtes]{entner2013data}
Doris Entner, Patrik Hoyer, and Peter Spirtes.
\newblock Data-driven covariate selection for nonparametric estimation of
  causal effects.
\newblock In \emph{Artificial Intelligence and Statistics}, pages 256--264.
  PMLR, 2013.

\bibitem[Frank et~al.(2013)Frank, Maroulis, Duong, and Kelcey]{frank2013would}
Kenneth~A Frank, Spiro~J Maroulis, Minh~Q Duong, and Benjamin~M Kelcey.
\newblock What would it take to change an inference? using rubin’s causal
  model to interpret the robustness of causal inferences.
\newblock \emph{Educational Evaluation and Policy Analysis}, 35\penalty0
  (4):\penalty0 437--460, 2013.

\bibitem[Gultchin et~al.(2020)Gultchin, Kusner, Kanade, and
  Silva]{gultchin2020differentiable}
Limor Gultchin, Matt Kusner, Varun Kanade, and Ricardo Silva.
\newblock Differentiable causal backdoor discovery.
\newblock In \emph{International Conference on Artificial Intelligence and
  Statistics}, pages 3970--3979. PMLR, 2020.

\bibitem[Henckel et~al.(2022)Henckel, Perković, and
  Maathuis]{henckel2019graphical}
Leonard Henckel, Emilija Perković, and Marloes~H. Maathuis.
\newblock Graphical criteria for efficient total effect estimation via
  adjustment in causal linear models.
\newblock \emph{Journal of the Royal Statistical Society: Series B (Statistical
  Methodology)}, 84\penalty0 (2):\penalty0 579--599, 2022.

\bibitem[Kalisch et~al.(2012)Kalisch, M\"achler, Colombo, Maathuis, and
  B\"uhlmann]{kalisch2012causal}
Markus Kalisch, Martin M\"achler, Diego Colombo, Marloes~H. Maathuis, and Peter
  B\"uhlmann.
\newblock Causal inference using graphical models with the {R} package {pcalg}.
\newblock \emph{Journal of Statistical Software}, 47\penalty0 (11):\penalty0
  1--26, 2012.

\bibitem[Kirby and Bollen(2009)]{kirby2009}
James~B Kirby and Kenneth~A Bollen.
\newblock 10. using instrumental variable tests to evaluate model specification
  in latent variable structural equation models.
\newblock \emph{Sociological Methodology}, 39\penalty0 (1):\penalty0 327--355,
  2009.

\bibitem[Kuroki and Miyakawa(2003)]{kuroki2003covariate}
Manabu Kuroki and Masami Miyakawa.
\newblock Covariate selection for estimating the causal effect of control plans
  by using causal diagrams.
\newblock \emph{Journal of the Royal Statistical Society. Series B. Statistical
  Methodology}, 65\penalty0 (1):\penalty0 209--222, 2003.

\bibitem[Lu and White(2014)]{lu2014robustness}
Xun Lu and Halbert White.
\newblock Robustness checks and robustness tests in applied economics.
\newblock \emph{Journal of econometrics}, 178:\penalty0 194--206, 2014.

\bibitem[Nandy et~al.(2017)Nandy, Maathuis, and
  Richardson]{nandy2017estimating}
Preetam Nandy, Marloes~H Maathuis, and Thomas~S Richardson.
\newblock Estimating the effect of joint interventions from observational data
  in sparse high-dimensional settings.
\newblock \emph{Annals of Statistics}, 45\penalty0 (2):\penalty0 647--674,
  2017.

\bibitem[Oster(2019)]{oster2019unobservable}
Emily Oster.
\newblock Unobservable selection and coefficient stability: Theory and
  evidence.
\newblock \emph{Journal of Business \& Economic Statistics}, 37\penalty0
  (2):\penalty0 187--204, 2019.

\bibitem[Pearl(2009)]{pearl2009causality}
Judea Pearl.
\newblock \emph{Causality}.
\newblock Cambridge University Press, second edition, 2009.

\bibitem[Perkovi{\'c} et~al.(2018)Perkovi{\'c}, Textor, Kalisch, and
  Maathuis]{perkovic16}
Emilija Perkovi{\'c}, Johannes Textor, Markus Kalisch, and Marloes~H Maathuis.
\newblock Complete graphical characterization and construction of adjustment
  sets in {M}arkov equivalence classes of ancestral graphs.
\newblock \emph{Journal of Machine Learning Research}, 18\penalty0
  (220):\penalty0 1--62, 2018.

\bibitem[Ratsimalahelo(2003)]{ratsimalahelo2003strongly}
Zaka Ratsimalahelo.
\newblock Strongly consistent determination of the rank of matrix.
\newblock Technical report, EERI Research Paper Series, 2003.

\bibitem[Rotnitzky and Smucler(2020)]{rotnitzky2020efficient}
Andrea Rotnitzky and Ezequiel Smucler.
\newblock Efficient adjustment sets for population average causal treatment
  effect estimation in graphical models.
\newblock \emph{Journal of Machine Learning Research}, 21:\penalty0 188--1,
  2020.

\bibitem[Sachs et~al.(2005)Sachs, Perez, Pe'er, Lauffenburger, and
  Nolan]{sachs2005causal}
Karen Sachs, Omar Perez, Dana Pe'er, Douglas~A Lauffenburger, and Garry~P
  Nolan.
\newblock Causal protein-signaling networks derived from multiparameter
  single-cell data.
\newblock \emph{Science}, 308\penalty0 (5721):\penalty0 523--529, 2005.

\bibitem[Schlegel et~al.(2021)Schlegel, Pfister, Harhoff, and
  Backes-Gellner]{schlegel2021innovation}
Tobias Schlegel, Curdin Pfister, Dietmar Harhoff, and Uschi Backes-Gellner.
\newblock Innovation effects of universities of applied sciences: an assessment
  of regional heterogeneity.
\newblock \emph{The Journal of Technology Transfer}, pages 1--56, 2021.

\bibitem[Shah et~al.(2021)Shah, Shanmugam, and Ahuja]{shah2021finding}
Abhin Shah, Karthikeyan Shanmugam, and Kartik Ahuja.
\newblock Finding valid adjustments under non-ignorability with minimal dag
  knowledge.
\newblock \emph{arXiv preprint arXiv:2106.11560}, 2021.

\bibitem[Shimizu(2014)]{shimizu2014lingam}
Shohei Shimizu.
\newblock Lingam: Non-gaussian methods for estimating causal structures.
\newblock \emph{Behaviormetrika}, 41\penalty0 (1):\penalty0 65--98, 2014.

\bibitem[Shimizu et~al.(2006)Shimizu, Hoyer, Hyv{\"a}rinen, and
  Kerminen]{shimizu2006linear}
Shohei Shimizu, Patrik~O Hoyer, Aapo Hyv{\"a}rinen, and Antti Kerminen.
\newblock A linear non-gaussian acyclic model for causal discovery.
\newblock \emph{Journal of Machine Learning Research}, 7\penalty0
  (Oct):\penalty0 2003--2030, 2006.

\bibitem[Shpitser et~al.(2010)Shpitser, VanderWeele, and
  Robins]{shpitser2010validity}
Ilya Shpitser, Tyler VanderWeele, and James Robins.
\newblock On the validity of covariate adjustment for estimating causal
  effects.
\newblock In \emph{Proceedings of the Twenty-Sixth Annual Conference on
  Uncertainty in Artificial Intelligence (UAI-10)}, pages 527--536, Corvallis,
  Oregon, 2010. AUAI Press.

\bibitem[Spirtes(1995)]{spirtes1995directed}
Peter Spirtes.
\newblock Directed cyclic graphical representations of feedback models.
\newblock In \emph{Proceedings of the Eleventh Conference on Uncertainty in
  Artificial Intelligence (UAI-95)}, pages 491--498, San Francisco, USA, 1995.
  Morgan Kaufmann Publishers Inc.

\bibitem[Spirtes et~al.(2000)Spirtes, Glymour, and
  Scheines]{spirtes2000causation}
Peter Spirtes, Clark Glymour, and Richard Scheines.
\newblock \emph{Causation, Prediction, and Search}.
\newblock MIT Press, Cambridge, MA, second edition, 2000.

\bibitem[Stewart(1998)]{stewart1998perturbation}
Gilbert~W Stewart.
\newblock Perturbation theory for the singular value decomposition.
\newblock Technical report, University of Maryland, 1998.

\bibitem[Van~der Zander et~al.(2014)Van~der Zander, Liskiewicz, and
  Textor]{van2014constructing}
Benito Van~der Zander, Maciej Liskiewicz, and Johannes Textor.
\newblock Constructing separators and adjustment sets in ancestral graphs.
\newblock In \emph{Proceedings of the UAI 2014 Conference on Causal Inference:
  Learning and Prediction}, CI'14, pages 11--24, 2014.

\bibitem[Walter and Tiemeier(2009)]{walter2009variable}
Stefan Walter and Henning Tiemeier.
\newblock Variable selection: current practice in epidemiological studies.
\newblock \emph{European journal of epidemiology}, 24\penalty0 (12):\penalty0
  733--736, 2009.

\bibitem[Witte et~al.(2020)Witte, Henckel, Maathuis, and
  Didelez]{witte2020efficient}
Janine Witte, Leonard Henckel, Marloes~H. Maathuis, and Vanessa Didelez.
\newblock On efficient adjustment in causal graphs.
\newblock \emph{Journal of Machine Learning Research}, 21\penalty0
  (246):\penalty0 1--45, 2020.

\bibitem[Wright(1934)]{wright1934method}
Sewall Wright.
\newblock The method of path coefficients.
\newblock \emph{The Annals of Mathematical Statistics}, 5\penalty0
  (3):\penalty0 161--215, 1934.

\bibitem[Yigezu and El-Shater(2021)]{yigezu2021socio}
Yigezu~A Yigezu and Tamer El-Shater.
\newblock Socio-economic impacts of zero and reduced tillage in wheat fields of
  the moroccan drylands.
\newblock \emph{Agricultural Economics}, 2021.

\end{thebibliography}

\newpage

\onecolumn
\appendix
{\Large\textbf{Supplementary Materials}}

\section{Graphical preliminaries}
\label{app:graph}

\paragraph{Graphs}
A graph $\mathcal{G}=(\bV,\bE)$ is a tuple of a node set $\bV$ and an edge set $\bE$. We consider simple directed graphs where there is at most one edge between any pair of vertices and the edges are of the form $\rightarrow$.
\paragraph{Walks, paths and cycles}
  Two vertices are adjacent if there is an edge between them.
  A \emph{walk} between $X$ and $Y$ is a sequence of vertices $(X,\dots,Y)$ such that successive vertices are adjacent.
  A \emph{path} between $X$ and $Y$ is a walk between $X$ and $Y$ where all vertices are distinct.
  A \emph{directed path} from $X$ to $Y$ is a path between $X$ and $Y$ where all the edges point towards $Y$.
  A \emph{cycle} is a path $(X,Z,\dots,Y)$ plus an edge between $Y$ and $X$.
  A \emph{directed cycle} is a directed path $(X,Z,\dots,Y)$ from $X$ to $Y$ plus an edge $Y\to X$. Given a path $p=(V_1,\dots,V_k)$, let $p(V_i,V_j)$, $i<j$ denote the path segment from $V_i$ to $V_j$ and let $-p=(V_k,\dots,V_1)$. Given two paths $p=(V_1,\dots,V_k)$ and $q=(V_k,\dots,V_q)$, let $p \oplus q = (V_1,\dots,V_k,\dots,V_q)$. We call any node $V_i$ on a path $p=(V_1,\dots,V_k)$ such that $V_{i-1}\rightarrow V_i \leftarrow V_{i+1}$ a collider on $p$ and any node that is not a collider on $p$, a non-collider on $p$.

\paragraph{DAG}
  A \emph{directed acyclic graph} (DAG) is a directed graph without directed cycles.

\paragraph{Parents, children, ancestors and descendants}
  If $X\to Y$, then $X$ is a parent of $Y$ and $Y$ is a child of $X$.
  If there is a directed path from $X$ to $Y$, then $X$ is an ancestor of $Y$ and $Y$ is a descendant of $X$.
  Any node is an ancestor and a descendant of itself.
  For any node $X\in\bV$, the sets of parents, children, ancestors and descendants of $X$ in $\mathcal{G}$ are denoted by $\mathrm{pa}(X,\mathcal{G})$, $\mathrm{ch}(X, \mathrm{G})$, $\mathrm{an}(X, \mathrm{G})$ and $\mathrm{de}(X, \mathrm{G})$, respectively.
This definition applies disjunctively to sets of nodes. For example, the parents of the set of vertices $\bX$ are defined as $\mathrm{pa}(\bX,\mathcal{G})=\cup_{X\in\bX}\mathrm{pa}(X,\mathcal{G})$. The non-descendants of $\bX$ are $\mathrm{nonde}(\bX,\mathcal{G})=\bV\setminus\mathrm{de}(\bX,\mathcal{G})$.

\paragraph{$d$-separation}
  A path $p$ between $X$ and $Y$ is blocked by a set $\bZ$ if at least one of the following conditions holds:
  \begin{enumerate}[label=(\roman*)]
    \item There is a non-collider on $p$ that is in $\bZ$;
    \item There is a collider on $p$ such that neither itself nor any other of its descendants are in $\bZ$.
  \end{enumerate}
  A path that is not blocked is said to be open. If all paths between $X\in\bX$ and $Y\in\bY$ are blocked by $\bZ$, then $\bX$ and $\bY$ are \emph{$d$-separated} by $\bZ$, denoted by $\bX\perp_{\mathcal{G}}\bY\mid\bZ$. Otherwise, they are said to be \emph{$d$-connected} by $\bZ$.
  
  \paragraph{Faithfulness}
Consider a DAG $\g=(\bV,\bE)$ such that $\bV$ follows a linear structural equation model compatible with $\g$. If for all disjoint subsets $\bX,\bY$ and $\bZ$ of $\bV$ such that $\bX$ is independent of $\bY$ given $\bZ$, $\bX\perp_{\mathcal{G}}\bY\mid\bZ$ then we say that the distribution of $\bV$ is faithful to $\g$. 

\section{Proofs}
\subsection{Proof of Theorem~\ref{thm:existence}}
\begin{proof}
  Consider the set $\bZ_{i}$. By assumption, $\Forbb(X,Y,\mathcal{G}_0)\cap\bZ_{i}=\emptyset$ and, nonetheless, $\bZ_{i}$ is not a valid adjustment set. Thus, there must exist a non-causal path $p$ from $X$ to $Y$ in $\mathcal{G}_0$ that is open given $\bZ_{i}$. Suppose that $p$ is of the form $X \rightarrow C \leftarrow Y$. By assumption $Y \in \de(X,\g_0)$ and therefore $Y \in \f{\g_0}$ which in turn implies that $\de(Y,\g_0) \subseteq \f{\g_0}$. As a result, $C \in \f{\g_0}$. But as $\mathbf{Z}_i \cap \f{\g_0} = \emptyset$ it follows that $\de(C,\g_0) \cap \mathbf{Z}_i =\emptyset$, which contradicts our assumption that $p$ is open given $\mathbf{Z}_i$. We can therefore assume that $p$ is not of the form $X \rightarrow C \leftarrow Y$ which implies that $p$ must contain at least one non-collider. If every non-collider on $p$ is in $\Forbb(X,Y,\mathcal{G}_0)$, it follows that every node on $p$ is in $\Forbb(X,Y,\mathcal{G}_0)$. But this contradicts our assumption that $p$ is non-causal and open given $\bZ_{i}$. We can therefore conclude that $p$ contains at least one non-collider that is not in $\Forbb(X,Y,\mathcal{G}_0)$. But as $(\bV\setminus\Forbb(X,Y,\mathcal{G}_0))\subseteq\cup_{j=1}^{k}\bZ_{j}$ by assumption, $p$ must be blocked by some set $\bZ_{j}$.

  Consider the potential colliders $C_{1},\dots,C_{m}$ on $p$. As $p$ is open given $\bZ_{i}$ for each collider $C_{k}$, there must exist a causal path $q_{k}$ to some node in $\bZ_{i}$, where we choose $q_{k}$ to be the shortest possible such path. If any of the $q_{k}$ intersects, drop the longer of the two paths. If any $q_{k}$ contains $X$, replace $p$ with $-q_{k}(X,C_{k})\oplus p(C_{k},Y)$ and repeat our argument. Consider now the following linear structural equation: set all edge coefficients not on $p$ or our list of paths $q_{1},\dots,q_{m'}$ to $0$. The resulting model is clearly compatible with $\mathcal{G}$ but also to a pruned graph $\mathcal{G}'$ where we drop all edges with edge coefficient 0. Clearly, in $\mathcal{G}'$ the path $p$ is still open given $\bZ_{i}$ and closed given $\bZ_{j}$. Furthermore, $p$ is the only path from $X$ to $Y$ in $\mathcal{G}'$, and as a result, we can conclude that $\beta_{yx.\bz_{i}}\neq 0$ and $\beta_{yx.\bz_{j}}= 0$. We have therefore shown that there exists a linear structural equation model compatible with $\g$, such that $\beta_{yx.\mathbf{z}_i} - \beta_{yx.\mathbf{z}_j} \neq 0$. 

 Consider now the term $\beta_{yx.\mathbf{z}_i} - \beta_{yx.\mathbf{z}_j}$ as a function in the edge coefficients and error variances from the underlying linear structural equation model. By the same arguments as given in Section 13.3 of \citet{spirtes2000causation} the function $\beta_{yx.\mathbf{z}_i} - \beta_{yx.\mathbf{z}_j}$ is equivalent to a polynomial in the edge coefficients and error variances of the linear structural equation model. As we have shown that there exists one linear structural equation model such that $\beta_{yx.\mathbf{z}_i} - \beta_{yx.\mathbf{z}_j}\neq 0$, this polynomial is non-trivial. Our claim then follows from the fact that the zero set of non-trivial polynomials has Lebesgue measure $0$.
\end{proof}

\subsection{Proof of Lemma~\ref{lem:normal}}
\begin{Lemma}[Orthogonality between covariates and regression residual, {[\citealp{buja2019models}]}]
  In a least squares regression of $X$ on $\bZ$, the minimiser of the optimisation problem $\min_{\bbeta}\E(X-\bZ^{\top}\bbeta)^{2}$ is the population regression coefficient $\bbeta_{x\bz}=\bSigma_{\bz\bz}^{-1}\bSigma_{\bz x}$. The residual $\delta_{x\bz}=X-\bZ^{\top}\bbeta_{x\bz}$ is orthogonal to $\bZ$, i.e., $\E(\bZ\delta_{x\bz})=\mathbf{0}$.
  \label{lem:orthogonal}
\end{Lemma}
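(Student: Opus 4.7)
The plan is to derive the first-order optimality conditions for the quadratic objective $f(\bbeta) = \E(X-\bZ^{\top}\bbeta)^{2}$ and then verify orthogonality by direct substitution. First I would expand the square to obtain $f(\bbeta) = \E(X^{2}) - 2\bbeta^{\top}\E(\bZ X) + \bbeta^{\top}\E(\bZ\bZ^{\top})\bbeta$, which under the paper's convention of mean-zero random variables equals $\E(X^{2}) - 2\bbeta^{\top}\bSigma_{\bz x} + \bbeta^{\top}\bSigma_{\bz\bz}\bbeta$. This is a convex quadratic in $\bbeta$ since $\bSigma_{\bz\bz}$ is positive semidefinite.

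Next I would take the gradient with respect to $\bbeta$ and set it to zero, which yields the normal equations $\bSigma_{\bz\bz}\bbeta = \bSigma_{\bz x}$. Assuming $\bSigma_{\bz\bz}$ is invertible (a standard regularity condition used throughout the paper), the unique minimiser is $\bbeta_{x\bz} = \bSigma_{\bz\bz}^{-1}\bSigma_{\bz x}$, and the Hessian $2\bSigma_{\bz\bz}$ is positive definite, so this is a strict global minimum.

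For the orthogonality claim I would substitute the optimal $\bbeta_{x\bz}$ into the residual and compute directly: $\E(\bZ\delta_{x\bz}) = \E(\bZ X) - \E(\bZ\bZ^{\top})\bbeta_{x\bz} = \bSigma_{\bz x} - \bSigma_{\bz\bz}\bSigma_{\bz\bz}^{-1}\bSigma_{\bz x} = \mathbf{0}$, which is merely a restatement of the normal equations.

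There are no real obstacles here — this is the classical derivation of the OLS normal equations and is included in the appendix only for reference. The one mild subtlety is that one must either assume mean-zero random variables (consistent with the zero-mean error assumption in the linear structural equation model) or include an explicit intercept and note that centring recovers the same formula, so that interpreting $\bSigma_{\bz\bz}$ and $\bSigma_{\bz x}$ as covariance matrices is unambiguous.
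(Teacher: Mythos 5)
Your derivation is correct: expanding the quadratic objective, setting the gradient to zero to obtain the normal equations $\bSigma_{\bz\bz}\bbeta = \bSigma_{\bz x}$, and substituting back to verify $\E(\bZ\delta_{x\bz}) = \bSigma_{\bz x} - \bSigma_{\bz\bz}\bSigma_{\bz\bz}^{-1}\bSigma_{\bz x} = \mathbf{0}$ is exactly the standard argument. The paper itself gives no proof of this lemma --- it is stated as a known result with a citation to Buja et al.\ (2019) --- so your write-up supplies the omitted classical derivation rather than diverging from anything in the paper. Your closing remark about the mean-zero convention (or equivalently an explicit intercept with centring) is the right caveat to make the covariance-matrix notation unambiguous, and the invertibility of $\bSigma_{\bz\bz}$ you assume is consistent with the paper's implicit use of a non-singular covariance matrix of $\bV$ elsewhere in the appendix.
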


Unless specified otherwise, serif letters denote random samples for scalar random variables. For example, $\Xsf=(X_{1},X_{2},\dots,X_{n})^{\top}$ is an $n$-dimensional vector containing $n$ i.i.d. copies of $X$.
Bold serif letters denote random samples for vector random variables.
For example, $\bZsf=(\bZ_{1},\dots,\bZ_{n})^{\top}$ is an $n\times p$ matrix where each row is i.i.d. as $\bZ\in\mathbb{R}^{p}$.

\begin{Lemma}[Regression error representation of OLS coefficients, {[\citealp{buja2019models}]}]
  The difference between sample and population regression coefficient of $X$ from regressing $Y$ on $\bZ'=(X,\bZ^{\top})^{\top}$ is
  \[
    \hat{\beta}_{yx.\bz}-\beta_{yx.\bz}=\frac{\langle\br_{x\bz},\bdelta_{y\bz'}\rangle}{\|\br_{x\bz}\|^{2}},
  \]
  where $\br_{x\bz}=\Xsf-\bZsf\hat{\bbeta}_{x\bz}$ is the vector of sample residuals from regressing $\Xsf$ on $\bZsf$.
  \label{fct:partial}
\end{Lemma}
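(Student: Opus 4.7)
The statement is a finite-sample algebraic identity rather than an asymptotic one, so the plan is to derive it purely from linear algebra, essentially via the Frisch--Waugh--Lovell (FWL) theorem applied to the regression of $Y$ on $\bZ'=(X,\bZ^{\top})^{\top}$. I would not invoke any probabilistic machinery; the identity holds path-wise for each realisation of the data, provided $\br_{x\bz}\neq\mathbf{0}$ so the ratio is well-defined.

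First, I would apply FWL to the sample regression. Let $\Ysf$, $\Xsf$, $\bZsf$ denote the stacked samples. Partialling $\bZsf$ out of both $\Xsf$ and $\Ysf$ yields the sample residuals $\br_{x\bz}=\Xsf-\bZsf\hat{\bbeta}_{x\bz}$ and $\br_{y\bz}=\Ysf-\bZsf\hat{\bbeta}_{y\bz}$, and FWL gives
\[
\hat{\beta}_{yx.\bz}=\frac{\br_{x\bz}^{\top}\br_{y\bz}}{\br_{x\bz}^{\top}\br_{x\bz}}=\frac{\br_{x\bz}^{\top}\Ysf}{\|\br_{x\bz}\|^{2}},
\]
where the second equality uses the sample orthogonality $\br_{x\bz}^{\top}\bZsf=\mathbf{0}^{\top}$, a direct consequence of the normal equations for the regression of $\Xsf$ on $\bZsf$.

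Second, I would insert the population-level decomposition of $Y$. By definition of the population coefficients and residuals for the regression of $Y$ on $\bZ'$, we have $Y=\beta_{yx.\bz}X+\bbeta_{y\bz.x}^{\top}\bZ+\delta_{y\bz'}$, so in stacked form $\Ysf=\beta_{yx.\bz}\Xsf+\bZsf\bbeta_{y\bz.x}+\bdelta_{y\bz'}$. Substituting this into the numerator above and again using $\br_{x\bz}^{\top}\bZsf=\mathbf{0}^{\top}$ to kill the middle term yields
\[
\hat{\beta}_{yx.\bz}=\frac{\beta_{yx.\bz}\,\br_{x\bz}^{\top}\Xsf+\br_{x\bz}^{\top}\bdelta_{y\bz'}}{\|\br_{x\bz}\|^{2}}.
\]
The final algebraic step is to note that $\br_{x\bz}^{\top}\Xsf=\br_{x\bz}^{\top}(\bZsf\hat{\bbeta}_{x\bz}+\br_{x\bz})=\|\br_{x\bz}\|^{2}$ by the same sample orthogonality, so the first term collapses to $\beta_{yx.\bz}$, and subtracting it from both sides yields the claimed identity.

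There is no real obstacle here; the only items worth flagging are (i) invoking FWL correctly, i.e., that the OLS coefficient on $X$ in the full regression equals the univariate OLS slope on $\br_{x\bz}$, and (ii) being careful to distinguish the sample residuals $\br_{x\bz}$ (which are orthogonal to $\bZsf$ by construction) from the population residuals $\delta_{x\bz}$ (which are orthogonal to $\bZ$ only in expectation, by Lemma~\ref{lem:orthogonal}). The proof uses only the former orthogonality, which is a deterministic consequence of the normal equations, so the result holds for every sample with $\br_{x\bz}\neq\mathbf{0}$ and requires no distributional assumptions.
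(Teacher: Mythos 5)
Your proof is correct, and it is essentially the canonical argument: the paper itself does not prove this lemma but simply cites \citet{buja2019models}, and the derivation given there (and implicitly relied upon in the proof of Lemma~\ref{lem:normal}) is exactly your Frisch--Waugh--Lovell computation — partial $\bZsf$ out, substitute the population decomposition $\Ysf=\beta_{yx.\bz}\Xsf+\bZsf\bbeta_{y\bz.x}+\bdelta_{y\bz'}$, and use the exact sample orthogonality $\br_{x\bz}^{\top}\bZsf=\mathbf{0}^{\top}$ twice. Your two flagged caveats (full column rank so that $\br_{x\bz}\neq\mathbf{0}$, and the distinction between exact sample orthogonality and population orthogonality in expectation) are precisely the right points to be careful about, so nothing is missing.
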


\begin{proof}[Proof of Lemma~\ref{lem:normal}]
  The proof is inspired by the results in Appendix E.5 of \citet{buja2019models}. We first observe from Lemma \ref{fct:partial} that for every set $\bZ_{i}$, $i=1,2,\dots,k$,
  \begin{equation}
    n^{1/2}(\hat{\beta}_{yx.\bz_{i}}-\beta_{yx.\bz_{i}}) = \dfrac{n^{-1/2}\langle\br_{x\bz_{i}},\bdelta_{y\bz_{i}'}\rangle}{n^{-1}\|\br_{x\bz_{i}}\|^{2}},
    \label{eqn:residual}
  \end{equation}
  where $\br_{x\bz_{i}}=\Xsf-\bZsf_{i}\hat{\bbeta}_{x\bz_{i}}$ is the sample residuals from regressing $\Xsf$ on $\bZsf_{i}$.
  
  \noindent\emph{Numerator of \eqref{eqn:residual}.}
  \begin{align*}
  n^{-1/2}\langle\br_{x\bz_{i}},\bdelta_{y\bz_{i}'}\rangle
    &= n^{-1/2}\langle\Xsf-\bZsf_{i}\hat{\bbeta}_{x\bz_{i}},\bdelta_{y\bz_{i}'}\rangle \\
    &= n^{-1/2}\langle\bdelta_{x\bz_{i}}-\bZsf_{i}(\hat{\bbeta}_{x\bz_{i}}-\bbeta_{x\bz_{i}}),\bdelta_{y\bz_{i}'}\rangle \\
    &= n^{-1/2}\langle\bdelta_{x\bz_{i}},\bdelta_{y\bz_{i}'}\rangle - n^{-1/2}\langle\bZsf_{i}(\hat{\bbeta}_{x\bz_{i}}-\bbeta_{x\bz_{i}}),\bdelta_{y\bz_{i}'}\rangle.
  \end{align*}
  For the second term on the last line it holds that 
  \begin{align*}
    n^{-1/2}\langle\bZsf_{i}(\hat{\bbeta}_{x\bz_{i}}-\bbeta_{x\bz_{i}}),\bdelta_{y\bz_{i}'}\rangle &= \left(n^{-1}\bdelta_{y\bz_{i}'}^{\top}\bZsf_{i}\right)\cdot n^{1/2}(\hat{\bbeta}_{x\bz_{i}}-\bbeta_{x\bz_{i}}) \\
    &= o_{p}(1)\cdot O_{p}(1)=o_{p}(1),
  \end{align*}
  since $\E(\delta_{y\bz_{i}'}\bZ_{i})=\mathbf{0}$ by Lemma~\ref{lem:orthogonal} and $n^{1/2}(\hat{\bbeta}_{x\bz_{i}}-\bbeta_{x\bz_{i}})$ converges in distribution to a multivariate normal random variable by the central limit theorem, which is appropriate since by assumption, the fourth moments of $\bV$ are finite.
  
  \noindent\emph{Denominator of \eqref{eqn:residual}.}

  Using the convention that the hat matrix $\mathbf{H}_{n}=\bZsf_{i}(\bZsf_{i}^{\top}\bZsf_{i})^{-1}\bZsf_{i}^{\top}$, the average squared sample residuals
  \begin{align*}
    n^{-1}\|\br_{x\bz_{i}}\|^{2} &= n^{-1}\Xsf^{\top}(\mathbf{I}-\mathbf{H}_{n})\Xsf \\
    &= n^{-1}\|\Xsf\|^{2}-\left(n^{-1}\Xsf^{\top}\bZsf_{i}\right)\left(n^{-1}\bZsf_{i}^{\top}\bZsf_{i}\right)^{-1}\left(n^{-1}\bZsf_{i}^{\top}\Xsf\right) \\
    &\overset{p}{\to} \E(X^{2})-\E(X\bZ_{i}^{\top})[\E(\bZ_{i}\bZ_{i}^{\top})]^{-1}\E(\bZ_{i} X) \\
    &= \E(X^{2})-\E(X\bZ_{i}^{\top}\bbeta_{x\bz_{i}}) \\
    &= \E(X-\bZ_{i}^{\top}\bbeta_{x\bz_{i}})^{2} = \E(\delta_{x\bz_{i}}^{2}).
  \end{align*}
  The second to last step follows because $\E[\bZ_{i}(X-\bZ_{i}^{\top}\bbeta_{x\bz_{i}})]=\E(\bZ_{i}\delta_{x\bz_{i}})=\mathbf{0}$ by Lemma~\ref{lem:orthogonal}.

  We are now ready to present the asymptotic joint normality of $\hat{\bbeta}_{yx.\mathcal{Z}}$.
  Since $\E(\delta_{x\bz_{i}}\delta_{y\bz_{i}})=\E[(X-\bZ_{i}^{\top}{\bbeta_{x\bz_{i}}})\delta_{y\bz_{i}'}]=0$, together with the fact that the fourth moments of $\bV$ are finite, we can apply the multivariate central limit theorem to conclude that
  \[
    \left(n^{-1/2}\langle\bdelta_{x\bz_{1}},\bdelta_{y\bz_{1}'}\rangle,\dots,n^{-1/2}\langle\bdelta_{x\bz_{k}},\bdelta_{y\bz_{k}'}\rangle\right)\overset{d}{\to}\mathrm{N}(\mathbf{0},\mathbf{\Psi})
  \]
  where the entries have the form $\mathbf{\Psi}_{ij}=\E(\delta_{x\bz_{i}}\delta_{y\bz_{i}'}\delta_{x\bz_{j}}\delta_{y\bz_{j}'})$ for all $1\leq i, j\leq k$.
  Therefore, the random vector
  \begin{align*}
    \begin{pmatrix}
      n^{-1/2}\langle\br_{x\bz_{1}},\bdelta_{y\bz_{1}'}\rangle \\
      n^{-1/2}\langle\br_{x\bz_{2}},\bdelta_{y\bz_{2}'}\rangle \\
      \vdots \\
      n^{-1/2}\langle\br_{x\bz_{k}},\bdelta_{y\bz_{k}'}\rangle
    \end{pmatrix} &=
    \begin{pmatrix}
      n^{-1/2}\langle\bdelta_{x\bz_{1}},\bdelta_{y\bz_{1}'}\rangle \\
      n^{-1/2}\langle\bdelta_{x\bz_{2}},\bdelta_{y\bz_{2}'}\rangle \\
      \vdots \\
      n^{-1/2}\langle\bdelta_{x\bz_{k}},\bdelta_{y\bz_{k}'}\rangle
    \end{pmatrix} -
    \begin{pmatrix}
      n^{-1/2}\langle\bZsf_{1}(\hat{\bbeta}_{x\bz_{1}}-\bbeta_{x\bz_{1}}),\bdelta_{y\bz_{1}'}\rangle \\
      n^{-1/2}\langle\bZsf_{2}(\hat{\bbeta}_{x\bz_{2}}-\bbeta_{x\bz_{2}}),\bdelta_{y\bz_{2}'}\rangle \\
      \vdots \\
      n^{-1/2}\langle\bZsf_{k}(\hat{\bbeta}_{x\bz_{k}}-\bbeta_{x\bz_{k}}),\bdelta_{y\bz_{k}'}\rangle
    \end{pmatrix}
    \\
    &\overset{d}{\to}\mathrm{N}(\mathbf{0},\mathbf{\Psi}),
  \end{align*}
  due to the fact that the second vector converges in distribution to a vector of zeroes.
    Based on the discussion of the denominator term, we can conclude that $$n^{-1}\mathrm{diag}(\|\br_{x\bz_{1}}\|^{2},\|\br_{x\bz_{2}}\|^{2},\dots,\|\br_{x\bz_{k}}\|^{2})\overset{p}{\to}\mathrm{diag}(\E(\delta_{x\bz_{1}}^{2}),\E(\delta_{x\bz_{2}}^{2}),\dots,\E(\delta_{x\bz_{k}}^{2}))=\mathbf{\Upsilon}.$$
  The target quantity can then be written as
  \begin{align*}
  n^{1/2}(\hat{\bbeta}_{yx.\mathcal{Z}}-\bbeta_{yx.\mathcal{Z}}) &=
  \begin{pmatrix}
    n^{-1}\|\br_{x\bz_{1}}\|^{2} & 0 & \cdots & 0 \\
    0 & n^{-1}\|\br_{x\bz_{2}}\|^{2} & \cdots & 0 \\
    \vdots & \vdots & \ddots & \vdots \\
    0 & 0 & 0 & n^{-1}\|\br_{x\bz_{k}}\|^{2}
  \end{pmatrix}^{-1} \begin{pmatrix}
    n^{-1/2}\langle\br_{x\bz_{1}},\bdelta_{y\bz_{1}'}\rangle \\
    n^{-1/2}\langle\br_{x\bz_{2}},\bdelta_{y\bz_{2}'}\rangle \\
    \vdots \\
    n^{-1/2}\langle\br_{x\bz_{k}},\bdelta_{y\bz_{k}'}\rangle
  \end{pmatrix} \\
  &\overset{d}{\to}\mathrm{N}(\mathbf{0},\bSigma_{\mathcal{Z}}),
\end{align*}
  where the convergence follows from Slutsky's Theorem, and the asymptotic covariance matrix $\bSigma_{\mathcal{Z}}=\mathbf{\Upsilon}^{-1}\mathbf{\Psi}\mathbf{\Upsilon}^{-1}$ is as specified in the theorem statement.
\end{proof}

\begin{Remark}
  If $\bZ_{1},\dots,\bZ_{k}$ are all valid adjustment sets relative to $(X,Y)$ in $\g$ for a linear structural equation model compatible with a DAG $\g$, we can simplify the diagonal terms $\bDelta_{\mathcal{Z},ii}=\E(\delta_{x\bz_{i}}^{2}\delta_{y\bz_{i}'}^{2})=\E(\delta_{x\bz_{i}}^{2})\E(\delta_{y\bz_{i}'}^{2})$ due to the independence between $\delta_{x\bz_{i}}$ and $\delta_{y\bz_{i}'}$ (see proof of Proposition 3.1 in Supplement from \citeauthor{henckel2019graphical} [\citeyear{henckel2019graphical}]).
  Therefore, the corresponding terms are $\bSigma_{\mathcal{Z},ii}=\E(\delta_{y\bz_{i}'}^{2})/\E(\delta_{x\bz_{i}}^{2})$.
  It can also be shown that $\hat{\beta}_{yx.\bz}$ is root-$n$ consistent for the total effect $\tau_{yx}$ for any valid adjustment set $\bZ$ \citep{nandy2017estimating}.
    In this case, in order to apply the central limit theorem separately on every entry of $(n^{-1/2}\langle\bdelta_{x\bz_{1}},\bdelta_{y\bz_{1}'}\rangle,\dots,n^{-1/2}\langle\bdelta_{x\bz_{k}},\bdelta_{y\bz_{k}'}\rangle)^{\top}$, we only need the  finite variance assumption for the error terms $\bepsilon$ of the linear structural equation model.
    In such a model, both $\delta_{x\bz_{i}}$ and $\delta_{y\bz_{i}'}$ can be expressed as linear functions of the error terms, say $\btheta_{i}^{\top}\bepsilon$ and $\bxi_{i}^{\top}\bepsilon$. Furthermore,
  \begin{align}
    \Var(\delta_{x\bz_{i}}\delta_{y\bz_{i}'})&=\E(\delta_{x\bz_{i}}^{2}\delta_{y\bz_{i}'}^{2})-[\underbrace{\E(\delta_{x\bz_{i}}\delta_{y\bz_{i}'})}_{=0}]^{2} \nonumber \\
    &= \E(\delta_{x\bz_{i}}^{2})\E(\delta_{y\bz_{i}'}^{2}) = \E(\btheta_{i}^{\top}\bepsilon)^{2}\E(\bxi_{i}^{\top}\bepsilon)^{2}, \label{eqn:2nd-moment}
  \end{align}
  due to the independence between $\delta_{x\bz_{i}}$ and $\delta_{y\bz_{i}'}$. The order of each $\epsilon_{v_i}$ term cannot be larger than $2$ in \eqref{eqn:2nd-moment} for all $V_{i}\in\bV$. Therefore, $\Var(\delta_{x\bz_{i}}\delta_{y\bz_{i}}')$ is finite for all $\bZ_{i}$ whenever $\E(\epsilon_{v_i}^{2})<\infty$ for all $V_i \in \bV$.
\end{Remark}

We also show that a consistent estimator of the covariance matrix can be obtained by plugging in the sample residuals.

\begin{Lemma}[Consistency of plug-in estimator of $\bSigma_{\mathcal{Z}}$]
    Consider the setting in Lemma~\ref{lem:normal}. The plug-in estimator $\hat{\bSigma}_{\mathcal{Z}}$ of $\bSigma_{\mathcal{Z}}$ with entries
    \[
      \hat{\bSigma}_{\mathcal{Z},ij} = \frac{n\sum_{s=1}^{n}r_{x\bz_{i},s}\cdot r_{y\bz_{i}',s}\cdot r_{x\bz_{j},s}\cdot r_{y\bz_{j}',s}}{\|\br_{x\bz_{i}}\|^{2}\|\br_{x\bz_{j}}\|^{2}},
    \]
    for all $1\leq i, j \leq k$, is consistent.
    \label{lem:consistent}
  \end{Lemma}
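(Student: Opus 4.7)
The plan is to show separately that (i) the denominator $n^{-1}\|\br_{x\bz_i}\|^2 \cdot n^{-1}\|\br_{x\bz_j}\|^2$ of $\hat{\bSigma}_{\mathcal{Z},ij}$ converges in probability to $\E(\delta_{x\bz_i}^2)\E(\delta_{x\bz_j}^2)$ and (ii) the scaled numerator $n^{-1}\sum_s r_{x\bz_i,s} r_{y\bz_i',s} r_{x\bz_j,s} r_{y\bz_j',s}$ converges in probability to $\E(\delta_{x\bz_i}\delta_{y\bz_i'}\delta_{x\bz_j}\delta_{y\bz_j'})$, then combine the two by Slutsky's theorem and the continuous mapping theorem. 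Claim (i) is immediate: it was already established in the denominator analysis inside the proof of Lemma~\ref{lem:normal} that $n^{-1}\|\br_{x\bz_i}\|^2 \overset{p}{\to} \E(\delta_{x\bz_i}^2)$, and an identical argument gives the same for index $j$. Provided the limits are strictly positive (which holds as long as $\bZ_i$ does not linearly determine $X$, otherwise the regression coefficient is not defined in the first place), the continuous mapping theorem handles the product and its reciprocal.

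For claim (ii), I would expand each sample residual in the product as population residual plus estimation error:
\[
r_{x\bz_i,s} = \delta_{x\bz_i,s} - \bZ_{i,s}^{\top}(\hat{\bbeta}_{x\bz_i}-\bbeta_{x\bz_i}),
\qquad
r_{y\bz_i',s} = \delta_{y\bz_i',s} - \bZ_{i,s}'^{\top}(\hat{\bbeta}_{y\bz_i'}-\bbeta_{y\bz_i'}),
\]
and analogously for indices $j$. Multiplying out the four residuals gives one leading term $\delta_{x\bz_i,s}\delta_{y\bz_i',s}\delta_{x\bz_j,s}\delta_{y\bz_j',s}$ plus finitely many cross terms, each containing at least one factor of the form $\bZ^{\top}(\hat{\bbeta}-\bbeta)$. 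Averaging the leading term over $s=1,\dots,n$ and invoking Khintchine's weak law of large numbers produces $\E(\delta_{x\bz_i}\delta_{y\bz_i'}\delta_{x\bz_j}\delta_{y\bz_j'})$, which is well-defined and finite because each population residual is a linear combination of components of $\bV$, so by the AM--GM inequality $|abcd|\le (a^4+b^4+c^4+d^4)/4$ combined with the assumption $\E(V_{\ell}^{4})<\infty$, the product has finite first moment.

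The main bookkeeping step is to verify that every cross term is $o_p(1)$. A generic cross term has the form
\[
\Bigl(n^{-1}\sum_{s=1}^{n} (\bZ_{i,s})_{a_1}\cdots(\bZ_{j,s}')_{a_m}\cdot \delta_{\bullet,s}\cdots \delta_{\bullet,s}\Bigr)\cdot \prod_{\ell=1}^{m}\bigl(\hat{\bbeta}_{\bullet}-\bbeta_{\bullet}\bigr)_{a_\ell},
\]
with $m\ge 1$. Each $(\hat{\bbeta}_{\bullet}-\bbeta_{\bullet})$ is $O_p(n^{-1/2})$ by the standard OLS central limit argument (again using the finite fourth moment assumption), so the product of estimation-error factors is at least $O_p(n^{-1/2})$. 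The sample mean in parentheses is a mean of products of at most four linear combinations of coordinates of $\bV$, and the AM--GM/Cauchy--Schwarz argument above shows its expectation is finite, so Khintchine gives convergence in probability to a constant, hence the sample mean is $O_p(1)$. Multiplying, every cross term is $o_p(1)$, which is the conclusion. Combining (i) and (ii) via Slutsky yields $\hat{\bSigma}_{\mathcal{Z},ij}\overset{p}{\to}\bSigma_{\mathcal{Z},ij}$ entrywise, and since $\mathcal{Z}$ is fixed and finite, this is equivalent to matrix consistency.

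The only real obstacle is the bookkeeping in (ii); there is no deeper analytic difficulty, because the required moment bounds all follow from $\E(V_{\ell}^4)<\infty$ via elementary inequalities and because $\sqrt{n}$-consistency of each $\hat{\bbeta}_{x\bz_i}$ and $\hat{\bbeta}_{y\bz_i'}$ is a direct consequence of the OLS central limit theorem already used in Lemma~\ref{lem:normal}.
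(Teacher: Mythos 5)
Your proposal is correct and follows essentially the same route as the paper's proof: split into numerator and denominator, reuse the denominator convergence from the proof of Lemma~\ref{lem:normal}, expand each sample residual as population residual minus $\bZ^{\top}(\hat{\bbeta}-\bbeta)$, and show every cross term vanishes because the coordinate-wise sample means are $O_p(1)$ (finite expectations via the fourth-moment assumption) while each $\hat{\bbeta}-\bbeta$ factor tends to zero. The only cosmetic difference is that you invoke the sharper $O_p(n^{-1/2})$ rate where the paper only needs $o_p(1)$, and you spell out the AM--GM moment bound that the paper states without detail.
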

\begin{proof}[Proof of Lemma~\ref{lem:consistent}]
  Consider
  \[
     \hat{\bSigma}_{\mathcal{Z},ij}=\frac{n^{-1}\sum_{s=1}^{n}r_{x\bz_{i},s}\cdot r_{y\bz_{i}',s}\cdot r_{x\bz_{j},s}\cdot r_{y\bz_{j}',s}}{n^{-1}\|\br_{x\bz_{i}}\|^{2}n^{-1}\|\br_{x\bz_{j}}\|^{2}}.
  \]
  The denominator converges in probability to $\E(\delta_{x\bz_{i}}^{2})\E(\delta_{x\bz_{j}}^{2})$ by the proof of Lemma~\ref{lem:normal}.
  The numerator can be written as
  \begin{align*}
    n^{-1}\sum_{s=1}^{n}r_{x\bz_{i},s}r_{y\bz_{i}',s}r_{x\bz_{j},s}r_{y\bz_{j}',s} &= n^{-1}\sum_{s=1}^{n}\left[(\delta_{x\bz_{i},s}-\bZ_{i,s}^{\top}(\hat{\bbeta}_{x\bz_{i}}-\bbeta_{x\bz_{i}}))(\delta_{y\bz_{i}',s}-\bZ_{i,s}^{'\top}(\hat{\bbeta}_{y\bz_{i}'}-\bbeta_{y\bz_{i}'}))\right. \\
    &\qquad\qquad\qquad\left.(\delta_{x\bz_{j},s}-\bZ_{j,s}^{\top}(\hat{\bbeta}_{x\bz_{j}}-\bbeta_{x\bz_{j}}))(\delta_{y\bz_{j}',s}-\bZ_{j,s}^{'\top}(\hat{\bbeta}_{y\bz_{j}'}-\bbeta_{y\bz_{j}'}))\right] \\
    &= n^{-1}\sum_{s=1}^{n}\delta_{x\bz_{i},s}\delta_{y\bz_{i}',s}\delta_{x\bz_{j},s}\delta_{y\bz_{j}',s} + R,
  \end{align*}
  where the remainder term $R$ contains the rest of the products from the expansion: $1$ product with no $\delta$-term, $4$ products with $1$ $\delta$-term, $6$ products with $2$ $\delta$-terms and $4$ products with $3$ $\delta$-terms. Below we will show that the remainder term $R\overset{p}{\to}0$, and it follows that the numerator converges in probability to $\E(\delta_{x\bz_{i}}\delta_{y\bz_{i}'}\delta_{x\bz_{j}}\delta_{y\bz_{j}'})$. By the continuous mapping theorem, $\hat{\bSigma}_{\mathcal{Z},ij}\overset{p}{\to}\bSigma_{\mathcal{Z},ij}$ follows.

  We will discuss one case from each category, as the results can be shown similarly for other products in the same category.
  The use of parentheses in the subscript denotes a particular entry of a vector.
  For example, $Z_{i(t),s}$ is the $t$-th entry of the $s$-th observation $\bZ_{i,s}$ and $\hat{\beta}_{x\bz_{i}(t)}$ is the $t$-th entry of the vector $\hat{\bbeta}_{x\bz_{i}}$.
  With the finite fourth moment assumption on $\bV$, $\hat{\beta}_{x\bz_{i}(t)}\overset{p}{\to}\beta_{x\bz_{i}(t)}$ and $\hat{\beta}_{y\bz_{i}'(t)}\overset{p}{\to}\beta_{y\bz_{i}'(t)}$ for any $\bZ_{i}$ and $1\leq t\leq |\bZ_{i}|$.

  \noindent\emph{No $\delta$-term.}
  \begin{align*}
    &\phantom{{}=\;}n^{-1}\sum_{s=1}^{n}\bZ_{i,s}^{\top}(\hat{\bbeta}_{x\bz_{i}}-\bbeta_{x\bz_{i}})\bZ_{i,s}^{'\top}(\hat{\bbeta}_{y\bz_{i}'}-\bbeta_{y\bz_{i}'})\bZ_{j,s}^{\top}(\hat{\bbeta}_{x\bz_{j}}-\bbeta_{x\bz_{j}})\bZ_{j,s}^{'\top}(\hat{\bbeta}_{y\bz_{j}'}-\bbeta_{y\bz_{j}'}) \\
    &= \sum_{t,u,v,w}\left(n^{-1}\sum_{s=1}^{n}Z_{i(t),s}Z'_{i(u),s}Z_{j(v),s}Z'_{j(w),s}\right)(\hat{\beta}_{x\bz_{i}(t)}-\beta_{x\bz_{i}(t)})(\hat{\beta}_{y\bz_{i}'(u)}-\beta_{y\bz_{i}'(u)}) \\
    &\qquad\qquad\qquad\qquad(\hat{\beta}_{x\bz_{j}(v)}-\beta_{x\bz_{j}(v)})(\hat{\beta}_{y\bz_{j}'(w)}-\beta_{y\bz_{j}'(w)}) \\
    &\overset{p}{\to} \sum_{t,u,v,w} \mathrm{const}\cdot 0\cdot 0\cdot 0\cdot 0 \\
    &= 0,
  \end{align*}
  where $1\leq t\leq |\bZ_{i}|,1\leq u\leq |\bZ'_{i}|,1\leq v\leq |\bZ_{j}|,1\leq w\leq |\bZ'_{j}|$, the constant term $\E(Z_{i(t)}Z'_{i(u)}Z_{j(v)}Z'_{j(w)})$ exists due to the finite fourth moment assumption on $\bV$.

  \noindent\emph{One $\delta$-term.}
  \begin{align*}
    &\phantom{{}=\;} -n^{-1}\sum_{s=1}^{n}\delta_{x\bz_{i},s}\bZ_{i,s}^{'\top}(\hat{\bbeta}_{y\bz_{i}'}-\bbeta_{y\bz_{i}'}))\bZ_{j,s}^{\top}(\hat{\bbeta}_{x\bz_{j}}-\bbeta_{x\bz_{j}})\bZ_{j,s}^{'\top}(\hat{\bbeta}_{y\bz_{j}'}-\bbeta_{y\bz_{j}'}) \\
    &= \sum_{u,v,w}\left(n^{-1}\sum_{s=1}^{n}\delta_{x\bz_{i},s}Z'_{i(u),s}Z_{j(v),s}Z'_{j(w),s}\right)(\hat{\beta}_{y\bz_{i}'(u)}-\beta_{y\bz_{i}'(u)})(\hat{\beta}_{x\bz_{j}(v)}-\beta_{x\bz_{j}(v)})(\hat{\beta}_{y\bz_{j}'(w)}-\beta_{y\bz_{j}'(w)}) \\
    &\overset{p}{\to} \sum_{u,v,w} \mathrm{const}\cdot 0\cdot 0\cdot 0 \\
    &= 0,
  \end{align*}
  where $1\leq u\leq |\bZ'_{i}|,1\leq v\leq |\bZ_{j}|,1\leq w\leq |\bZ'_{j}|$, the constant term $\E(\delta_{x\bz_{i}}Z'_{i(u)}Z_{j(v)}Z'_{j(w)})$ exists due to the finite fourth moment assumption on $\bV$.

  \noindent\emph{Two $\delta$-terms.}
  \begin{align*}
    &\phantom{{}=\;}n^{-1}\sum_{s=1}^{n}\delta_{x\bz_{i},s}\delta_{y\bz_{i}',s}\bZ_{j,s}^{\top}(\hat{\bbeta}_{x\bz_{j}}-\bbeta_{x\bz_{j}})\bZ_{j,s}^{'\top}(\hat{\bbeta}_{y\bz_{j}'}-\bbeta_{y\bz_{j}'}) \\
    &= \sum_{v,w}\left(n^{-1}\sum_{s=1}^{n}\delta_{x\bz_{i},s}\delta_{y\bz_{i}',s}Z_{j(v),s}Z'_{j(w),s}\right)(\hat{\beta}_{x\bz_{j}(v)}-\beta_{x\bz_{j}(v)})(\hat{\beta}_{y\bz_{j}'(w)}-\beta_{y\bz_{j}'(w)}) \\
    &\overset{p}{\to} \sum_{v,w} \mathrm{const}\cdot 0\cdot 0 \\
    &= 0,
  \end{align*}
  where $1\leq v\leq |\bZ_{j}|,1\leq w\leq |\bZ'_{j}|$, the constant term $\E(\delta_{x\bz_{i}}\delta_{y\bz_{i}'}Z_{j(v)}Z'_{j(w)})$ exists due to the finite fourth moment assumption on $\bV$.

  \noindent\emph{Three $\delta$-terms.}
  \begin{align*}
    &\phantom{{}=\;}-n^{-1}\sum_{s=1}^{n}\delta_{x\bz_{i},s}\delta_{y\bz_{i}',s}\delta_{x\bz_{j},s}\bZ_{j,s}^{'\top}(\hat{\bbeta}_{y\bz_{j}'}-\bbeta_{y\bz_{j}'}) \\
    &= \sum_{w}\left(n^{-1}\sum_{s=1}^{n}\delta_{x\bz_{i},s}\delta_{y\bz_{i}',s}\delta_{x\bz_{j},s}Z'_{j(w),s}\right)(\hat{\beta}_{y\bz_{j}'(w)}-\beta_{y\bz_{j}'(w)}) \\
    &\overset{p}{\to} \sum_{w} \mathrm{const}\cdot 0 \\
    &= 0,
  \end{align*}
  where $1\leq w\leq |\bZ'_{j}|$, the constant term $\E(\delta_{x\bz_{i}}\delta_{y\bz_{i}'}\delta_{x\bz_{j}}Z'_{j(w)})$ exists due to the finite fourth moment assumption on $\bV$.
\end{proof}

\begin{Remark}
  If the $\bZ_{i}$'s are valid adjustment sets, the diagonal terms simplify to $(\hat{\bSigma}_{\mathcal{Z}})_{ii}=\|\br_{y\bz_{i}'}\|^{2}_{2}/\|\br_{x\bz_{i}}\|^{2}_{2}$, and their convergence follows by the proof of Lemma~\ref{lem:normal} on the denominator.
\end{Remark}

\subsection{Proof of Proposition~\ref{ppn:vech}}
\begin{proof}
The proof aims to show that the half-vectorised asymptotic covariance matrix estimator $\hat{\bSigma}_{\mathcal{Z}}$, after subtracting their true values in $\bSigma_{\mathcal{Z}}$, will converge to a zero-mean normal distribution.

For the $(i,j)$-th entry, we write
\begin{align*}
  n^{1/2}(\hat{\bSigma}_{\mathcal{Z},ij}-\bSigma_{\mathcal{Z},ij}) &= \frac{n^{-1/2}\sum_{s=1}^{n}r_{x\bz_{i},s}r_{y\bz_{i}',s}r_{x\bz_{j},s}r_{y\bz_{i}',s}}{n^{-1}\sum_{s=1}^{n}r_{x\bz_{i},s}^{2}n^{-1}\sum_{s=1}^{n}r_{x\bz_{j},s}^{2}}- \frac{n^{1/2}\E(\delta_{x\bz_{i}}\delta_{y\bz_{i}'}\delta_{x\bz_{j}}\delta_{y\bz_{i}'})}{\E(\delta_{x\bz_{i}}^{2})\E(\delta_{x\bz_{j}}^{2})} \\
  &= \frac{N}{\E(\delta_{x\bz_{i}}^{2})\E(\delta_{x\bz_{j}}^{2})n^{-1}\sum_{s=1}^{n}r_{x\bz_{i},s}^{2}n^{-1}\sum_{s=1}^{n}r_{x\bz_{j},s}^{2}}.
\end{align*}
The numerator $N$ of the expression above is expanded as
\begin{align*}
  &\quad \E(\delta_{x\bz_{i}}^{2})\E(\delta_{x\bz_{j}}^{2})n^{-1/2}\sum_{s=1}^{n}r_{x\bz_{i},s}r_{y\bz_{i}',s}r_{x\bz_{j},s}r_{y\bz_{i}',s}-\E(\delta_{x\bz_{i}}\delta_{y\bz_{i}'}\delta_{x\bz_{j}}\delta_{y\bz_{j}'})n^{1/2}n^{-1}\sum_{s=1}^{n}r_{x\bz_{i},s}^{2}n^{-1}\sum_{s=1}^{n}r_{x\bz_{j},s}^{2} \\
  &= \E(\delta_{x\bz_{i}}^{2})\E(\delta_{x\bz_{j}}^{2})n^{-1/2}\sum_{s=1}^{n}\delta_{x\bz_{i},s}\delta_{y\bz_{i}',s}\delta_{x\bz_{j},s}\delta_{y\bz_{i}',s}-\E(\delta_{x\bz_{i}}\delta_{y\bz_{i}'}\delta_{x\bz_{j}}\delta_{y\bz_{j}'})n^{1/2}n^{-1}\sum_{s=1}^{n}\delta_{x\bz_{i},s}^{2}n^{-1}\sum_{s=1}^{n}\delta_{x\bz_{j},s}^{2} + R,
\end{align*}
where $R$ collects the remainder term resulting from replacing the sample residuals with population residuals.

We now subtract and add back the expected squared population residuals from the average squared population residuals.
That is,
\begin{align*}
  n^{1/2}(\hat{\bSigma}_{\mathcal{Z},ij}-\bSigma_{\mathcal{Z},ij})&= \E(\delta_{x\bz_{i}}^{2})\E(\delta_{x\bz_{j}}^{2})n^{-1/2}\sum_{s=1}^{n}\delta_{x\bz_{i},s}\delta_{y\bz_{i}',s}\delta_{x\bz_{j},s}\delta_{y\bz_{i}',s}\\
  &\qquad-\E(\delta_{x\bz_{i}}\delta_{y\bz_{i}'}\delta_{x\bz_{j}}\delta_{y\bz_{j}'})n^{1/2}\left(n^{-1}\sum_{s=1}^{n}\delta_{x\bz_{i},s}^{2}-\E(\delta_{x\bz_{i}}^{2})\right)\left(n^{-1}\sum_{s=1}^{n}\delta_{x\bz_{j},s}^{2}-\E(\delta_{x\bz_{j}}^{2})\right) \\
  &\qquad - \E(\delta_{x\bz_{i}}\delta_{y\bz_{i}'}\delta_{x\bz_{j}}\delta_{y\bz_{j}'})n^{-1/2}\sum_{s=1}^{n}\left(\E(\delta_{x\bz_{i}}^{2})\delta_{x\bz_{j},s}^{2}+\E(\delta_{x\bz_{j}}^{2})\delta_{x\bz_{i},s}^{2}\right)\\
  &\qquad+ \E(\delta_{x\bz_{i}}\delta_{y\bz_{i}'}\delta_{x\bz_{j}}\delta_{y\bz_{j}'})n^{1/2}\E(\delta_{x\bz_{i}}^{2})\E(\delta_{x\bz_{j}}^{2}) + R \\
  &= n^{-1/2}\sum_{s=1}^{n}\left[\E(\delta_{x\bz_{i}}^{2})\E(\delta_{x\bz_{j}}^{2})\delta_{x\bz_{i},s}\delta_{y\bz_{i}',s}\delta_{x\bz_{j},s}\delta_{y\bz_{i}',s} \right.\\
  &\qquad- \E(\delta_{x\bz_{i}}\delta_{y\bz_{i}'}\delta_{x\bz_{j}}\delta_{y\bz_{j}'})\left(\E(\delta_{x\bz_{i}}^{2})\delta_{x\bz_{j},s}^{2}+\E(\delta_{x\bz_{j}}^{2})\delta_{x\bz_{i},s}^{2}\right)\\
  &\qquad \left.+ \E(\delta_{x\bz_{i}}\delta_{y\bz_{i}'}\delta_{x\bz_{j}}\delta_{y\bz_{j}'})\E(\delta_{x\bz_{i}}^{2})\E(\delta_{x\bz_{j}}^{2})\right] + R'.
\end{align*}
The first term converges to a zero-mean normal distribution by the central limit theorem and the finite fourth moment assumption on $\bV$.
The remainder term $R=o_{p}(1)$ by analogous arguments to the ones used in the proof of Lemma~\ref{lem:consistent}.
The second term on the second to last line disappears asymptotically, which entails that $R'=o_{p}(1)$.

The asymptotic covariance between two entries in $\vecth(\hat{\bSigma}_{\mathcal{Z}})$
\[
  a.\Cov(n^{1/2}(\hat{\bSigma}_{\mathcal{Z},ij}-\bSigma_{\mathcal{Z},ij}),n^{1/2}(\hat{\bSigma}_{\mathcal{Z},kl}-\bSigma_{\mathcal{Z},kl})) = \frac{\gamma_{ij,kl}}{\omega_{ij,kl}},
\]
where
\begin{align*}
  \gamma_{ij,kl} &:= \E(\delta_{x\bz_{i}}^{2})\E(\delta_{x\bz_{j}}^{2})\E(\delta_{x\bz_{k}}^{2})\E(\delta_{x\bz_{l}}^{2})\Cov(\delta_{x\bz_{i}}\delta_{y\bz_{i}'}\delta_{x\bz_{j}}\delta_{y\bz_{j}'},\delta_{x\bz_{k}}\delta_{y\bz_{k}'}\delta_{x\bz_{l}}\delta_{y\bz_{l}'}) \\
  &\quad - \E(\delta_{x\bz_{i}}^{2})\E(\delta_{x\bz_{j}}^{2})\E(\delta_{x\bz_{k}}^{2})\E(\delta_{x\bz_{k}}\delta_{y\bz_{k}'}\delta_{x\bz_{l}}\delta_{y\bz_{l}'}) \Cov(\delta_{x\bz_{i}}\delta_{y\bz_{i}'}\delta_{x\bz_{j}}\delta_{y\bz_{j}'},\delta_{x\bz_{l}}^{2}) \\
  &\quad - \E(\delta_{x\bz_{i}}^{2})\E(\delta_{x\bz_{j}}^{2})\E(\delta_{x\bz_{l}}^{2})\E(\delta_{x\bz_{k}}\delta_{y\bz_{k}'}\delta_{x\bz_{l}}\delta_{y\bz_{l}'}) \Cov(\delta_{x\bz_{i}}\delta_{y\bz_{i}'}\delta_{x\bz_{j}}\delta_{y\bz_{j}'},\delta_{x\bz_{k}}^{2}) \\
  &\quad - \E(\delta_{x\bz_{i}}^{2})\E(\delta_{x\bz_{k}}^{2})\E(\delta_{x\bz_{l}}^{2})\E(\delta_{x\bz_{i}}\delta_{y\bz_{i}'}\delta_{x\bz_{j}}\delta_{y\bz_{j}'}) \Cov(\delta_{x\bz_{k}}\delta_{y\bz_{k}'}\delta_{x\bz_{l}}\delta_{y\bz_{l}'},\delta_{x\bz_{j}}^{2}) \\
  &\quad - \E(\delta_{x\bz_{j}}^{2})\E(\delta_{x\bz_{k}}^{2})\E(\delta_{x\bz_{l}}^{2})\E(\delta_{x\bz_{i}}\delta_{y\bz_{i}'}\delta_{x\bz_{j}}\delta_{y\bz_{j}'}) \Cov(\delta_{x\bz_{k}}\delta_{y\bz_{k}'}\delta_{x\bz_{l}}\delta_{y\bz_{l}'},\delta_{x\bz_{i}}^{2}) \\
  &\quad + \E(\delta_{x\bz_{i}}\delta_{y\bz_{i}'}\delta_{x\bz_{j}}\delta_{y\bz_{j}'})\E(\delta_{x\bz_{k}}\delta_{y\bz_{k}'}\delta_{x\bz_{l}}\delta_{y\bz_{l}'}) \E(\delta_{x\bz_{i}}^{2})\E(\delta_{x\bz_{k}}^{2})\Cov(\delta_{x\bz_{j}}^{2},\delta_{x\bz_{l}}^{2}) \\
  &\quad + \E(\delta_{x\bz_{i}}\delta_{y\bz_{i}'}\delta_{x\bz_{j}}\delta_{y\bz_{j}'})\E(\delta_{x\bz_{k}}\delta_{y\bz_{k}'}\delta_{x\bz_{l}}\delta_{y\bz_{l}'}) \E(\delta_{x\bz_{i}}^{2})\E(\delta_{x\bz_{l}}^{2})\Cov(\delta_{x\bz_{j}}^{2},\delta_{x\bz_{k}}^{2}) \\
  &\quad + \E(\delta_{x\bz_{i}}\delta_{y\bz_{i}'}\delta_{x\bz_{j}}\delta_{y\bz_{j}'})\E(\delta_{x\bz_{k}}\delta_{y\bz_{k}'}\delta_{x\bz_{l}}\delta_{y\bz_{l}'}) \E(\delta_{x\bz_{j}}^{2})\E(\delta_{x\bz_{k}}^{2})\Cov(\delta_{x\bz_{i}}^{2},\delta_{x\bz_{l}}^{2}) \\
  &\quad + \E(\delta_{x\bz_{i}}\delta_{y\bz_{i}'}\delta_{x\bz_{j}}\delta_{y\bz_{j}'})\E(\delta_{x\bz_{k}}\delta_{y\bz_{k}'}\delta_{x\bz_{l}}\delta_{y\bz_{l}'}) \E(\delta_{x\bz_{j}}^{2})\E(\delta_{x\bz_{l}}^{2})\Cov(\delta_{x\bz_{i}}^{2},\delta_{x\bz_{k}}^{2}) \textrm{ and} \\
  \omega_{ij,kl} &:= [\E(\delta_{x\bz_{i}}^{2})]^{2}[\E(\delta_{x\bz_{j}}^{2})]^{2}[\E(\delta_{x\bz_{k}}^{2})]^{2}[\E(\delta_{x\bz_{l}}^{2})]^{2}.
\end{align*}

Analogous to the proof of Lemma~\ref{lem:normal}, the joint normality follows by the multivariate Central Limit Theorem, which we can apply due to Slutsky's Theorem and the assumption that the fourth moments of the errors are finite.

Define a deterministic mapping for subscript $\mathbf{g}(a)=(ij)$, $a=1,2,\dots,k(k+1)/2$ such that it maps the $a$-th element of $\mathrm{vech}(\bSigma_{\mathcal{Z}})$ to the $(i,j)$-th entry of $\bSigma_{\mathcal{Z}}$.
The asymptotic covariance matrix $\mathbf{F}$ of $\mathrm{vech}(\hat{\bSigma}_{\mathcal{Z}})$ is a $k(k+1)/2\times k(k+1)/2$ matrix whose entries are related to the expression of $\omega_{\cdot,\cdot}$ and $\gamma_{\cdot,\cdot}$ by the mapping $\mathbf{g}(\cdot)$ such that
\[
    \mathbf{F}_{ab}=\frac{\gamma_{\mathbf{g}(a),\mathbf{g}(b)}}{\omega_{\mathbf{g}(a),\mathbf{g}(b)}},
\]
for $1\leq a,b\leq k(k+1)/2$.
The asymptotic covariance matrix $\mathbf{C}$ of $\mathrm{vech}(\hat{\bDelta}_{\mathcal{Z}})$ follows from the linear relationship $\mathrm{vech}({\bDelta}_{\mathcal{Z}})=\mathbf{\Pi}\mathrm{vech}({\bSigma}_{\mathcal{Z}})$.
\end{proof}
\begin{Remark}
Again we discuss the special situation where the $\bZ_{i}$'s are valid adjustments sets.
In this case, the diagonal terms
  \begin{align*}
  n^{1/2}(\hat{\bSigma}_{\mathcal{Z},ii}-\bSigma_{\mathcal{Z},ii})&= \frac{n^{1/2}n^{-1}\sum_{s=1}^{n}r_{y\bz_{i}',s}^{2}}{n^{-1}\sum_{s=1}^{n}r_{x\bz_{i},s}^{2}} - \frac{\E(\delta_{y\bz_{i}'}^{2})}{\E(\delta_{x\bz_{i}}^{2})} \\
  &= \frac{n^{-1/2}\sum_{s=1}^{n}[\E(\delta_{x\bz_{i}}^{2})r_{y\bz_{i}',s}^{2}-\E(\delta_{y\bz_{i}'}^{2})r_{x\bz_{i},s}^{2}]}{\E(\delta_{x\bz_{i}}^{2})n^{-1}\sum_{s=1}^{n}r_{x\bz_{i},s}^{2}}.
\end{align*}
The numerator
\begin{align*}
  n^{-1/2}\sum_{s=1}^{n}[\E(\delta_{x\bz_{i}}^{2})r_{y\bz_{i}',s}^{2}-\E(\delta_{y\bz_{i}'}^{2})r_{x\bz_{i},s}^{2}] &=n^{-1/2}\sum_{s=1}^{n}\left[\E(\delta_{x\bz_{i}}^{2})(\delta_{x\bz_{i},s}-\bZ_{i,s}^{'\top}(\hat{\bbeta}_{y\bz_{i}'}-\bbeta_{y\bz_{i}'}))^{2}\right.\\
  &\qquad\qquad\qquad-\left.\E(\delta_{y\bz_{i}'}^{2})(\delta_{x\bz_{i},s}-\bZ_{i,s}^{\top}(\hat{\bbeta}_{x\bz_{i}}-\bbeta_{x\bz_{i}}))^{2}\right] \\
  &= n^{-1/2}\sum_{s=1}^{n}[\E(\delta_{x\bz_{i}}^{2})\delta_{x\bz_{i},s}^{2}-\E(\delta_{y\bz_{i}'}^{2})\delta_{x\bz_{i},s}^{2}] + R \\
  &\overset{d}{\to}\mathrm{N}(0,[\E(\delta_{x\bz_{i}}^{2})]^{2}\Var(\delta_{y\bz_{i}'}^{2})+[\E(\delta_{y\bz_{i}'}^{2})]^{2}\Var(\delta_{x\bz_{i}}^{2})),
\end{align*}
where we can apply the central limit theorem because to the first term because $\E(\E(\delta_{x\bz_{i}}^{2})\delta_{y\bz_{i}',s}^{2}-\E(\delta_{y\bz_{i}'}^{2})\delta_{x\bz_{i},s}^{2})=0$ and $\Var(\E(\delta_{x\bz_{i}}^{2})\delta_{y\bz_{i}',s}^{2}-\E(\delta_{y\bz_{i}'}^{2})\delta_{x\bz_{i},s}^{2}) = [\E(\delta_{x\bz_{i}}^{2})]^{2}\Var(\delta_{y\bz_{i}'}^{2})+[\E(\delta_{y\bz_{i}'}^{2})]^{2}\Var(\delta_{x\bz_{i}}^{2})$.
The remainder term $R=o_{p}(1)$ by analogous arguments used in the proof of Lemma~\ref{lem:consistent}.
Similarly, the denominator $\E(\delta_{x\bz_{i}}^{2})n^{-1}\sum_{s=1}^{n}r_{x\bz_{i},s}^{2}$ converges in probability to $[\E(\delta_{x\bz_{i}}^{2})]^{2}$.
Then by Slutsky's Theorem,
\[
  n^{1/2}(\hat{\bSigma}_{\mathcal{Z},ii}-\bSigma_{\mathcal{Z},ii})\overset{d}{\to}\mathrm{N}\left(0,\frac{[\E(\delta_{x\bz_{i}}^{2})]^{2}\Var(\delta_{y\bz_{i}'}^{2})+[\E(\delta_{y\bz_{i}'}^{2})]^{2}\Var(\delta_{x\bz_{i}}^{2})}{[\E(\delta_{x\bz_{i}}^{2})]^{4}}\right).
\]
\end{Remark}

\subsection{Proof of Theorem \ref{thm:chisquare}}

\begin{proof}[Proof of Theorem \ref{thm:chisquare}]
  Lemma~\ref{lem:normal} states that $n^{1/2}(\hat{\bbeta}_{yx.\mathcal{Z}}-\bbeta_{yx.\mathcal{Z}})$ is asymptotically normal. We first show that to quantify the degrees of freedom of a Wald-type statistic, one only needs to look at the rank of covariance matrix $\bDelta_{\mathcal{Z}}=\bGamma\bSigma_{\mathcal{Z}}\bGamma^{\top}$.

  Suppose $\mathrm{rank}(\bDelta_{\mathcal{Z}})=r_{0}\leq l$ where $l=k-1$. Consider the eigendecomposition of $\bDelta_{\mathcal{Z}}=\mathbf{Q}\mathbf{\Phi}\mathbf{Q}^{\top}$, where $\mathbf{Q}=(\mathbf{q}_{1}\ \cdots\ \mathbf{q}_{l})$ is the orthonormal matrix containing the eigenvectors of $\bDelta_{\mathcal{Z}}$, and $\mathbf{\Phi}=\mathrm{diag}(\phi_{1},\dots,\phi_{l})$ with eigenvalues $\phi_{1}\geq\cdots\geq\phi_{r_{0}}>\phi_{r_{0}+1}=\cdots=\phi_{l}=0$. It can be verified that the (unique) Moore-Penrose inverse of $\bDelta_{\mathcal{Z}}$ is defined as
  \[
    \bDelta_{\mathcal{Z}}^{\dagger}=\sum_{s=1}^{r_{0}}\phi_{j}^{-1}\mathbf{q}_{s}\mathbf{q}_{s}^{\top},
  \]
  because of the semi-positive definiteness.
  Under $H_{0}:\bGamma\bbeta_{yx.\mathcal{Z}}=\mathbf{0}$, denote $n^{1/2}\bGamma\hat{\bbeta}_{yx}\overset{d}{\to}\mathbf{G}\sim\mathrm{N}(\mathbf{0},\bDelta_{\mathcal{Z}})$.
  For all $1\leq s\neq t\leq r_{0}$, $\Cov(\mathbf{q}_{s}^{\top}\mathbf{G},\mathbf{q}_{t}^{\top}\mathbf{G})=\mathbf{q}_{s}^{\top}\bDelta_{\mathcal{Z}}\mathbf{q}_{t}=0$. By joint normality of $\mathbf{G}$, $\mathbf{q}_{s}^{\top}\mathbf{G}$ and $\mathbf{q}_{t}^{\top}\mathbf{G}$ are independent. Moreover, since $\mathbf{q}_{s}^{\top}\mathbf{G}\sim\mathrm{N}(0,\phi_{s})$,
  \begin{align}
    n(\mathbf{\Gamma}\hat{\bbeta}_{yx.\mathcal{Z}})^{\top}\bDelta_{\mathcal{Z}}^{\dagger}(\mathbf{\Gamma}\hat{\bbeta}_{yx.\mathcal{Z}}) &=\sum_{s=1}^{r_{0}}\phi_{s}^{-1}(\mathbf{q}_{s}^{\top}n^{1/2}\bGamma\hat{\bbeta}_{yx.\mathcal{Z}})^{2} \nonumber\\
    &\overset{d}{\to}\sum_{s=1}^{r_{0}}\phi_{s}^{-1}(\mathbf{q}_{s}^{\top}\mathbf{G})^{2}\sim\chi^{2}_{r_{0}}.\label{eqn:conv-chisq}
  \end{align}
  
  The consistency of $\hat{r}$, i.e., $\lim_{n\to\infty}\mathbb{P}(|\hat{r}-r_{0}|<\epsilon)=1$, $\forall\epsilon>0$, implies that $\lim_{n\to\infty}\mathbb{P}(\hat{r}=r_{0})=1$ when taking $\epsilon<1$, since both $\hat{r}$ and $r_{0}$ are integer-valued.

  Since $\hat{\bDelta}_{\mathcal{Z}}$ is positive semidefinite, its spectral decomposition is $\hat{\mathbf{P}}\hat{\bLambda}\hat{\mathbf{P}}^{\top}$, where $\hat{\bLambda}=\diag(\hat{\lambda}_{1},\dots,\hat{\lambda}_{k})$ with $\hat{\lambda}_{1}\geq\dots\geq\hat{\lambda}_{k}\geq 0$. The rank-$\hat{r}$ spectral approximation of $\hat{\bDelta}_{\mathcal{Z}}$ is then $\hat{\mathbf{P}}\hat{\bLambda}_{\hat{r}}\hat{\mathbf{P}}^{\top}$, where $\hat{\bLambda}_{\hat{r}}=\mathrm{diag}(\hat{\lambda}_{\hat{r},1},\dots,\hat{\lambda}_{\hat{r},\hat{r}},0,\dots,0)$.
  Following Weyl's inequality \citep{stewart1998perturbation} and Proposition~\ref{ppn:vech}, we have $\hat{\bLambda}\overset{p}{\to}\bLambda$ since the asymptotic covariance matrix of $\mathrm{vech}(\hat{\bDelta}_{\mathcal{Z}})$ is finite. We now show that $\hat{\bLambda}_{\hat{r}}\overset{p}{\to}\bLambda$.
  For any $\ell\in\{1,\dots,k\}$,
  \begin{align*}
    &\ \ \phantom{=}\lim_{n\to\infty}\mathbb{P}(|\hat{\lambda}_{\hat{r},\ell}-\hat{\lambda}_{\ell}|<\epsilon) \\
    &=
    \lim_{n\to\infty}\mathbb{P}(|\hat{\lambda}_{\hat{r},\ell}-\hat{\lambda}_{\ell}|<\epsilon\mid \hat{r}= r_{0})\mathbb{P}(\hat{r}= r_{0}) \\
    &\quad +\lim_{n\to\infty}\mathbb{P}(|\hat{\lambda}_{\hat{r},\ell}-\hat{\lambda}_{\ell}|<\epsilon\mid \hat{r}\neq r_{0})\mathbb{P}(\hat{r}\neq r_{0}) \\
    &= \lim_{n\to\infty}\mathbb{P}(|\hat{\lambda}_{r_{0},\ell}-\hat{\lambda}_{\ell}|<\epsilon\mid \hat{r}= r_{0}).
  \end{align*}
  If $\ell \leq r_{0}$, $\hat{\lambda}_{r_{0},\ell}=\hat{\lambda}_{\ell}$ and $\mathbb{P}(|\hat{\lambda}_{r_{0},\ell}-\hat{\lambda}_{\ell}|<\epsilon\mid \hat{r}= r_{0})=1$.
  Otherwise if $\ell > r_{0}$, $\lim_{n\to\infty}\mathbb{P}(|\hat{\lambda}_{r_{0},\ell}-\hat{\lambda}_{\ell}|<\epsilon\mid \hat{r}= r_{0})=\lim_{n\to\infty}\mathbb{P}(|\hat{\lambda}_{\ell}|<\epsilon\mid \hat{r}= r_{0})=1$ because $\hat{\lambda}_{\ell}\overset{p}{\to}\lambda_{\ell}=0$.
  Hence, $\hat{\lambda}_{\hat{r},\ell}\overset{p}{\to}\hat{\lambda}_{\ell}$ for all $\ell$.
  Since all entries of $\hat{\bP}$ are bounded by $1$, $\hat{\bDelta}_{\mathcal{Z},\hat{r}}-\hat{\bDelta}_{\mathcal{Z}}=\hat{\bP}(\hat{\bLambda}_{\hat{r}}-\bLambda)\hat{\bP}^{\top}\overset{p}{\to} 0$. Then $\hat{\bDelta}_{\mathcal{Z},\hat{r}}\overset{p}{\to}\bDelta_{\mathcal{Z}}$ by consistency of $\hat{\bDelta}_{\mathcal{Z}}$.

  The rank of $\hat{\bDelta}_{\mathcal{Z},\hat{r}}$ is equal to $\hat{r}$ by construction. With the condition that $\mathbb{P}(\mathrm{rank}(\hat{\bDelta}_{\mathcal{Z},\hat{r}})=\mathrm{rank}(\bDelta_{\mathcal{Z}}))\to 1$, it follows from Theorem 2 in \citet{andrews1987asymptotic} that $\hat{\bDelta}_{\mathcal{Z},\hat{r}}^{\dagger}\overset{p}{\to}\bDelta_{\mathcal{Z}}^{\dagger}$.
  By Slutsky's theorem, the convergence in distribution in \eqref{eqn:conv-chisq} still holds if we use a consistent estimator $\hat{\bDelta}_{\mathcal{Z},\hat{r}}^{\dagger}$ of $\bDelta_{\mathcal{Z}}^{\dagger}$ instead.
  Therefore, $n(\mathbf{\Gamma}\hat{\bbeta}_{yx.\mathcal{Z}})^{\top}\hat{\bDelta}_{\mathcal{Z},\hat{r}}^{\dagger}(\mathbf{\Gamma}\hat{\bbeta}_{yx.\mathcal{Z}})\overset{d}{\to}\chi^{2}_{r_{0}}$.
\end{proof}

\subsection{Proof of Lemma~\ref{lem:minimal}}

\begin{Lemma}[Modified Lemma D.1 in \citeauthor{henckel2019graphical} {[\citeyear{henckel2019graphical}]}]
  Consider a causal DAG $\g=(\bV,\bE)$ such that $X,Y\in \bV$ and that $\bZ\subset\bV\setminus\{X,Y\}$ is a valid adjustment set relative to $(X,Y)$ in $\g$. Given a partition $\bZ=\bZ_{1}\cup\bZ_{2}$, if $X\perp_{\mathcal{G}}\bZ_{1}\mid \bZ_{2}$, then $\bZ_{2}$ is a valid adjustment set relative to $(X,Y)$ in $\g$.
  \label{lem:partition}
\end{Lemma}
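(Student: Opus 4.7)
The plan is to verify the two clauses of the adjustment criterion for $\bZ_{2}$ directly. The first clause, that $\bZ_{2} \cap \f{\g} = \emptyset$, is immediate from the hypothesis that $\bZ$ is a valid adjustment set together with the inclusion $\bZ_{2} \subseteq \bZ$. The bulk of the work is verifying the path-blocking clause: every path $p$ between $X$ and $Y$ in $\g$ that is not directed from $X$ to $Y$ must be blocked by $\bZ_{2}$. I would proceed by contradiction: assume such a path $p$ is open given $\bZ_{2}$, and use the hypothesis $X \perp_{\g} \bZ_{1} \mid \bZ_{2}$ to derive a contradiction.

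The first key observation is that moving from conditioning on $\bZ$ to conditioning on the smaller set $\bZ_{2}$ can never re-block a collider: if a collider $C$ on $p$ is open given $\bZ_{2}$, meaning $C$ or one of its descendants lies in $\bZ_{2}$, then that same node also lies in $\bZ = \bZ_{1} \cup \bZ_{2}$, so $C$ is open given $\bZ$ as well. Since $p$ is open given $\bZ_{2}$ by assumption but blocked by $\bZ$ (because $\bZ$ is a valid adjustment set and $p$ is non-directed from $X$ to $Y$), the blocking must come from a non-collider. More precisely, every non-collider on $p$ fails to lie in $\bZ_{2}$ (else $p$ would be blocked by $\bZ_{2}$), so the non-collider responsible for blocking under $\bZ$ must be a node $W \in \bZ_{1} \setminus \bZ_{2}$.

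The second step is to exhibit a path from $X$ to a node in $\bZ_{1}$ that is open given $\bZ_{2}$, namely the subpath $p(X,W)$. By construction, every internal vertex of $p(X,W)$ is also an internal vertex of $p$ with the same adjacent edges, so its collider/non-collider status is unchanged. The non-colliders on $p(X,W)$ are non-colliders on $p$ and hence lie outside $\bZ_{2}$; the colliders on $p(X,W)$ are colliders on $p$ and, by the argument above, have themselves or a descendant in $\bZ_{2}$. Hence $p(X,W)$ is d-connecting from $X$ to $W \in \bZ_{1}$ given $\bZ_{2}$, contradicting $X \perp_{\g} \bZ_{1} \mid \bZ_{2}$.

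The only subtlety I anticipate is a careful handling of endpoints when defining collider/non-collider status on the subpath $p(X,W)$, in particular checking that $W$ (an endpoint of $p(X,W)$ but an internal vertex of $p$) need not be assigned a status for the d-connecting argument, and that the neighbours of internal vertices of $p(X,W)$ within $p(X,W)$ agree with their neighbours within $p$. Once these bookkeeping points are settled, the contradiction is immediate and the lemma follows.
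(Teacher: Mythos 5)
Your proof is correct. Note that the paper itself does not prove this lemma: it is stated as a modification of Lemma D.1 of \citet{henckel2019graphical} and imported by citation, so there is no in-paper argument to compare against. Your direct verification of the two clauses of the adjustment criterion is the natural route and is sound: the forbidden-node clause is indeed inherited from $\bZ_2 \subseteq \bZ$; shrinking the conditioning set from $\bZ$ to $\bZ_2$ can only open colliders, never close them, so an offending non-causal path $p$ open given $\bZ_2$ must be blocked under $\bZ$ by a non-collider $W \in \bZ_1 \setminus \bZ_2$; and the subpath $p(X,W)$ then d-connects $X$ to $\bZ_1$ given $\bZ_2$, since its internal vertices keep their collider status and the disjointness of the partition guarantees $W \notin \bZ_2$. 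The one case worth making explicit is a path with no non-colliders (e.g.\ $X \to C \gets Y$): there your collider argument already shows $p$ would be open given $\bZ$, contradicting the validity of $\bZ$ directly, so the conclusion still holds; your phrasing slides over this but the logic survives.
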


\begin{Theorem}[\citealp{spirtes1995directed}]
  Consider DAG $\mathcal{G}$ containing $X$, $Y$ and $\bZ$, where $X\neq Y$ and $\bZ$ does not contain $X$ or $Y$, $X$ is $d$-separated from $Y$ given $\bZ$ if and only if the partial correlation coefficient $\rho_{xy.\bz}=0$ for all linear structural equation models compatible with $\mathcal{G}$.
  \label{thm:partial}
\end{Theorem}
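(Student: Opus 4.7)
My plan is to prove the two directions of the iff separately, exploiting the fact that in an LSEM the covariance matrix of $\bV$ depends only on the edge coefficients $\mathbf{A}$ and error variances $\bOmega$ via the closed form $\bSigma = (\mathbf{I}-\mathbf{A})^{-\top}\bOmega(\mathbf{I}-\mathbf{A})^{-1}$, and that the partial correlation $\rho_{xy.\bz}$ is a function of $\bSigma$ alone. For the forward direction, suppose $X \perp_{\g} Y \mid \bZ$. Since $\rho_{xy.\bz}$ depends only on the covariance matrix, I may replace the errors of the LSEM by independent Gaussians with the same variances without changing $\rho_{xy.\bz}$. For the resulting Gaussian LSEM, the joint distribution of $\bV$ factorises according to $\g$ and hence is Markov with respect to $\g$, so $X \perp_{\g} Y \mid \bZ$ implies $X \perp Y \mid \bZ$. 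In the Gaussian case, conditional independence is equivalent to zero partial correlation, yielding $\rho_{xy.\bz}=0$.

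For the reverse direction I would argue the contrapositive: if $X$ and $Y$ are d-connected given $\bZ$ in $\g$, I will exhibit a single LSEM compatible with $\g$ for which $\rho_{xy.\bz} \neq 0$, which negates the ``for all LSEMs'' quantifier. I plan to mirror the pruning construction used in the proof of Theorem~\ref{thm:existence}. Fix a path $p$ from $X$ to $Y$ that is open given $\bZ$; for each collider on $p$, select a shortest directed path to a descendant in $\bZ$; resolve intersections between these witness paths by discarding the longer of any two that share a node, and restart on a shortened $p$ whenever a witness passes through $X$. Then construct an LSEM by setting to zero every edge coefficient not appearing on $p$ or one of the chosen witness paths. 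By Wright's path-tracing rules, each entry of $\bSigma$ among $X$, $Y$ and $\bZ$ is a polynomial in the surviving edge coefficients and error variances; substituting these into the Schur-complement expression $\bSigma_{xy}-\bSigma_{x\bz}\bSigma_{\bz\bz}^{-1}\bSigma_{\bz y}$ that forms the numerator of $\rho_{xy.\bz}$ yields a rational function whose numerator is a polynomial. Since the pruning leaves $p$ as essentially the only path from $X$ to $Y$ that is open given $\bZ$, its contribution is a monomial in the surviving edge coefficients that cannot be cancelled, so the polynomial is non-trivial and therefore non-zero for Lebesgue-almost every parameter choice, producing the desired LSEM.

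The main obstacle will be the non-cancellation step in the reverse direction. The concern is that after adding the collider-witness paths, the pruned graph could contain additional non-causal paths from $X$ to $Y$ whose path-tracing contributions algebraically cancel the contribution of $p$. I would control this exactly as in the proof of Theorem~\ref{thm:existence}: by choosing each witness to be a shortest directed path from the corresponding collider into $\bZ$, by dropping the longer witness whenever two witnesses intersect, and by restarting the construction on the concatenation $-q_k(X,C_k) \oplus p(C_k,Y)$ whenever a witness contains $X$. With this careful bookkeeping, no additional path from $X$ to $Y$ open given $\bZ$ survives in the pruned graph, the monomial associated with $p$ appears with a nonzero coefficient in the numerator polynomial, and the standard fact that the zero set of a non-trivial polynomial has Lebesgue measure zero closes the argument.
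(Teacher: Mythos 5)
First, a point of reference: the paper does not prove this statement at all --- it is imported verbatim from \citet{spirtes1995directed} and used as a black box (via Corollary~\ref{cor:coefficient}) inside the proof of Lemma~\ref{lem:minimal} --- so there is no in-paper argument to compare yours against. Your forward direction is correct and standard: $\rho_{xy.\bz}$ is a function of the covariance matrix $(\mathbf{I}-\mathbf{A})^{-\top}\bOmega(\mathbf{I}-\mathbf{A})^{-1}$ alone, so Gaussianising the errors preserves it; the Gaussian model is globally Markov to $\g$; and for Gaussians conditional independence is equivalent to vanishing partial correlation.

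The reverse direction, however, has a genuine gap at exactly the point you flag. The pruning construction is fine, and reducing the problem to the non-triviality of a polynomial is the right move, but your justification for non-triviality --- that the open path $p$ ``contributes a monomial that cannot be cancelled'' --- is not an argument, because a path containing colliders contributes no monomial to $\bSigma_{xy}$ whatsoever. By the trek rule, $\bSigma_{xy}$ is a sum only over collider-free paths; an open path with colliders influences $\rho_{xy.\bz}$ only indirectly, through the conditioning, i.e., through the determinant $\det\bSigma_{\{x\}\cup\bz,\{y\}\cup\bz}$ in the numerator of the partial covariance. Establishing that this determinant is not identically zero in the pruned graph is the entire content of the hard direction, and ``no other open path survives'' does not by itself deliver it: you must either (i) invoke a determinantal trek-system expansion (Gessel--Viennot-type arguments, as in the trek-separation literature) and exhibit a trek system, assembled from the segments of $p$ and the collider-witness paths, whose weight appears with a non-cancelling sign, or (ii) run an explicit induction on the number of colliders of $p$ in the pruned graph, computing the conditional covariance recursively as each witness endpoint is conditioned on. Either route requires a concrete computation that your sketch does not contain. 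Note also that appealing to the proof of Theorem~\ref{thm:existence} cannot close this hole: that proof makes the analogous assertion ($\beta_{yx.\bz_i}\neq 0$ because $p$ is the only surviving path) without deriving it, precisely because the paper treats the present statement as a known result rather than something to be re-proved.
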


\begin{Corollary}
    Consider nodes $X$ and $Y$, and a set $\bZ$ in a DAG $\g$.
  Then $X$ is $d$-separated from $Y$ given $\bZ$ if and only if $\beta_{yx.\bz}=0$ for some linear structural equation model compatible with and faithful to $\g$.
  \label{cor:coefficient}
\end{Corollary}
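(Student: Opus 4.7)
The plan is to deduce the corollary from Theorem~\ref{thm:partial} by translating between partial correlations and partial regression coefficients. The standard identity
\[ \beta_{yx.\bz} \;=\; \rho_{xy.\bz}\,\frac{\mathrm{sd}(Y\mid\bZ)}{\mathrm{sd}(X\mid\bZ)}, \]
valid whenever $\mathrm{Var}(X\mid\bZ) > 0$, shows that $\beta_{yx.\bz} = 0$ if and only if $\rho_{xy.\bz} = 0$ in any linear SEM satisfying this non-degeneracy (which is generic, since it fails only on a measure-zero subset of parameter space).

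For the forward direction, suppose $X\perp_{\g} Y \mid \bZ$. Theorem~\ref{thm:partial} implies $\rho_{xy.\bz}=0$ in every compatible linear SEM, so in particular in any faithful compatible one; such faithful SEMs exist because generic parameter choices yield faithful models. Combining this with the identity above gives a faithful compatible SEM in which $\beta_{yx.\bz}=0$, as required.

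For the backward direction, I would argue contrapositively: assume $X$ is $d$-connected to $Y$ given $\bZ$ in $\g$, and show that in any faithful compatible linear SEM one has $\beta_{yx.\bz}\neq 0$. By the contrapositive of Theorem~\ref{thm:partial} there is at least one compatible linear SEM with $\rho_{xy.\bz}\neq 0$. Since $\rho_{xy.\bz}$ is a rational function of the edge coefficients and error variances, it is therefore a non-trivial function; its zero set is contained in the zero set of a non-trivial polynomial and has Lebesgue measure zero---exactly the type of argument used in the proof of Theorem~\ref{thm:existence}. The main obstacle is that the paper's definition of faithfulness is phrased in terms of conditional independence, whereas what we have access to via the identity is only a statement about vanishing partial correlations; for Gaussian errors these coincide, and for the general case the standard reading of faithfulness in the linear SEM literature excludes precisely this measure-zero exceptional set on which partial correlations vanish despite $d$-connection, which is exactly what is needed to close the argument.
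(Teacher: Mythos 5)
Your proof is correct and is essentially the argument the paper intends: the corollary is stated immediately after Theorem~\ref{thm:partial} with no explicit proof of its own, so the reduction via the identity $\beta_{yx.\bz}=\rho_{xy.\bz}\,\sigma_{y.\bz}/\sigma_{x.\bz}$ together with an appeal to faithfulness for the backward direction is exactly what is being left implicit. The subtlety you flag at the end is genuine and worth stating: the paper's definition of faithfulness is phrased via conditional independence, which for non-Gaussian errors is strictly stronger than vanishing partial correlation, so the backward direction does require reading ``faithful'' as linear (partial-correlation) faithfulness --- a reading consistent with how the corollary is invoked in the proof of Lemma~\ref{lem:minimal} and with the measure-zero genericity remark made there.
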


\begin{Lemma}
  Consider a causal DAG $\g=(\bV,\bE)$ and let $\bV$ follow a linear structural equation model compatible with $\g$.
  Let $\boldsymbol{\epsilon}=\{\epsilon_{v_{1}},\epsilon_{v_{2}},\dots,\epsilon_{v_{p}}\}$ be the set of independent errors from the linear structural equation model, where $p$ is the number of nodes in $\g$.
  Given two nodes $X,Y\in\bV$ such that $Y\in\mathrm{de}(X,\g)$ and any valid adjustment set $\bZ$ relative to $(X,Y)$ in $\g$, the population regression residual $\delta_{y\bz'}$ is a linear combination of the error terms $\boldsymbol{\epsilon}$, in which the coefficient of $\epsilon_{y}$ is $1$.
  \label{lem:epsy}
\end{Lemma}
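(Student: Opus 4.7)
The plan is to exploit the reduced-form representation of the linear structural equation model. Since $\g$ is acyclic, iterated substitution of the structural equations $V_i \gets \sum_{V_j\in\pa(V_i,\g)}\alpha_{ij}V_j + \epsilon_i$ expresses every node as $V_i = \sum_j \gamma_{ij}\epsilon_j$, where $\gamma_{ij}$ is the sum over directed paths from $V_j$ to $V_i$ of the corresponding products of edge coefficients (Wright's path-tracing rules). In particular $\gamma_{ii}=1$, and $\gamma_{ij}=0$ whenever $V_j\notin\an(V_i,\g)$. Hence any linear combination of nodes in $\bV$ is also a linear combination of the errors $\bepsilon$.

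Applied to the identity
\[
\delta_{y\bz'} \;=\; Y - \beta_{yx.\bz}\,X - \sum_{Z\in\bZ}\beta_{yz.x,\bz\setminus Z}\,Z,
\]
this immediately yields that $\delta_{y\bz'}$ is a linear combination of $\bepsilon$. It remains to track the coefficient of $\epsilon_y$ summand by summand. In $Y$ it equals $1$. In $X$ it equals $0$: since $Y\in\de(X,\g)$ and $\g$ is acyclic, $Y\notin\an(X,\g)$, so $\gamma_{xy}=0$. For each $Z\in\bZ$ it also equals $0$: because $Y$ lies on a directed path from $X$ to $Y$ (which exists since $Y\in\de(X,\g)$), we have $Y\in\CN(X,Y,\g)\subseteq\f{\g}$, hence $\de(Y,\g)\subseteq\f{\g}$; the adjustment criterion gives $\bZ\cap\f{\g}=\emptyset$, so $Y\notin\an(Z,\g)$ and $\gamma_{zy}=0$. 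Summing, the coefficient of $\epsilon_y$ in $\delta_{y\bz'}$ is $1 - \beta_{yx.\bz}\cdot 0 - \sum_{Z}\beta\cdot 0 = 1$.

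No substantive obstacle is anticipated; the argument is essentially bookkeeping once the reduced-form representation is in hand. The only points requiring care are the two structural facts that drive it: that $Y\in\de(X,\g)$ together with acyclicity rules out $\epsilon_y$ appearing in $X$, and that the adjustment criterion forces every descendant of $Y$ (including $Y$ itself) to be absent from $\bZ$, so $\epsilon_y$ cannot appear in any component of $\bZ$ either.
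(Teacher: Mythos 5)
Your proof is correct and follows essentially the same route as the paper: the paper cites Lemma B.4 of Henckel et al.\ (2019) for the fact that the coefficient of $\epsilon_y$ in $\delta_{y\bz'}$ equals $\tau_{yy}-\sum_{N\in\mathrm{de}(Y,\g)\cap\bZ'}\beta_{yn.\bz'_{-n}}\tau_{ny}$ and then observes that $\mathrm{de}(Y,\g)\cap\bZ'=\emptyset$ because $X\notin\de(Y,\g)$ and descendants of $Y$ are forbidden, which is exactly your argument with the path-tracing derivation of that coefficient written out rather than cited.
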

\begin{proof}
  We refer to the proof of Lemma B.4 in \citet{henckel2019graphical}.
  The residual $\delta_{y\bz'}$ can be written as a linear combination of errors.
  In particular, the coefficient of $\epsilon_{Y}$ is
  \[
    \tau_{yy}-\sum_{N\in\mathrm{de}(Y,\mathcal{G})\cap\bZ'}\beta_{yn.\bz'_{-n}}\tau_{ny}.
  \]
  Since $\bZ$ is a valid adjustment set relative to $(X,Y)$ in $\g$, it cannot contain descendants of $Y$, which are forbidden nodes.
  Then the set $\mathrm{de}(Y,\mathcal{G})\cap\bZ'$ is empty, because $X\notin\mathrm{de}(Y,\mathcal{G})$.
  The result is immediate using the convention that $\tau_{yy}=1$.
\end{proof}

We are now ready to present the proof of Lemma~\ref{lem:minimal}.

\begin{proof}[Proof of Lemma~\ref{lem:minimal}]
Consider a linear structural equation model that is faithful to $\g$.
  We will first only consider the minimal valid adjustment sets $\bZ_{1},\dots,\bZ_{k}$ in the collection $\mathcal{Z}$.
  The first step of the proof is to show that the regression residuals $(\delta_{x\bz_{1}},\dots,\delta_{x\bz_{k}})$ cannot be linearly dependent.
  Suppose on the contrary that there is a linear combination $\ell=\sum_{i}\alpha_{i}\delta_{x\bz_{i}}$ such that $\ell=0$ for some $\alpha_{1},\dots,\alpha_{k}$ not all equal to $0$.
  Without loss of generality, suppose that $\alpha_{1}\neq 0$.
  Consider the first minimal valid adjustment set $\bZ_{1}$. 
  It contains at least one unique node $N\notin\cup_{2\leq j\leq k}\bZ_{j}$.
  We can thus write $\delta_{x\bz_{1}}=X-\bbeta_{x\bz_{1}}^{\top}\bZ_{1}$, where $\bbeta_{x\bz_{1}}$ is the population OLS regression coefficient of $X$ on $\bZ_{1}$.
  Since $\bZ_{1}$ is a minimal adjustment set, node $N$ is $d$-connected with $X$ in $\g$ given $\bZ_{1}\setminus\{N\}$ by Lemma~\ref{lem:partition}.
  It follows from Corollary~\ref{cor:coefficient} that the regression coefficient $\beta_{xn.\bz_{1,-n}}$ of $N$ in $\bbeta_{x\bz_{1}}$ cannot be zero.
  In this case, expanding $\delta_{x\bz_{i}}$ into $X-\bbeta_{x\bz_{i}}^{\top}\bZ_{i}$ and rearranging the terms, the equation $\ell=0$ can be expressed equivalently as
  \begin{equation}
    N=\frac{1}{\alpha_{1}\beta_{xn.\bz_{1,-n}}}\left[\alpha_{1}\left(X-\sum_{V\in \bZ_{1}\setminus\{N\}}\beta_{xv.\bz_{1,-v}}V\right) + \sum_{i\neq 1}\alpha_{i}(X-\bbeta_{x\bz_{i}}^{\top}\bZ_{i})\right]=\sum_{V\neq N}\gamma_{v}V,\label{eqn:poly}
  \end{equation}
  where $\gamma_{v}=-(\alpha_{1}\beta_{xn.\bz_{1,-n}})^{-1}\left(\sum_{i}I(V\in\bZ_{i})\beta_{xv.\bz_{i,-v}}\right)$ for $V\neq X$ and $\gamma_{x}=(\alpha_{1}\beta_{xn.\bz_{1,-n}})^{-1}\sum_{i}\alpha_{i}$.
  Equation (\ref{eqn:poly}) cannot hold due to the fact that the covariance matrix of $\bV$ is non-singular.
  Therefore, we conclude that $\ell\neq 0$ when $\alpha_{1}\neq 0$.
  On the contrary, when $\alpha_{1}=0$, the argument above can be repeated for minimal adjustment sets $\bZ_{2}$ with $\alpha_{2}\neq 0$, so on and so forth until $\alpha_{k}\neq 0$.
  Since the linear combination $\ell$ cannot evaluate to zero whenever $\alpha_{i}\neq 0$ for any $i\in\{1,\dots,k\}$, the inequality $\ell\neq 0$ holds generally for all $\alpha_{i}$'s not all equal to zero.

  The second step is to show that the regression residual products $(\delta_{x\bz_{1}}\delta_{y\bz_{1}'},\dots,\delta_{x\bz_{k}}\delta_{y\bz_{k}'})$ cannot be linearly dependent either.
  Lemma~\ref{lem:epsy} states that each $\delta_{y\bz'_{i}}$ contains the error term $\epsilon_{y}$.
  For any valid adjustment set $\bZ_{i}$, $\delta_{y\bz_{i}'}\ci \delta_{x\bz_{i}}$ (see proof of Proposition 3.1 in Supplement from \citeauthor{henckel2019graphical} [\citeyear{henckel2019graphical}]).
  Therefore, $\delta_{x\bz_{i}}$, when written in the form of error terms only, cannot contain $\epsilon_{y}$.
  Consider now another linear combination $\ell^{*}=\sum_{i}\xi_{i}\delta_{y\bz_{i}'}\delta_{x\bz_{i}}$.
  Suppose that $\ell^{*}=0$ for some $\xi_{i}$'s not all equal to $0$.
  We can expand $\delta_{y\bz_{i}'}$ into $\epsilon_{y}$ plus some linear combination of the other errors.
  Singling out the terms involving $\epsilon_{y}$ in $\ell^{*}$, we have that
  \begin{equation}
    \epsilon_{Y}\sum_{i}\xi_{i}\delta_{x\bz_{i}}=0,
  \end{equation}
  since $\ell^{*}=0$ and $\epsilon_{y}$ is independent from the other errors.
  Due to the non-degeneracy of $\epsilon_{y}$, the linear combination $\sum_{i}\epsilon_{i}\delta_{x\bz_{i}}$ must evaluate to $0$ for some $\xi_{i}$'s not all equal to $0$.
  However, this is impossible by independence between $\delta_{x\bz_{i}}$'s shown in the first step, and we have reached a contradiction.
  
  Following the proof of Lemma~\ref{lem:normal}, the asymptotic covariance matrix $\mathbf{\Psi}$ is precisely the covariance matrix of $(\delta_{x\bz_{1}}\delta_{y\bz_{1}'},\dots,\delta_{x\bz_{k}}\delta_{y\bz_{k}'})^{\top}$, which is non-singular due to linear independence among $\delta_{x\bz_{i}}\delta_{y\bz_{i}'}$'s.
  Hence, the corresponding asymptotic covariance matrix $\bSigma_{\mathcal{Z}\setminus\mathrm{nonforb}(X,Y,\g)}$ also has full rank.
  
  Now we consider the set of non-forbidden nodes. Let $\mathbf{N}=\mathrm{nonforb}(X,Y,\g)$. The $d$-connection condition of a unique node $N\in\mathbf{N}$ and faithfulness ensures a non-zero coefficient in front of $N$ in $\delta_{x\mathbf{n}}$.
  Since $\mathrm{nonforb}(X,Y,\g)$ is a valid adjustment set relative to $(X,Y)$ in $\g$, we can repeat the argument above and conclude that the enlarged asymptotic covariance matrix $\bSigma_{\mathcal{Z}}$ is also non-singular.
  
  When the edge coefficients and the error variances in the linear structural equation model are sampled from an absolutely continuous distribution $P$ with respect to the Lebesgue measure, the model is faithful with probability $1$ \citep{spirtes2000causation}.
  Therefore, since we showed that for all faithful models $\bSigma_{\mathcal{Z}}$ is invertible our claim follows.
\end{proof}

\subsection{Lemma~\ref{lemma:nonforb} and its proof}

\begin{Lemma}
Consider nodes $X$ and $Y$ in a DAG $\g$ such that $Y \in \de(X,\g)$. Then $\mathrm{nonforb}(X,Y,\g)$ is a valid adjustment set relative to $(X,Y)$ in $\g$.
\label{lemma:nonforb}
\end{Lemma}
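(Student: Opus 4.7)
The plan is to verify the two conditions of the adjustment criterion stated earlier for the set $\mathbf{N}:=\mathrm{nonforb}(X,Y,\g)$. Condition (1), $\mathbf{N}\cap\f{\g}=\emptyset$, holds by the very definition of $\mathbf{N}$, so the work lies in condition (2): every non-causal path from $X$ to $Y$ in $\g$ is blocked by $\mathbf{N}$.

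I would argue by contradiction. Suppose $p=(V_0=X,V_1,\dots,V_k=Y)$ is a non-causal path that is not blocked by $\mathbf{N}$. The key structural fact I would use is that whenever $V\in\f{\g}$ with $V\neq X$, we have $\de(V,\g)\subseteq\f{\g}$, since such a $V$ is by definition a descendant of some causal node $C$ and descent is transitive. Combined with the assumption that $p$ is not blocked by $\mathbf{N}$, this fact forces two properties on the interior of $p$: (a) every interior non-collider of $p$ lies in $\f{\g}$, for otherwise it would lie in $\mathbf{N}$ and block $p$; and (b) every interior collider of $p$ lies in $\mathbf{N}$, for otherwise the collider and all of its descendants are forbidden, so the collider cannot be activated by conditioning on $\mathbf{N}$.

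Given (a) and (b), I would prove by induction on $i$ that every edge $(V_i,V_{i+1})$ on $p$ points forward, yielding a directed path from $X$ to $Y$ and contradicting the non-causality of $p$. For the base step $i=0$: if $X\leftarrow V_1$, then $V_1$ is a non-collider on $p$ (its only path-edge exits it), so (a) forces $V_1\in\f{\g}\setminus\{X\}$, i.e., $V_1\in\de(C,\g)\subseteq\de(X,\g)$ for some causal node $C$; but $V_1\in\pa(X,\g)$ then produces a directed cycle in $\g$. For the inductive step with $i\geq 2$: if $V_0\to V_1\to\dots\to V_i\leftarrow V_{i+1}$, then the chain structure makes $V_{i-1}$ a non-collider on $p$, so (a) places $V_{i-1}\in\de(C,\g)$ for some causal $C$, and then $V_i\in\de(V_{i-1},\g)\subseteq\f{\g}$ contradicts (b) applied to the collider $V_i$.

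The main subtlety is the case $i=1$, where there is no forbidden predecessor on the directed prefix to inherit from. Here I would split on whether $V_2$ is an endpoint or an interior node. If $V_2=Y$, then $Y\to V_1$ together with the hypothesis $Y\in\de(X,\g)$ (which makes $Y$ a causal node) gives $V_1\in\de(Y,\g)\subseteq\f{\g}$, contradicting (b). If $V_2$ is an interior node, then the edge $V_2\to V_1$ makes $V_2$ a non-collider on $p$ regardless of the orientation of $(V_2,V_3)$, so (a) forces $V_2\in\de(C,\g)\subseteq\f{\g}$ for some causal $C$, whence $V_1\in\de(V_2,\g)\subseteq\f{\g}$ once more contradicts (b). This closes the induction and hence the proof.
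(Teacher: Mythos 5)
Your proof is correct, and it reaches the conclusion by a genuinely different route than the paper. Both proofs verify the two conditions of the adjustment criterion and both rest on the same structural facts: an open path can have no interior non-collider outside $\f{\g}$, no interior collider whose descendant set meets $\mathbf{N}$, and $\de(V,\g)\subseteq\f{\g}$ for every $V\in\f{\g}\setminus\{X\}$. The paper, however, organizes the argument as a case split: it first disposes of the path $X\to C\leftarrow Y$, then shows that on any other open path all non-colliders, and hence all colliders (being descendants of non-colliders), would have to be forbidden, so the path must be collider-free of the form $X\leftarrow\dots\leftarrow F\to\dots\to Y$, at which point acyclicity forces $F=X$ and the path is directed. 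You instead run an induction along the path proving every edge points forward, obtaining a cycle at the first edge and a forbidden collider at every subsequent backward edge, with the $i=1$ case treated separately precisely because a child of $X$ need not be forbidden. Your version is longer but more mechanical, and it makes explicit the one delicate point that the paper's sentence ``any collider on $p$ must be a descendant of a non-collider on $p$ and is therefore also in $\f{\g}$'' passes over: the non-collider neighbour through which forbiddenness is inherited must be chosen distinct from $X$, since descendants of $X$ itself are not automatically forbidden. The only formal wrinkle in your write-up is the base case when the path is the single edge $X\leftarrow Y$: there $V_1=Y$ is an endpoint, so property (a), which you state only for interior non-colliders, does not literally apply; but $Y\to X$ together with the hypothesis $Y\in\de(X,\g)$ already yields a directed cycle, so the conclusion is immediate and the gap is cosmetic.
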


\begin{proof}
Obviously, $\mathrm{nonforb}(X,Y,\g)$ does not contain any forbidden nodes so it only remains to show that it blocks all paths from $X$ to $Y$ that are not directed. Note first the only possible path from $X$ to $Y$ that does not contain a non-collider is $X \rightarrow C \leftarrow Y$. By assumption $\de(Y,\g)\subseteq \f{\g}$ and therefore this path is blocked by $\mathrm{nonforb}(X,Y,\g)$. Let $p$ be any other path from $X$ to $Y$ that is not directed. It must therefore contain at least one non-collider. If any non-collider on $p$ is in $\mathrm{nonforb}(X,Y,\g)$, $p$ is blocked so suppose this is not the case, i.e., all non-collider on $p$ are in $\f{\g}$. Any collider on $p$ must be a descendant of a non-collider on $p$ and is therefore also in $\f{\g}$. In this case $p$ is again blocked given $\mathrm{nonforb}(X,Y,\g)$ and therefore we can assume that $p$ does not contain any colliders and is therefore of the form $X \leftarrow \dots \leftarrow F \rightarrow \dots \rightarrow Y$. But any node in $\f{\g}$ that is not $X$ is a descendant of $X$ and therefore $F=X$ or we would have a violation of the acyclicity assumption. But then $p$ is a directed path which contradicts out starting assumption for $p$.
\end{proof}

\section{Simulation setup}

\label{sec: app_sim}

\subsection{Simulation in Example~\ref{exp:ppplot}}
\label{sec:ppplot}
The definition of the probability-probability plot that we employ in Example 6 is described as follows. Given a sample of p-values $p_{1},p_{2},\dots,p_{R}$, we sort them in the increasing order: $p_{(1)},\dots,p_{(R)}$. Then we apply the empirical distribution function to get the empirical probabilities $\hat{P}_{(j)}$ for $j=1,\dots,R$, i.e., $\hat{P}_{(j)}=\sum_{i=1}^{R}I(p_{(i)}\leq p_{(j)})/R$. These are simply $j/R$ assuming no ties. Since we wish to compare the sample to the standard uniform distribution, whose cumulative distribution function is $F(t)=t$ for $t\in[0,1]$, we compute the theoretical probabilities $P_{(j)}=F(p_{(j)})=p_{(j)}$. The plot is finally obtained by plotting $\hat{P}_{(j)}$ against $P_{(j)}$.

\subsection{Simulation in Section~\ref{sec:sim}}
\label{app:setup}
\paragraph{True graph}
We generate causal DAGs as Erdős–Rényi random graphs. There are in total \(50\) DAGs with \(10\) nodes and \(50\) DAGs with \(15\) nodes. The expected neighbourhood size for each DAG is drawn uniformly from $\{2,3,4,5\}$, with the function \textsf{randDAG} in \textsf{R} package \textsf{pcalg} \citep{kalisch2012causal}.

\paragraph{Linear structural equation model}
For our compatible linear structural equation we sample edge coefficients uniformly from $[-2,-0.1]\cup[0.1,2]$. 
We then draw an error distribution uniformly from one of four distributions: normal, uniform, $t$, or logistic. Note that we use the same error distribution for all errors in the model. We than sample variances for each error in our model as follows. The variance parameter of the normal errors is sampled uniformly from $0.5$ to $1.5$. The location parameter of the uniform errors symmatric around zero is sampled uniformly from $1.2$ to $2.1$. The $t$-errors are sampled from a $t$-distribution with $5$ degrees of freedom and then scaled by $\sqrt{3/5}$ times the square root of a uniformly sampled number from $0.5$ to $1.5$. The scale parameter of the logistic errors centred around zero is sampled uniformly from $0.4$ to $0.7$. By sampling our parameters this way we ensure that the variances are approximately in the interval from $0.4$ to $1.6$.

\paragraph{The pair $(X,Y)$}
The node $X$ is randomly drawn from the true DAG $\g_0$, where we weight each node in $\g_0$ by the number of its descendants minus 1. 
Once $X$ is fixed, we sample $Y$ uniformly from the set $\mathrm{de}(X,\mathcal{G}_{0})\setminus \{X\}$.
The sampling procedure is repeated until there are at least two valid adjustment sets relative to the selected pair $(X,Y)$ in the completed partially directed acyclic graph (CPDAG) of \(\g_0\).

\paragraph{Causal structure learning}
We use causal structure learning algorithms to generate large numbers of reasonable candidate graphs for our test procedure.
If the error distribution is normal, we apply Greedy Equivalence Search (GES, \citet{Chickering02}) to estimate a completed partially directed acyclic graph (CPDAG). Note that the adjustment criterion also applies to CPDAGs. 
Otherwise, we apply LiNGAM \citep{shimizu2014lingam} and estimate a DAG.
We use the functions \textsf{ges} and \textsf{lingam} from \textsf{R} package \textsf{pcalg} with default options \citep{kalisch2012causal}.

\paragraph{Untestable cases}
If there is only one or no adjustment set in the candidate graph $\g$, the proposed test cannot be performed so we discard these cases.
If $Y\notin\mathrm{de}(X,\g)$ the valid adjustment sets are simply those sets that d-separate $X$ from $Y$. As there is a large literature on conditional independence tests which are more suitable here than our test procedure, we discard this case. 
If the rank of $\bSigma_{\mathcal{Z}}$ is estimated to be $1$, there is no effective over identifying constraint for our test procedure, so we discard these cases as well.

\paragraph{AUC calculation}
Recall that for each candidate graph and sample size for testing \(n\), we perform our test $100$ times. We plot the probability-probability plot between the corresponding $100$ $p$-values and the standard uniform distribution.
We compute the area under the curve (AUC) of this curve with the function \textsf{auc} from \textsf{R} package \textsf{MESS} \citep{ekstrom2020mess}.

\paragraph{Determining whether null hypothesis is true}
For every estimated graph and test strategy, we check using the true linear structural equation model whether the null hypothesis $H_0$ is true or false by computing the population level regression coefficients and checking whether they are all equal.

\paragraph{Version control} The simulation studies were conducted using \textsf{R} version 4.1.1.

\subsection{Extra simulation results}
\label{app:sim}

Figure~\ref{fig:double-sim-extra} and Figure~\ref{fig:double-sim-three} show additional plots of the AUCs from the simulation study. In Figure~\ref{fig:double-sim-extra} the AUCs are grouped by error distribution of the linear structural equation model, graph size of the true graph and expected neighbourhood size of the true graph, respectively. In Figure~\ref{fig:double-sim-extra} they are additionally grouped by the sample size used for testing and the candidate graph accuracy. The plots show that of the three parameters only the error distribution seems to have an impact on the performance of our testing procedure. This is likely due to the fact that in cases with normally distributed errors we can only learn a CPDAG, which contain fewer valid adjustment sets than DAGs.

\begin{figure}
    \centering
    \includegraphics[scale=0.6]{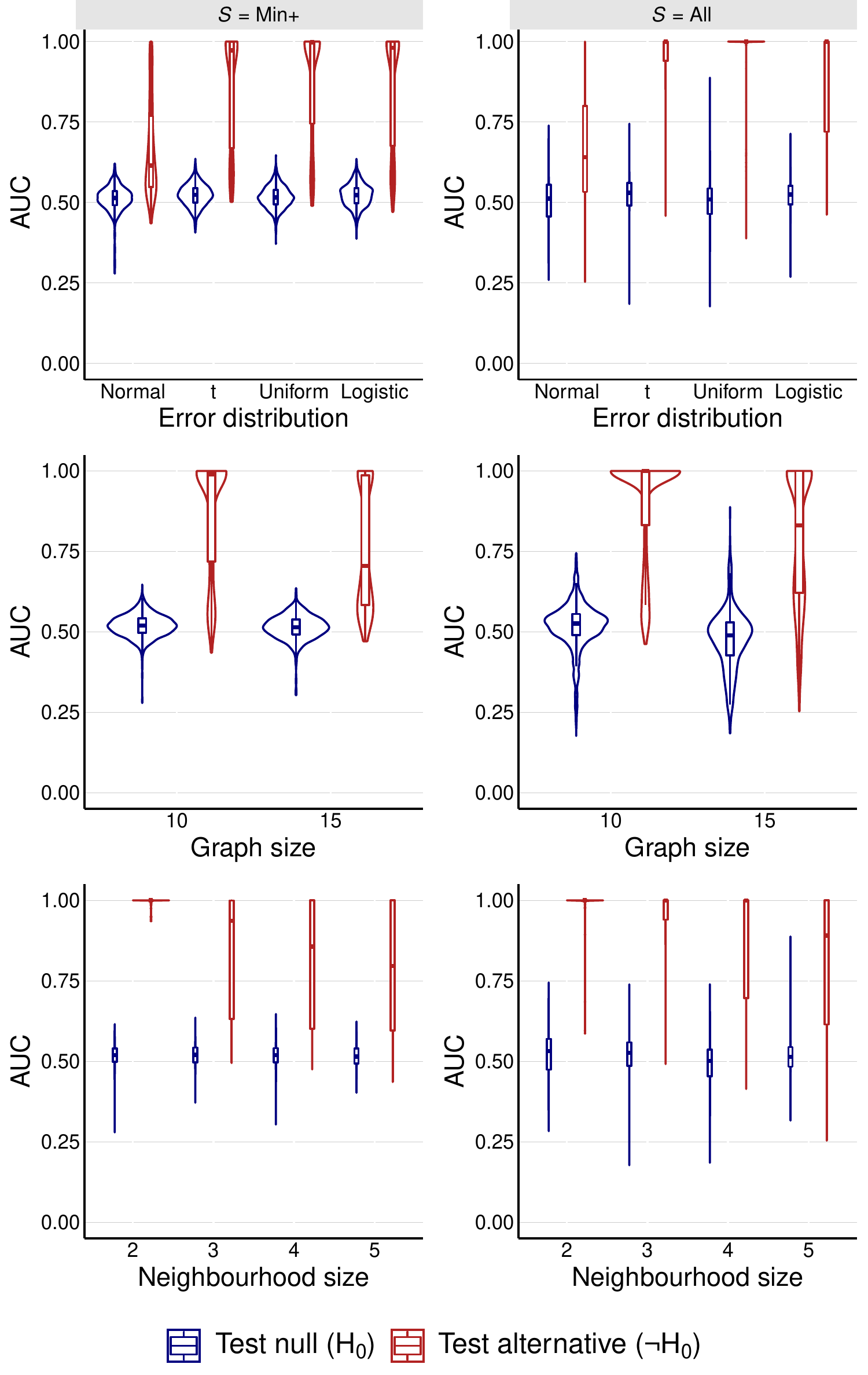}
    \caption{Extra violin plots (layered with boxplots) of AUCs from the simulation study.}
    \label{fig:double-sim-extra}
\end{figure}

\begin{figure}
    \centering
    \includegraphics[scale=0.6]{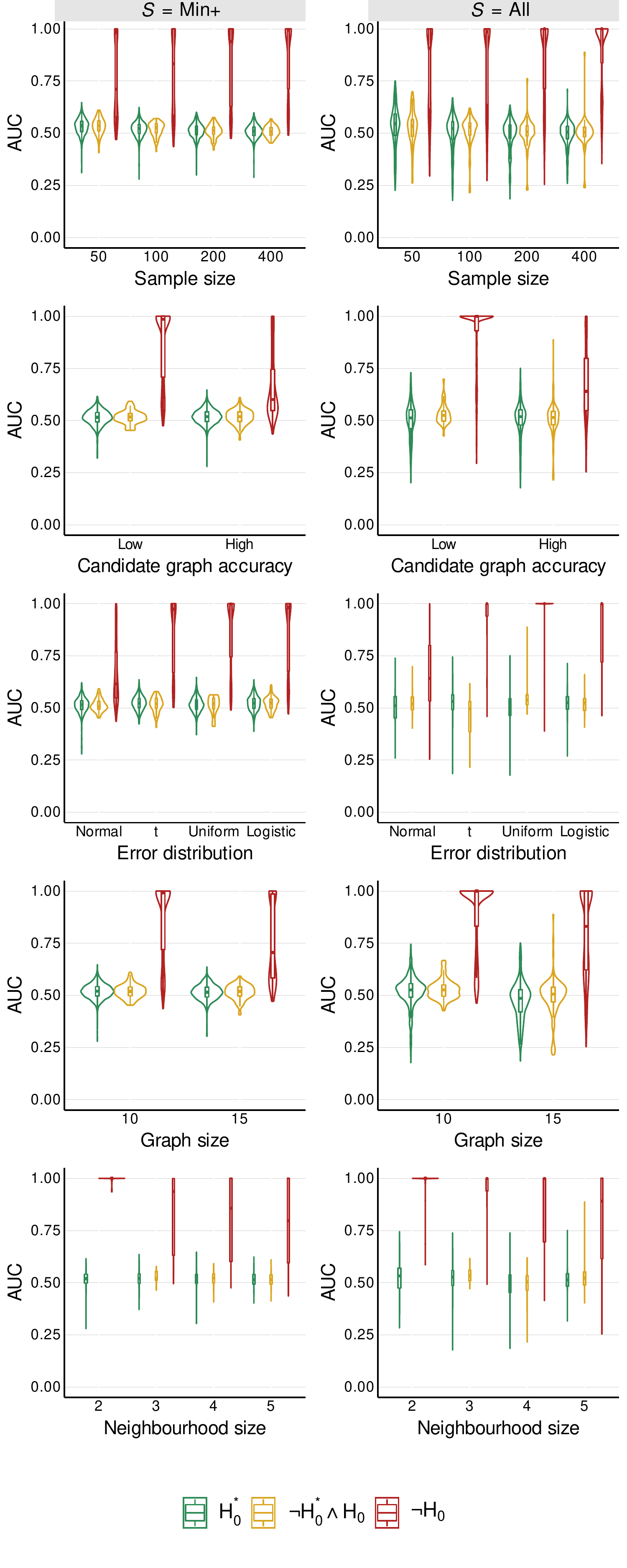}
    \caption{Extra violin plots (layered with boxplots) of AUCs from the simulation study, partitioning \(H_{0}\) into \(H_{0}^*\) and \(\neg H_{0}^* \wedge H_{0}\).   }
    \label{fig:double-sim-three}
\end{figure}

\begin{figure}
    \centering
    \includegraphics[scale=0.6]{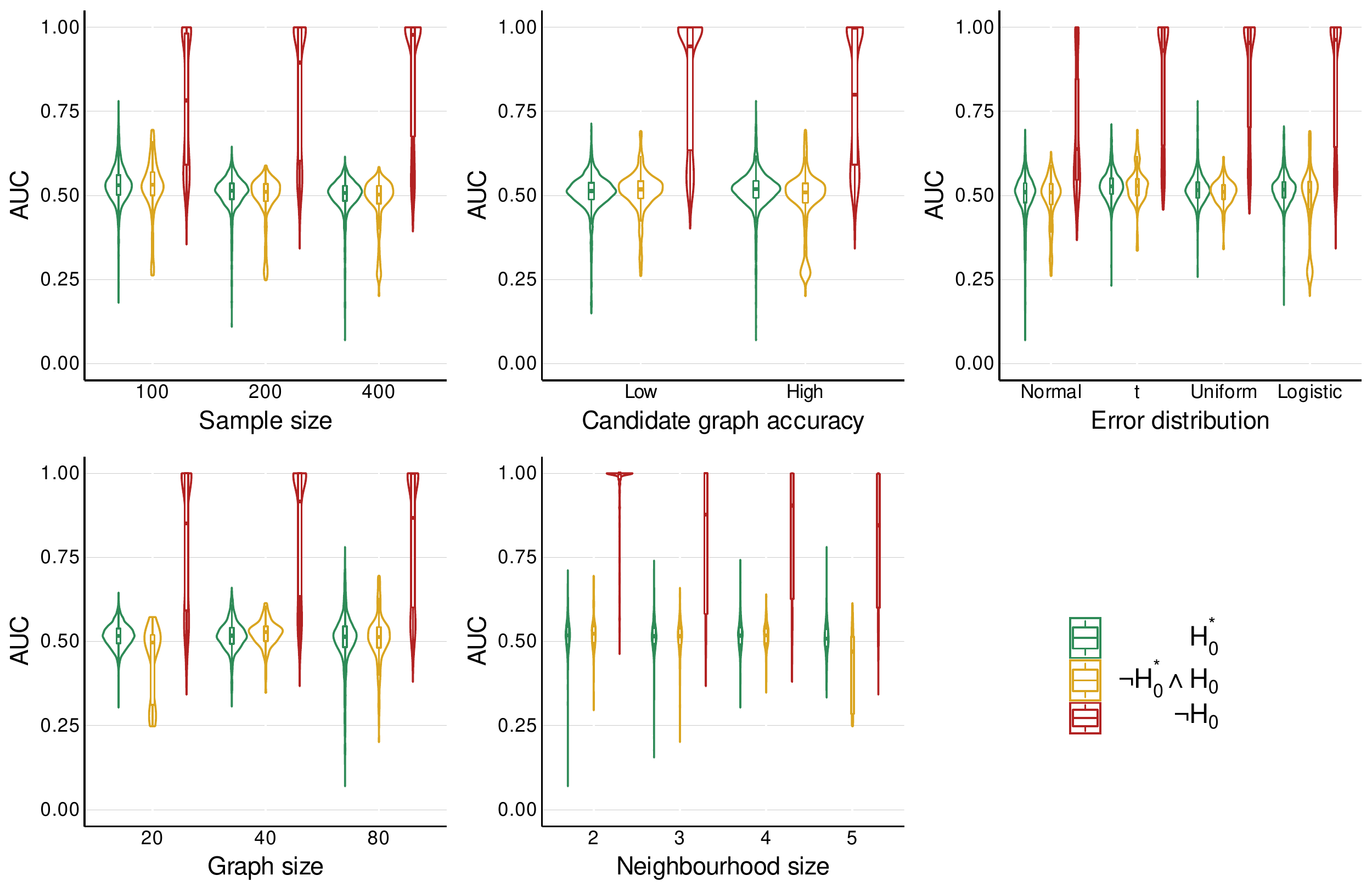}
    \caption{Violin plots (layered with boxplots) of AUCs from the simulation study using only the \(\mathrm{Min}+\) strategy, partitioning \(H_{0}\) into \(H_{0}^*\) and \(\neg H_{0}^* \wedge H_{0}\).   }
    \label{fig:double-sim-minplus-three}
  \end{figure}
  
Table~\ref{tab:double-sim} summarises the proportions of candidate graphs (and strategies) where the null-hypothesis $H_{0}^{*}$ is true, the null hypothesis $H_{0}^{*}$ is false but the actual test null hypothesis $H_{0}$ is true and both are false, respectively.
Unsurprisingly $H_{0}^{*}$ is true more often for the high accuracy candidate graphs.
We can also see that the strategy $S=\mathrm{All}$ always result in a higher proportions of cases where $H_{0}$ is false when compared to $S=\mathrm{Min}+$, which is due to the fact that $S=\mathrm{Min}+$ consider a subset of the adjustment sets $S=\mathrm{All}$ considers.
The problematic cases where $\neg H_{0}^{*}\wedge H_{0}$ generally occur in around $10\%$ of the cases, and interestingly are more common for the larger graphs than for the smaller graphs.

\begin{table}[t!]
\setlength{\aboverulesep}{1pt}
\setlength{\belowrulesep}{1pt}
    \centering
    \small
    \begin{tabular}{p{2.5cm}ccccccc}
    \toprule
     \multirow{2}{*}{Factor} & Strategy & \multicolumn{3}{c}{$S=\mathrm{Min}+$} & \multicolumn{3}{c}{$S=\mathrm{All}$} \\
    \cmidrule(lr){2-2} \cmidrule(lr){3-5} \cmidrule(lr){6-8}
     & Hypothesis & $H_{0}^{*}$ & $\neg H_{0}^{*}\wedge H_{0}$ & $\neg H_{0}$ & $H_{0}^{*}$ & $\neg H_{0}^{*}\wedge H_{0}$ & $\neg H_{0}$ \\
    \midrule
    \multirow{2}{*}{\parbox{2.5cm}{Expected graph\newline accuracy}} & Low & 42.82 & 1.98 & 55.20 & 36.14 & 1.98 & 61.88\\
    \cmidrule(lr){2-8}
    & High & 85.64 & 5.48 & 8.88 & 84.71 & 5.27 & 10.02\\
    \midrule
    \multirow{2}{*}{\parbox{2.5cm}{Graph\newline size}} & 10 & 81.12 & 3.08 & 15.8 & 77.73 & 3.08 & 19.19\\
    \cmidrule(lr){2-8}
    & 15 & 55.24 & 7.46 & 37.3 & 54.31 & 6.99 & 38.69\\
    \midrule
    \multirow{4}{*}{\parbox{2cm}{Neighbourhood\newline size}} & 2 & 92.18 & 0.00 & 7.82 & 88.83 & 0.00 & 11.17\\
    \cmidrule(lr){2-8}
    & 3 & 93.31 & 1.49 & 5.20 & 88.61 & 1.24 & 10.15\\
    \cmidrule(lr){2-8}
    & 4 & 63.14 & 6.34 & 30.53 & 62.56 & 6.34 & 31.10\\
    \cmidrule(lr){2-8}
    & 5 & 54.52 & 7.47 & 38.01 & 52.49 & 7.24 & 40.27\\
    \bottomrule
    \end{tabular}
    \caption{Percentage of true hypotheses in the simulation normalised within each combination of factor and strategy.}
    \label{tab:double-sim}
\end{table}

\begin{table}[ht!]
    \centering
    {\small
    \begin{tabular}{cccccccc}
    \toprule
        Cand. graph & & \multicolumn{2}{c}{\(H_{0}^{*}\)} &
        \multicolumn{2}{c}{\(\neg H_{0}^{*} \wedge H_{0}\)} & \multicolumn{2}{c}{\(\neg H_{0}\)} \\
        \cmidrule(lr){3-4} \cmidrule(lr){5-6} \cmidrule(lr){7-8}
        accuracy & $n$ & $S=\mathrm{Min+}$ & $S=\mathrm{All}$ & $S=\mathrm{Min+}$ & $S=\mathrm{All}$ & $S=\mathrm{Min+}$ & $S=\mathrm{All}$ \\
        \midrule
        \multirow{4}{*}{Low} & 50 & 0.0751 & 0.0909 & 0.0788 & 0.0759 & 0.5570 & 0.7396\\
        & 100 & 0.0636 & 0.0636 & 0.0587 & 0.0385 & 0.6352 & 0.7880\\
        & 200 & 0.0543 & 0.0510 & 0.0488 & 0.0516 & 0.7132 & 0.8341\\
        & 400 & 0.0493 & 0.0466 & 0.0525 & 0.0503 & 0.7887 & 0.8812\\
        \midrule
        \multirow{4}{*}{High} & 50 & 0.0786 & 0.0897 & 0.0711 & 0.0712 & 0.1543 & 0.1697\\
        & 100 & 0.0634 & 0.0587 & 0.0585 & 0.0557 & 0.2026 & 0.2094\\
        & 200 & 0.0559 & 0.0500 & 0.0558 & 0.0476 & 0.2838 & 0.3010\\
        & 400 & 0.0543 & 0.0471 & 0.0492 & 0.0475 & 0.3838 & 0.4152\\
        \bottomrule
    \end{tabular}
    }
    \caption{Proportion of hypotheses rejected at level $0.05$ in the simulation study.}
    \label{tab:three-hypotheses}
\end{table}

\paragraph{Minimal adjustment sets in large sparse graphs}
We ran a small simulation to demonstrate the scalability of the algorithm for minimal adjustment sets proposed by \citet{van2014constructing}.
We simulated Erdős–Rényi graphs with graph size \(100,250,500,1000,2500,5000\) and expected neighbourhood size \(2,3,4,5\).
For each combination above, we generated \(10\) DAGs.
For each DAG, we selected the pair of \((X,Y)\) nodes in the same way as in the main simulation described in Section~\ref{sec:sim}.
We then ran the algorithm to extract minimal adjustment sets relative to \((X,Y)\) and performed the rest of the testing procedure according to Algorithm~\ref{alg:test}.
We allowed up to one hour on each DAG to finish the computation of minimal adjustment sets, and for the graph sizes \(100,250,500,1000,2500,5000\), the percentages of completed algorithm runs were \(100\%, 57.5\%, 92.5\%, 100\%, 95\%, 35\%\), respectively.
The results suggest that the extraction of minimal adjustment sets is possible even for graphs with sizes in the order of \(1000\)s.
We also noted, however, that the space required to store the adjustment sets can also exceed the \(4\) GB RAM allocated. 

\paragraph{\(\mathrm{Min}+\) strategy-only simulation on larger graphs}

We conducted another simulation on graphs of size \(20, 40\) and \(80\) with precisely the same setup as the simulation in Section~\ref{sec:sim} using only the \(\mathrm{Min}+\) strategy. As the \(\mathrm{Min}+\) strategy is computationally much fast than the \(\mathrm{All}\) strategy we, we were able to increase the graph sizes while keeping the other configurations unchanged.
It is worth pointing out that attempting to run the simulation on graphs of \(20\) nodes with the \(\mathrm{All}\) strategy in the same setup almost always exceeded the one-hour timeout.
Figure~\ref{fig:double-sim-minplus-three} contains violin plots of AUCs framed by different parameters used in the simulation and coloured by their respective true hypotheses.
The results are very similar to what we saw in the simulation in Section~\ref{sec:sim}.
The small bulks around AUC \(0.25\) to \(0.3\) for \(\neg H_{0}^{*} \wedge H_{0}\) in Figure~\ref{fig:double-sim-minplus-three} are due to a specific DAG and structural equation model where our procedure was very conservative.
One particular simulated graph of size \(80\) was not included in the plots due to memory overflow during the computation of the minimal adjustment sets, which indicated that for graphs larger than $80$, memory might have to be taken into account.

\end{document}